\documentclass[a4paper,10pt]{article}
\usepackage[utf8x]{inputenc}

\usepackage{amsmath,amssymb,amsfonts,amsthm,dsfont}

\usepackage{bbm}
\usepackage{hyperref}
\usepackage{yfonts}

\title{{On the virtual levels of positively projected massless Coulomb-Dirac operators}}
\author{Sergey Morozov\footnote{Mathematisches Institut, Ludwig-Maximilians-Universit\"at M\"unchen, Theresienstr. 39, 80333 Munich, Germany\newline $\mathtt{\qquad morozov@math.lmu.de,\ dmueller@math.lmu.de}$}\ \footnote{St. Petersburg State University, Universitetskaya nab. 7-9, 199034 St. Petersburg, Russia} and David M\"uller$^{*}$}
\date{}

\newtheorem{thm}{Theorem}
\newtheorem{lem}[thm]{Lemma}

\newtheorem{cor}[thm]{Corollary}

\let\Im\undefined
\DeclareMathOperator{\Im}{Im}
\let\Re\undefined
\DeclareMathOperator{\Re}{Re}

\DeclareMathOperator{\tr}{tr}
\DeclareMathOperator{\dd}{d\!}
\DeclareMathOperator{\rank}{rank}

\DeclareMathOperator{\loc}{loc}

\DeclareMathOperator{\Ltwolim}{\mathsf L^2-lim}
\DeclareMathOperator{\Ran}{ran}
\DeclareMathOperator{\sign}{sign}

\numberwithin{equation}{section}

\begin{document}

\maketitle

\begin{abstract}
Considering different self-adjoint realisations of positively projected massless Coulomb-Dirac operators we find out, under which conditions any negative perturbation, however small, leads to emergence of negative spectrum. We also prove some weighted Lieb-Thirring estimates for negative eigenvalues of such operators. In the process we find explicit spectral representations for all self-adjoint realisations of massless Coulomb-Dirac operators on the half-line.
\end{abstract}


%
%

\section{Introduction}

In a related paper \cite{MorozovMueller} the authors have obtained estimates of Cwikel-Lieb-Rozenblum and Lieb-Thirring types on the negative eigenvalues of the perturbed positively projected two-dimensional massless Coulomb-Dirac operator emerging in the study of graphene. For the critical value of the coupling constant the control on the number of eigenvalues failed, which naturally posed a question about the existence of a virtual level at zero, i.e. the situation when every non-trivial negative perturbation leads to emergence of negative spectrum.

Trying to resolve this question we arrived at the study of Coulomb-Dirac operators on the half-line $\mathbb R_+$ associated to the differential expression
\begin{align}\label{Dirac symbol}
 d^{\nu, \kappa} := \begin{pmatrix}
  -\nu/r & -\frac{\mathrm d}{\mathrm dr}- \frac{\kappa}{r}\\ \frac{\mathrm d}{\mathrm dr}- \frac{\kappa}{r} & -\nu/r
 \end{pmatrix}.
\end{align}
Here $\nu\in\mathbb R$ is the strength of the Coulomb potential (nuclear charge) and $\kappa\in\mathbb R$ is a parameter typically arising after the separation of angular motion in several dimensions (see \cite{Thaller}). Typically $\kappa$ takes integer or half-integer values, but we will not need this assumption.
Throughout the text we use the notation
\begin{equation*}
 \beta:= \sqrt{\kappa^2 -\nu^2}\in \overline{\mathbb R_+}\cup\mathrm i\mathbb R_+.
\end{equation*}
It turns out that for $\beta \geqslant 1/2$ the operator $D^{\nu, \kappa}$ defined by \eqref{Dirac symbol} on $\mathsf C_0^\infty(\mathbb R_+, \mathbb C^2)$ is essentially self-adjoint in $\mathsf L^2(\mathbb R_+, \mathbb C^2)$. Otherwise, there exist a one parameter family $D^{\nu, \kappa}_\theta$, $\theta\in [0, \pi)$, of self-adjoint extensions of $D^{\nu, \kappa}$ differing by boundary conditions at zero. This is the result of Theorem \ref{t: self-adjoint realisations}, which is essentially contained in \cite{Weidmann1971}.
We denote the set of triples $(\nu, \kappa, \theta)$ for which $D^{\nu, \kappa}_\theta$ is defined in Theorem \ref{t: self-adjoint realisations} by $\mathfrak M$.

In Theorem \ref{t:spectral representation} for every $(\nu, \kappa, \theta)\in \mathfrak M$ we obtain the spectral representation of $D^{\nu, \kappa}_\theta$, namely we find an explicit unitary operator $\mathcal U^{\nu, \kappa}_\theta: \mathsf L^2(\mathbb R_+, \mathbb C^2, \mathrm dr)\to \mathsf L^2(\mathbb R, \mathbb C, \mathrm dx)$ such that $\mathcal U^{\nu, \kappa}_\theta D^{\nu, \kappa}_\theta(\mathcal U^{\nu, \kappa}_\theta)^*$ is the operator of multiplication by the independent variable.
In particular, for all $(\nu, \kappa, \theta)\in \mathfrak M$ the spectrum of $D^{\nu, \kappa}_\theta$ is purely absolutely continuous, simple and coincides with $\mathbb R$.
The existence and general form of spectral representations for one-dimensional Dirac systems is already proved in Theorem 9.7 of \cite{WeidmannSpectralTheory}, which provides a construction of the spectral representation with respect to a matrix-valued measure given by an explicit formula. Proving that this measure is of rank one and is mutually absolutely continuous with respect to the Lebesgue measure requires considerable work. Our proof of Theorem \ref{t:spectral representation} is not based on a direct application of this general result, since such an application seems to be more involved then our approach tailored specifically to $D^{\nu, \kappa}_\theta$. 
A related result for Coulomb-Dirac operators with positive mass can be found in \cite{VoronovGitmanTyutin}.

Then we study the negative spectrum of perturbations of $D^{\nu, \kappa}_\theta$ restricted to its positive spectral subspace. Denoting for any self-adjoint operator $A$ and Borel subset $\mathcal I$ of $\mathbb R$ the corresponding spectral projector by $P_{\mathcal I}(A)$ we define $P_{\theta, \infty}^{\nu, \kappa} :=P_{[0, \infty)}(D^{\nu, \kappa}_\theta)$. In $P_{\theta, \infty}^{\nu, \kappa}\mathsf L^2(\mathbb R_+, \mathbb C^2)$ we consider the operator
\begin{align}\label{D positive part with V}
 D_{\theta, \infty}^{\nu, \kappa}(V) :=P_{\theta, \infty}^{\nu, \kappa}(D^{\nu, \kappa}_\theta -V)P_{\theta, \infty}^{\nu, \kappa}
\end{align}
with $V$ being an operator of multiplication by a measurable Hermitian $2\times 2$-matrix-function.
For $2\times 2$ matrices their norms $\|\cdot\|_{\mathbb C^{2\times 2}}$, absolute values $|\cdot|$ and positive parts $(\cdot)_+$ are defined in the spectral sense.
We are, in particular, interested in the role which the choice of a self-adjoint realisation plays for the existence of a virtual level.

In Theorem \ref{t: the alternative} we obtain three types of results concerning virtual levels:
The parameter set $\mathfrak M$ is a disjoint union of $\mathfrak M_{\mathrm I}$, $\mathfrak M_{\mathrm{II}}$ and $\mathfrak M_{\mathrm{III}}$ such that
\begin{enumerate}
 \item[I.] For $(\nu, \kappa, \theta) \in\mathfrak M_{\mathrm I}$ the operator $D_{\theta, \infty}^{\nu, \kappa}(0)$ has a virtual level at zero, i.e. there exists a measurable function $A_\theta^{\nu, \kappa}: \mathbb R_+\to \mathbb C^2$ vanishing almost nowhere such that for any $V$ satisfying
\begin{align}\label{VL via A}
 \int_0^\infty\big\langle A_\theta^{\nu, \kappa}(r), V(r)A_\theta^{\nu, \kappa}(r)\big\rangle_{\mathbb C^{2}}\mathrm dr >0
\end{align}
and any $\alpha >0$ the operator $D_{\theta, \infty}^{\nu, \kappa}(\alpha V)$ has non-empty negative spectrum. Condition \eqref{VL via A} is satisfied for all $V \geqslant 0$ which are positive definite on some sets of positive Lebesgue measure.
 \item[II.] For $(\nu, \kappa, \theta) \in\mathfrak M_{\mathrm{II}}$ and $q >1$ there exist weight functions $W_{\theta, q}^{\nu, \kappa}: \mathbb R_+\to \overline{\mathbb R_+}$ such that the inequality
\begin{align}\label{integral estimate for beta>0}
 \rank P_{(-\infty, 0)}\big(D_{\theta, \infty}^{\nu, \kappa}(V)\big) \leqslant \int_0^\infty \big\|V_+(r)\big\|^q_{\mathbb C^{2\times2}}W^{\nu, \kappa}_{\theta, q}(r)\,\mathrm dr
\end{align}
holds. Consequently, for any $V$ for which the right hand side of \eqref{integral estimate for beta>0} is finite, the operator $D_{\theta, \infty}^{\nu, \kappa}(\alpha V)$ has no negative spectrum provided $|\alpha|$ is small enough.
\item[III.] For $(\nu, \kappa, \theta) \in\mathfrak M_{\mathrm{III}}$ and $V_+\in \mathsf L^\infty(\mathbb R_+, \mathbb C^{2\times2})$ the estimate
\begin{align}\begin{split}\label{no VL for beta=0}
 &\rank P_{(-\infty, 0)}\big(D_{\theta, \infty}^{\nu, \kappa}(V)\big)\\ &\leqslant K^{\nu, \kappa}\int_0^\infty \big\|V_+(r)\big\|_{\mathbb C^{2\times2}}\Big(\ln^2(\mathrm e^{\tan\theta}r) +\ln^2\big(\mathrm e +2r\|V_+\|_{\mathsf L^\infty(\mathbb R_+, \mathbb C^{2\times2})}\big)\Big)\,\mathrm dr
\end{split}\end{align}
holds with a finite constant $K^{\nu, \kappa}$.
Again, as soon as the right hand side of \eqref{no VL for beta=0} is finite, the operator $D_{\theta, \infty}^{\nu, \kappa}(\alpha V)$ has no negative spectrum provided $|\alpha|$ is small enough.
\end{enumerate}
The questions of existence of a virtual level at zero are already studied for different self-adjoint operators, see e.g. \cite{Simon1976, Weidl1999}.

In Theorem \ref{t: LT} we obtain estimates of Lieb-Thirring type (see \cite{LiebThirring1976} for the original result and \cite{LaptevWeidl, Frank2016} for reviews of further developments):
\begin{enumerate}
 \item[a)] For most $(\nu, \kappa, \theta)\in\mathfrak M$ we can prove for any $\gamma >0$
\begin{align}\label{LT}
 \tr\big(D_{\theta, \infty}^{\nu, \kappa}(V)\big)_-^\gamma \leqslant K^{\nu, \kappa, \gamma}\int_0^\infty\big\|V_+(r)\big\|_{\mathbb C^{2\times2}}^{1 +\gamma}W^{\nu, \kappa}_{\theta}(r)\,\mathrm dr
\end{align}
with an appropriate weight function $W^{\nu, \kappa}_{\theta} \geqslant 0$ and $K^{\nu, \kappa, \gamma} \in\mathbb R_+$. In many cases we have $W^{\nu, \kappa}_{\theta} \equiv1$.
 \item[b)] In the special case $\beta \in(0, 1/2)$, $\theta =0$ we need to assume $\gamma >2\beta$ and replace \eqref{LT} with
\begin{align}\begin{split}\label{LT modified}
 &\tr\big(D_{0, \infty}^{\nu, \kappa}(V)\big)_-^\gamma \\ &\leqslant K^{\nu, \kappa, \gamma}\bigg(\int_0^\infty\big\|V_+(r)\big\|_{\mathbb C^{2\times2}}^{1 +\gamma -2\beta}r^{-2\beta}\,\mathrm dr +\int_0^\infty\big\|V_+(r)\big\|_{\mathbb C^{2\times2}}^{1 +\gamma}\,\mathrm dr\bigg),
\end{split}
\end{align}
where $K^{\nu, \kappa, \gamma}$ is a finite constant.
\end{enumerate}

The ranges of applicability of the above results are represented in the following table:
\medskip\noindent
\begin{center}
\begin{tabular}{l|ccccc}
 \ &$\!\beta\in \mathrm i\mathbb R_+$ & $\!\beta =0 =\kappa$ & $\!\beta =0 \neq\kappa$ & $\!\beta \in(0, 1/2)$ & $\!\beta \geqslant 1/2$\\ \hline
$\theta =0$ & Ia1 & Ia1 & IIIa & Ib & --- \\
$\theta =\pi/2$ & Ia1 & Ia1 & Ia1 & IIa1 & IIa1 \\
$\theta \in(0, \pi)\setminus\{\frac\pi2\}$ & Ia1 & Ia1 & IIIa & IIa & ---
\end{tabular}\\ \medskip
Table 1
\end{center}

Here the Roman numbers indicate the applicable part of Theorem \ref{t: the alternative}, the letters the part of Theorem \ref{t: LT} and ``$1$'' indicates that \eqref{LT} holds with $W^{\nu, \kappa}_{\theta} \equiv1$.
In the cases marked with ``---'' there exists no self-adjoint realisation, see Theorem \ref{t: self-adjoint realisations}. Note that in the case ``Ia1'' inequality \eqref{LT} is a form of the Hardy-Lieb-Thirring inequality (see \cite{EkholmFrank, Frank2009, FrankLiebSeiringer}).

At last, in Theorem \ref{t: virtual levels 2d} we apply our results to the two-dimensional massless Coulomb-Dirac operator answering the question stated at the beginning of the introduction. A further application to three-dimensional massless Coulomb-Dirac operators can be obtained analogously.

In the following section we explicitly formulate our main results. Their proofs constitute the rest of the article.

\paragraph{Acknowledgement:} S. M. was supported by the RSF grant 15-11-30007.

\section{Main results}

\subsection{Self-adjoint realisations of Coulomb-Dirac operators on the half-line}

First we introduce a family of self-adjoint operators in $\mathsf L^2(\mathbb R_+, \mathbb C^2)$ corresponding to the differential expression \eqref{Dirac symbol}.
We start from the symmetric operator $D^{\nu, \kappa}$ defined on $\mathsf C_0^\infty(\mathbb R_+, \mathbb C^2)$. Then the action of the adjoint operator $(D^{\nu, \kappa})^*$ is also given by \eqref{Dirac symbol}, but on the ``maximal'' domain
\begin{align*}
 \mathfrak D\big((D^{\nu, \kappa})^*\big) =\big\{f\in\mathsf L^2(\mathbb R_+, \mathbb C^2): f \in \mathsf{AC}_{\loc}(\mathbb R_+, \mathbb C^2),\ d^{\nu, \kappa}f\in \mathsf L^2(\mathbb R_+, \mathbb C^2)\big\}.
\end{align*}
In the following theorem we characterise all self--adjoint extensions of $D^{\nu, \kappa}$.
We will make use of the functions $\Psi^{\nu, \kappa}_M$ and $\Psi^{\nu, \kappa}_U:\mathbb R_+\to \mathbb C^2$ introduced in \eqref{Psi_M} and \eqref{Psi_U}.
\begin{thm}\label{t: self-adjoint realisations}\ 
 \begin{enumerate}
  \item For $\beta \in[0, 1/2)$ and $\kappa \neq0$ every self-adjoint extension of $D^{\nu, \kappa}$ coincides with $D^{\nu, \kappa}_\theta$ with some $\theta\in [0,\pi)$, where $D^{\nu, \kappa}_\theta$ is the restriction of $(D^{\nu, \kappa})^*$ to the set of functions $f\in \mathfrak D\big((D^{\nu, \kappa})^*\big)$ satisfying the boundary condition
\begin{align}\label{real bc}
 \lim_{r\to +0}\Big\langle\begin{pmatrix}0 &-1\\ 1& 0\end{pmatrix} f(r), \cos\theta\, \Psi^{\nu, \kappa}_U(r) +\sin\theta\,\Psi^{\nu, \kappa}_M(r)\Big\rangle_{\mathbb C^2}= 0.
\end{align}
  \item For $\beta \geqslant 1/2$ the operator $D^{\nu, \kappa}$ is essentially self-adjoint. We denote its closure by $D^{\nu, \kappa}_{\pi/2}$. Every $f\in\mathfrak D(D^{\nu, \kappa}_{\pi/2})$ satisfies \eqref{real bc} with $\theta :=\pi/2$.
  \item For $\beta\in\mathrm i\mathbb R_+$ or $\kappa =0$ every self-adjoint extension of $D^{\nu, \kappa}$ coincides with $D^{\nu, \kappa}_\theta$ with some $\theta\in [0,\pi)$, where $D^{\nu, \kappa}_\theta$ is the restriction of $(D^{\nu, \kappa})^*$ to the set of functions $f\in \mathfrak D\big((D^{\nu, \kappa})^*\big)$ satisfying the boundary condition
\begin{align}\label{imaginary bc}
 \lim_{r\to +0}\Big\langle\begin{pmatrix}0 &-1\\ 1& 0\end{pmatrix} f(r), \mathrm e^{\mathrm i\theta} \Psi^{\nu, \kappa}_U(r) +\mathrm e^{-\mathrm i\theta}\Psi^{\nu, \kappa}_M(r)\Big\rangle_{\mathbb C^2}= 0.
\end{align}
 \end{enumerate}
\end{thm}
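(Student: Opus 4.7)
The plan is to apply the Weyl limit point/limit circle classification for first-order singular symmetric systems on $\mathbb R_+$, as developed in \cite{Weidmann1971}. At $+\infty$ the Coulomb term $\mathrm{diag}(-\nu/r,-\nu/r)$ and the $\kappa/r$ term decay, and the free operator corresponding to $d^{0,0}$ is limit point at $\infty$; a standard perturbation argument (e.g.\ comparing solutions of $d^{\nu,\kappa} f = \pm \mathrm i f$ to those of $d^{0,0} f = \pm \mathrm i f$) yields the limit point case at $\infty$ in all parameter regimes. Hence no boundary condition at infinity is needed and the entire analysis reduces to the endpoint $0$.

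Near $r=0$ I would solve $d^{\nu,\kappa} f = 0$ by the Frobenius ansatz $f(r)=r^s(a_1,a_2)^{\mathsf T}$, obtaining the indicial equation $s^2=\kappa^2-\nu^2=\beta^2$. Thus, away from degenerate cases, the two fundamental solutions of the homogeneous problem (and, up to bounded corrections, of $d^{\nu,\kappa}f=\pm\mathrm i f$) behave like $r^{\pm\beta}$. When $\beta\geqslant 1/2$ only the $r^{\beta}$ solution is locally square integrable, which forces the deficiency indices at $0$ to be $(0,0)$; combined with the limit point property at infinity this yields essential self-adjointness and proves part 2. When $\beta\in[0,1/2)$ with $\kappa\neq 0$, or $\beta\in\mathrm i\mathbb R_+$, or $\kappa=0$, both fundamental solutions (chosen to be $\Psi_M^{\nu,\kappa}$ and $\Psi_U^{\nu,\kappa}$ as in \eqref{Psi_M}, \eqref{Psi_U}) lie in $\mathsf L^2$ near zero, so the deficiency indices are $(1,1)$ and von Neumann's theorem supplies a real one-parameter family of self-adjoint extensions.

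To identify these extensions with the boundary conditions \eqref{real bc} and \eqref{imaginary bc}, I would use Green's identity to introduce the sesquilinear boundary form
\begin{equation*}
 [f,g]_0 := \lim_{r\to +0}\Big\langle\begin{pmatrix}0 &-1\\ 1& 0\end{pmatrix} f(r),\, g(r)\Big\rangle_{\mathbb C^2}
\end{equation*}
on $\mathfrak D\big((D^{\nu, \kappa})^*\big)$. Since the kernel of $[\cdot,\cdot]_0$ coincides with $\mathfrak D(\overline{D^{\nu,\kappa}})$, and the quotient is two-dimensional in the limit circle case, self-adjoint extensions correspond bijectively to maximal isotropic subspaces of the induced (indefinite) form on the two-dimensional quotient. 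Each such subspace is one-dimensional and thus determined by a single representative $\cos\theta\,\Psi_U^{\nu,\kappa}+\sin\theta\,\Psi_M^{\nu,\kappa}$ or $\mathrm e^{\mathrm i\theta}\Psi_U^{\nu,\kappa}+\mathrm e^{-\mathrm i\theta}\Psi_M^{\nu,\kappa}$ against which $f$ must be orthogonal in the sense of $[\cdot,\cdot]_0$, yielding \eqref{real bc} or \eqref{imaginary bc}.

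The main obstacle is pinning down \emph{which} parametrization applies in which regime, which hinges on evaluating the three numbers $[\Psi_U,\Psi_U]_0$, $[\Psi_M,\Psi_M]_0$ and $[\Psi_U,\Psi_M]_0$ using the explicit leading-order asymptotics of $\Psi_M^{\nu,\kappa}$, $\Psi_U^{\nu,\kappa}$ near $0$. For $\beta\in(0,1/2)$, $\kappa\neq 0$ the power-type solutions are real and the diagonal entries vanish while the off-diagonal pairing is real, so the isotropy condition is real-linear and admits the real parametrization of part 1. For imaginary $\beta$, or for $\kappa=0$ where the indicial exponents coalesce and a logarithmic correction appears in one fundamental solution, the diagonal entries $[\Psi_{U/M},\Psi_{U/M}]_0$ no longer vanish and the pairing becomes genuinely Hermitian with indefinite signature $(1,1)$; the isotropic vectors are then parametrized by $\mathrm e^{\pm\mathrm i\theta}$, giving \eqref{imaginary bc}. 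Verifying these two algebraic pictures, and in particular handling the coalescence of exponents at $\kappa=0$, is the computationally delicate part.
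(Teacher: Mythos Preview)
Your approach is essentially the one the paper takes: a limit point/limit circle classification at both endpoints, then an appeal to Weidmann's extension theory (Theorem~1.5 in \cite{Weidmann1971}). The paper is considerably terser, however, and slightly more direct on two points. First, rather than arguing by perturbation or Frobenius asymptotics, it simply writes down the explicit fundamental solutions $\Psi_M^{\nu,\kappa}$, $\Psi_U^{\nu,\kappa}$ of $d^{\nu,\kappa}\Psi=0$ (i.e.\ at spectral parameter $\lambda=0$) and checks their square-integrability near $0$ and near $\infty$ by inspection; this suffices by the Weyl alternative. Second, the paper does not spell out the boundary-form analysis you sketch---it defers the parametrisation of extensions entirely to Weidmann's theorem.

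One small slip in your case analysis: the logarithmic fundamental solution does \emph{not} arise at $\kappa=0$. For $\kappa=0$ the indicial exponents are $\pm\mathrm i\nu$, which are distinct for $\nu\neq0$, and the solutions $r^{\pm\mathrm i\nu}\binom{1}{\pm\mathrm i}$ are complex conjugates of each other with no logarithm; this is why the Hermitian parametrisation \eqref{imaginary bc} applies there. The logarithm appears instead when $\beta=0$ with $\kappa\neq0$ (i.e.\ $\nu^2=\kappa^2\neq0$); see the fourth case of \eqref{Psi_U}. In that case both $\Psi_M^{\nu,\kappa}$ and $\Psi_U^{\nu,\kappa}$ are real-valued, the diagonal boundary pairings vanish, and the real parametrisation \eqref{real bc} of part~1 is the correct one. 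This does not affect the overall strategy, but you should reroute that part of the case split accordingly.
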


\subsection{Spectral representation of \texorpdfstring{$D^{\nu, \kappa}_\theta$}{D}}

From now on we study the self-adjoint operator $D^{\nu, \kappa}_\theta$ defined in Theorem \ref{t: self-adjoint realisations}.
Using the auxiliary functions and constants introduced in Lemmata \ref{l:c^kappa,nu}, \ref{l:Phi_0}, \ref{l: Wronskian} and \ref{t: spectral function} we are able to find an explicit transform which delivers the spectral representation for $D^{\nu, \kappa}_\theta$:

\begin{thm}\label{t:spectral representation}
Let $\Lambda$ be the operator of multiplication by the independent variable in $\mathsf L^2(\mathbb R, \mathbb C, \mathrm dx)$.
The unitary operator $\mathcal U^{\nu, \kappa}_\theta: \mathsf L^2(\mathbb R_+, \mathbb C^2, \mathrm dr)\to \mathsf L^2(\mathbb R, \mathbb C, \mathrm dx)$ given by
\begin{align}\label{U}
 \mathcal U^{\nu, \kappa}_\theta f :=\underset{R\to\infty}\Ltwolim\sqrt{m^{\nu, \kappa}_\theta(\cdot)}\int_{1/R}^R \big(\Phi_{0, \theta}^{\nu, \kappa}(\cdot; y)\big)^\intercal f(y)\,\mathrm dy
\end{align}
delivers the spectral representation of $D^{\nu, \kappa}_\theta$, i.e. 
\begin{align}\label{spectral representation}
D^{\nu, \kappa}_\theta =(\mathcal U^{\nu, \kappa}_\theta)^*\Lambda\, \mathcal U^{\nu, \kappa}_\theta
\end{align}
holds.
\end{thm}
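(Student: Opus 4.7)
The plan is to establish \eqref{spectral representation} by a direct Weyl--Titchmarsh eigenfunction expansion tailored to the Coulomb--Dirac system, rather than appealing to the abstract machinery of \cite{WeidmannSpectralTheory}. The starting point, built into the construction of $\Phi_{0,\theta}^{\nu,\kappa}$ in Lemma \ref{l:Phi_0}, is that for each spectral parameter $x\in\mathbb R$ the column $y\mapsto\Phi_{0,\theta}^{\nu,\kappa}(x;y)$ solves the formal eigenvalue equation $d^{\nu,\kappa}\Phi=x\Phi$ on $\mathbb R_+$ and obeys the boundary condition at $0$ that characterises $D^{\nu,\kappa}_\theta$ via Theorem \ref{t: self-adjoint realisations}. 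For $f\in\mathsf C_0^\infty(\mathbb R_+,\mathbb C^2)$ the transform $\mathcal U^{\nu,\kappa}_\theta f$ is then a genuine finite integral, and integration by parts (with vanishing boundary contributions since $\supp f\Subset\mathbb R_+$) immediately produces the intertwining
\[
 \mathcal U^{\nu,\kappa}_\theta D^{\nu,\kappa}_\theta f =\Lambda\,\mathcal U^{\nu,\kappa}_\theta f
\]
on this core.

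The main work is the Plancherel identity, which I would obtain via Stone's formula. For $z\in\mathbb C\setminus\mathbb R$ one constructs a Jost-type solution $\Phi_\infty^{\nu,\kappa}(z;\cdot)$ of $d^{\nu,\kappa}\Phi=z\Phi$ that is square-integrable near $+\infty$, and writes the resolvent of $D^{\nu,\kappa}_\theta$ in the usual Green's-function form
\[
 K_z(r,s)=\frac{1}{W(z)}\begin{cases}\Phi_{0,\theta}^{\nu,\kappa}(z;r)\bigl(\Phi_\infty^{\nu,\kappa}(z;s)\bigr)^\intercal,& r\leqslant s,\\ \Phi_\infty^{\nu,\kappa}(z;r)\bigl(\Phi_{0,\theta}^{\nu,\kappa}(z;s)\bigr)^\intercal,& r\geqslant s,\end{cases}
\]
with $W(z)$ a Wronskian handled along the lines of Lemma \ref{l: Wronskian}. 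Applying Stone's formula and extracting the jump of $K_z$ across the real line by means of the Whittaker/confluent hypergeometric asymptotics encoded in Lemma \ref{t: spectral function}, the boundary values $\Phi_\infty^{\nu,\kappa}(x\pm\mathrm i 0;\cdot)$ decompose as linear combinations of $\Phi_{0,\theta}^{\nu,\kappa}(x;\cdot)$ and a linearly independent partner; the latter cancels in the difference and what survives is precisely
\[
 \mathrm d\mu(x)=m^{\nu,\kappa}_\theta(x)\,\Phi_{0,\theta}^{\nu,\kappa}(x;r)\bigl(\Phi_{0,\theta}^{\nu,\kappa}(x;s)\bigr)^\intercal\,\mathrm dx,
\]
with the scalar density $m^{\nu,\kappa}_\theta$ of Lemma \ref{t: spectral function}. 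This yields the Plancherel identity
\[
 \|f\|^2_{\mathsf L^2(\mathbb R_+,\mathbb C^2)}=\int_{\mathbb R}m^{\nu,\kappa}_\theta(x)\bigg|\int_0^\infty\bigl(\Phi_{0,\theta}^{\nu,\kappa}(x;y)\bigr)^\intercal f(y)\,\mathrm dy\bigg|^2\,\mathrm dx
\]
on the core $\mathsf C_0^\infty(\mathbb R_+,\mathbb C^2)$, and by density $\mathcal U^{\nu,\kappa}_\theta$ extends to an isometry; in particular the truncated integral in \eqref{U} converges in $\mathsf L^2$.

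For surjectivity, the intertwining together with isometry implies that the projection $\mathcal U^{\nu,\kappa}_\theta(\mathcal U^{\nu,\kappa}_\theta)^*$ onto $\Ran\mathcal U^{\nu,\kappa}_\theta$ commutes with $\Lambda$ and hence equals multiplication by $\mathbf 1_E$ for some Borel $E\subset\mathbb R$. Positivity of $m^{\nu,\kappa}_\theta$ almost everywhere (Lemma \ref{t: spectral function}), combined with a standard duality argument against the kernel, forces $E=\mathbb R$; then \eqref{spectral representation} follows from the intertwining by continuity.

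The main obstacle is the spectral-density computation. The Coulomb singularity at $r=0$ and the long-range oscillatory behaviour at $r=+\infty$ force a careful case analysis: for $\beta\in\overline{\mathbb R_+}$ one deals with real-parameter Whittaker functions, while for $\beta\in\mathrm i\mathbb R_+$ with oscillating combinations; the threshold $x=0$ needs separate treatment, since the Jost solution degenerates there and one has to rule out embedded or singular spectrum. Each boundary-condition regime of Theorem \ref{t: self-adjoint realisations} produces a slightly different explicit form of $m^{\nu,\kappa}_\theta$, which is why the case distinction is already embodied in Lemma \ref{t: spectral function}.
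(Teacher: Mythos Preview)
Your plan matches the paper's proof almost step for step: Stone's formula applied to the explicit Green function (Lemmata \ref{l: Green function}--\ref{t: spectral function}) gives the rank-one spectral kernel $m^{\nu,\kappa}_\theta\Phi_{0,\theta}^{\nu,\kappa}(\Phi_{0,\theta}^{\nu,\kappa})^\intercal$ and hence the Plancherel identity; integration by parts gives the intertwining; and surjectivity comes from recognising $\Ran\mathcal U^{\nu,\kappa}_\theta$ as a reducing subspace of $\Lambda$, hence of the form $\mathds 1_{\mathbb R\setminus\Omega}\mathsf L^2(\mathbb R)$, and then ruling out $|\Omega|>0$ by testing $(\mathcal U^{\nu,\kappa}_\theta)^*$ against an indicator. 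Your ``standard duality argument'' is exactly what the paper does, made concrete by evaluating $(\mathcal U^{\nu,\kappa}_\theta)^*\mathds 1_{\Xi_+}$ at a fixed point $r=1$ using analyticity of $\Phi_{0,\theta}^{\nu,\kappa}(\cdot;1)$ and strict positivity of $m^{\nu,\kappa}_\theta$.

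One genuine slip to fix: you call $\mathsf C_0^\infty(\mathbb R_+,\mathbb C^2)$ a ``core'' for $D^{\nu,\kappa}_\theta$, but for $\beta<1/2$ it is not---the whole point of Theorem \ref{t: self-adjoint realisations} is that the minimal operator has a one-parameter family of self-adjoint extensions. Intertwining on $\mathsf C_0^\infty$ alone therefore only shows that $(\mathcal U^{\nu,\kappa}_\theta)^*\Lambda\,\mathcal U^{\nu,\kappa}_\theta$ is \emph{some} self-adjoint extension of $D^{\nu,\kappa}$, not that it equals $D^{\nu,\kappa}_\theta$. The paper closes this gap by carrying out the integration by parts for arbitrary $f\in\mathfrak D(D^{\nu,\kappa}_\theta)$: the boundary contribution at $r\to 0$ vanishes because both $f$ and $\Phi_{0,\theta}^{\nu,\kappa}(\lambda;\cdot)$ satisfy the \emph{same} boundary condition \eqref{real bc} or \eqref{imaginary bc}, and the contribution at $r\to\infty$ is handled by the $\mathsf L^2$-limit in \eqref{U}. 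Alternatively, you could bypass the intertwining step entirely: your Stone-formula computation already yields $P_{\mathcal I}(D^{\nu,\kappa}_\theta)=(\mathcal U^{\nu,\kappa}_\theta)^*\mathds 1_{\mathcal I}\mathcal U^{\nu,\kappa}_\theta$ for bounded intervals, and once $\mathcal U^{\nu,\kappa}_\theta$ is unitary this gives \eqref{spectral representation} by the spectral theorem.
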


\subsection{Existence of virtual levels for positively projected\texorpdfstring{\\}{} massless Coulomb-Dirac operators on the half-line}

Let $V$ be a measurable Hermitian $2\times 2$-matrix-function on $\mathbb R_+$. We assume

\vspace{-4mm}
\paragraph{Hypothesis A.} \emph{The operator $P_{\theta, \infty}^{\nu, \kappa}VP_{\theta, \infty}^{\nu, \kappa}$ is relatively form bounded with respect to $D^{\nu, \kappa}_\theta$ with a form bound less than one in $P_{\theta, \infty}^{\nu, \kappa}\mathsf L^2(\mathbb R_+, \mathbb C^2)$.}
\smallskip

Under Hypothesis A the operator \eqref{D positive part with V} is well defined in the form sense and self-adjoint by the KLMN theorem (see e.g. Theorem X.17 in \cite{ReedSimonII}). Hypothesis A is trivially satisfied for $V\in \mathsf L^\infty(\mathbb R_+)$.

The main result concerning the existence of a virtual level is the following theorem.

\begin{thm}\label{t: the alternative}\ 
\begin{enumerate}
 \item[\emph{I.}] Let $A_{\theta}^{\nu, \kappa}:\mathbb R_+\to \mathbb C^2$ be defined by
\begin{align}\label{A_theta}
 A_{\theta}^{\nu, \kappa}(r)\!\!:=\!\! \begin{cases}
                                        \mathrm e^{\mathrm i\theta}r^{\mathrm i\nu}\binom1{\mathrm i} +\mathrm e^{-\mathrm i\theta}r^{-\mathrm i\nu}\binom1{-\mathrm i}, &\text{for }\kappa =0,\ \nu\in\mathbb R,\ \theta\in [0, \pi);\\
					\binom{-\nu}\kappa, &\text{for }\kappa \neq0,\ \beta =0,\ \theta =\pi/2;\\
					\mathrm e^{\mathrm i\theta}r^{-\beta}\kappa\binom{\kappa -\beta}{-\nu} +\mathrm e^{-\mathrm i\theta}r^{\beta}\kappa\binom{\kappa +\beta}{-\nu},\!\! &\text{for }\kappa \neq0,\ \beta \!\in\! \mathrm i\mathbb R_+,\ \theta\!\in\! [0, \pi);\\
					r^{-\beta}\binom\nu{- \kappa -\beta}, &\text{for }\beta \in (0, 1/2),\ \theta =0.\\
                                       \end{cases}
\end{align}
Let $\mathfrak M_{\mathrm I}$ denote the set of all triples $(\nu, \kappa, \theta)$ from the right hand side of \eqref{A_theta}. 
For any $(\nu, \kappa, \theta)\in \mathfrak M_{\mathrm I}$ assume that $V$ satisfies
\begin{align}\begin{split}\label{V integrability}
 \|V\|_{\mathbb C^{2\times2}}\!\in\! \mathsf L^1(\mathbb R_+, r^{2- 2\Re\beta}\mathrm dr)\ \text{and}\ \int_0^\infty \!\!\!\!\big\langle A_{\theta}^{\nu, \kappa}(r), \big|V(r)\big|A_{\theta}^{\nu, \kappa}(r)\big\rangle_{\mathbb C^2}\mathrm dr <\infty.
\end{split}\end{align}
Then the negative spectrum of $D_{\theta, \infty}^{\nu, \kappa}(V)$ is non-empty provided \eqref{VL via A} holds.
 \item[\emph{II.}] Let $\mathfrak M_{\mathrm{II}}$ be the set of $(\nu, \kappa, \theta)\in\mathfrak M$ such that either $\beta> 0$ and $\theta =\pi/2$, or $\beta\in (0, 1/2)$ and $\theta\in (0, \pi)\setminus\{\pi/2\}$ holds. For all $(\nu, \kappa, \theta)\in \mathfrak M_{\mathrm{II}}$ and any $q\in (1, 1 +2\beta)$ there exists $C_q^{\nu, \kappa}> 0$ such that \eqref{integral estimate for beta>0} holds with
\begin{align}\label{CLR weight}
W^{\nu, \kappa}_{\theta, q}(r) :=C_q^{\nu, \kappa}\begin{cases}
                                                   |\cot\theta|^{1 +(q -1)/(2\beta)}r^{-2\beta}, &\text{for }r\leqslant |\cot\theta|^{1/(2\beta)};\\ r^{q -1}, &\text{for }r\geqslant |\cot\theta|^{1/(2\beta)}.
                                                  \end{cases}
\end{align}
Finiteness of the right hand side of \eqref{integral estimate for beta>0} implies Hypothesis A for $V :=V_+$.
 \item[\emph{III.}] Let $V_+\in \mathsf L^\infty(\mathbb R_+, \mathbb C^{2\times2})$. Let $\mathfrak M_{\mathrm{III}}$ be the set of $(\nu, \kappa, \theta)$ such that $\kappa^2 =\nu^2 \neq 0$ and $\theta \in [0, \pi)\setminus\{\pi/2\}$ holds. For all $(\nu, \kappa, \theta)\in \mathfrak M_{\mathrm{III}}$ there exists a finite constant $K^{\nu, \kappa}$ independent of $\theta$ such that the estimate \eqref{no VL for beta=0} holds.
\end{enumerate}
\end{thm}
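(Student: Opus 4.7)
My overall approach to all three parts is to translate questions about the negative spectrum of $D_{\theta,\infty}^{\nu,\kappa}(V)$ into spectral questions for the Birman–Schwinger operator
\begin{equation*}
 K_\lambda(V) := |V|^{1/2}\,P_{\theta,\infty}^{\nu,\kappa}\bigl(D^{\nu,\kappa}_\theta+\lambda\bigr)^{-1}P_{\theta,\infty}^{\nu,\kappa}\,|V|^{1/2}\sgn V,\qquad \lambda\geqslant 0,
\end{equation*}
and then to insert the explicit integral representation of the positively projected resolvent that Theorem \ref{t:spectral representation} provides, namely the kernel
\begin{equation*}
 \bigl[P_{\theta,\infty}^{\nu,\kappa}(D_\theta^{\nu,\kappa}+\lambda)^{-1}P_{\theta,\infty}^{\nu,\kappa}\bigr](r,r') = \int_0^\infty \frac{m^{\nu,\kappa}_\theta(x)}{x+\lambda}\,\overline{\Phi^{\nu,\kappa}_{0,\theta}(x,r)}\,\Phi^{\nu,\kappa}_{0,\theta}(x,r')^\intercal\,\mathrm dx.
\end{equation*}
Birman–Schwinger then identifies $\rank P_{(-\infty,-\lambda)}(D_{\theta,\infty}^{\nu,\kappa}(V))$ with the number of eigenvalues of $K_\lambda(V)$ exceeding $1$; combined with the formula above this reduces each of Parts I–III to a question about an explicit weighted $\mathsf L^2(\mathbb R_+,\mathrm dx)$-integral.

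\textbf{Part I.} The decisive input is that, in each of the four branches of \eqref{A_theta}, one has (up to a nonzero constant) $\Phi^{\nu,\kappa}_{0,\theta}(x,\cdot)\to A^{\nu,\kappa}_\theta$ pointwise as $x\downarrow 0$, while $m^{\nu,\kappa}_\theta(x)$ tends to a strictly positive limit. Both facts are read off directly from the explicit forms of $\Phi^{\nu,\kappa}_{0,\theta}$ and $m^{\nu,\kappa}_\theta$ given by Lemmata \ref{l:Phi_0} and \ref{t: spectral function}, matched against the boundary data parametrised by $\theta$. I test the quadratic form $\langle g,K_\lambda(\alpha V)g\rangle$ on the trial functional $g:=|V|^{1/2}\sgn V\cdot A^{\nu,\kappa}_\theta$: the first half of \eqref{V integrability} guarantees $g\in\mathsf L^2(\mathbb R_+,\mathbb C^2)$, and both halves of \eqref{V integrability} together justify a dominated convergence argument showing $|V|^{1/2}\Phi^{\nu,\kappa}_{0,\theta}(x,\cdot)\to|V|^{1/2}A^{\nu,\kappa}_\theta$ in $\mathsf L^2$ as $x\downarrow 0$. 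Under \eqref{VL via A} this produces a logarithmic-type divergence $\langle g,K_\lambda(\alpha V)g\rangle\to+\infty$ as $\lambda\downarrow 0$ for any $\alpha>0$, so $\|K_\lambda(\alpha V)\|>1$ for small $\lambda$ and Birman–Schwinger yields a negative eigenvalue of $D_{\theta,\infty}^{\nu,\kappa}(\alpha V)$. The supplementary remark that positive-definite $V$ on a set of positive measure entails \eqref{VL via A} follows because $A_\theta^{\nu,\kappa}$ vanishes almost nowhere.

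\textbf{Parts II and III.} For Part II I apply the Schatten-class bound $\rank P_{[1,\infty)}(K)\leqslant \|K\|_{\mathfrak S_q}^q$ to $K=K_0(V_+)$ for $q\in(1,1+2\beta)$, rewrite $\|K\|_{\mathfrak S_q}^q$ as a diagonal integral of the resolvent kernel against $\|V_+(r)\|^q_{\mathbb C^{2\times 2}}$, and then split the $x$-integration according to whether $r$ is larger or smaller than the $\theta$-dependent scale $|\cot\theta|^{1/(2\beta)}$; this scale is precisely the crossover at which the two independent zero-energy solutions $\Psi_U^{\nu,\kappa}$ and $\Psi_M^{\nu,\kappa}$ balance in the boundary data. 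The asymptotics $\Phi^{\nu,\kappa}_{0,\theta}(x,r)\sim r^{\mp\beta}$ in the two regimes, after integration against $m_\theta^{\nu,\kappa}(x)/x$, reproduce exactly the piecewise weight \eqref{CLR weight}; the same diagonal estimate yields Hypothesis A for $V:=V_+$. For Part III (the critical case $\beta=0$, $\kappa^2=\nu^2\ne0$, $\theta\ne\pi/2$) the powers $r^{\pm 2\beta}$ collapse to $1$ and the Schatten-$q$ estimate diverges logarithmically, so I replace it by a Solomyak-type bound on the eigenvalue counting function of an integral operator with weakly singular kernel: after conjugation by $\mathcal U^{\nu,\kappa}_\theta$ the problem reduces to counting eigenvalues of $|V_+|^{1/2}\mathcal T|V_+|^{1/2}$ with $\mathcal T$ having a diagonal kernel of growth $\log^2 r$. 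A standard dyadic decomposition of the spectral variable at the scale $\|V_+\|_{\mathsf L^\infty}$ contributes the factor $\ln^2(\mathrm e+2r\|V_+\|_\infty)$, while the shift by $\mathrm e^{\tan\theta}$ tracks the $\theta$-dependent phase appearing in the regular solution at $r=0$.

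The principal obstacle I anticipate is securing small- and large-$x$ asymptotics for $\Phi^{\nu,\kappa}_{0,\theta}(x,r)$ and $m^{\nu,\kappa}_\theta(x)$ that are both uniform across all cells of Table~1 and sharp enough in the $\theta$-dependence to recover the correct crossover scale $|\cot\theta|^{1/(2\beta)}$ in \eqref{CLR weight} and the exponential $\mathrm e^{\tan\theta}$ in \eqref{no VL for beta=0}. Both features originate in a single delicate interpolation between $\Psi_U^{\nu,\kappa}$ and $\Psi_M^{\nu,\kappa}$ that determines the boundary data: keeping track of their relative normalisations in $\Phi^{\nu,\kappa}_{0,\theta}$ and consequently in $m^{\nu,\kappa}_\theta$ as $\theta$ traverses $[0,\pi)$ is the delicate book-keeping that underlies all three inequalities.
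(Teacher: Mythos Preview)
Your Parts I and II are broadly on target. Part II is essentially the paper's argument: Birman--Schwinger followed by the Schatten-$q$ estimate $\|V_+^{1/2}(D+\tau)^{-1/2}P\|_{\mathfrak S^{2q}}^{2q}\leqslant \|V_+^{q/2}(D+\tau)^{-q/2}P\|_{\mathfrak S^2}^2$, which the paper records as Lemma~\ref{l:IntegralEstimate}, and then the rescaling $\mu=|\cot\theta|^{1/(2\beta)}\lambda$ to extract the weight \eqref{CLR weight}. For Part I your Birman--Schwinger formulation is a legitimate alternative to the paper's direct variational argument (test functions $\varphi_n=(\mathcal U^{\nu,\kappa}_\theta)^*\big((\cdot)^{-1}\mathbbm 1_{[1/n^2,1/n]}\big)$); both reduce to the same comparison of $\langle\psi,D\psi\rangle$ against $\langle\psi,V\psi\rangle$. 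Two caveats: your assertion that $m^{\nu,\kappa}_\theta(x)$ tends to a \emph{finite} positive limit as $x\downarrow 0$ is false in the branch $\beta\in(0,1/2)$, $\theta=0$, where in fact $m_0^{\nu,\kappa}(\lambda)\sim\lambda^{-2\beta}$ (this only helps, but your ``logarithmic-type divergence'' becomes a power divergence there); and for sign-indefinite $V$ the operator $K_\lambda(V)$ you write down is not self-adjoint, so ``$\langle g,K_\lambda g\rangle$ large $\Rightarrow$ eigenvalue $>1$'' needs to go through the symmetric version $B_\lambda=(D+\lambda)^{-1/2}PVP(D+\lambda)^{-1/2}$ as in Lemma~\ref{l:Birman-Schwinger}.

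Part III is where your sketch has a genuine gap. The reason the Schatten estimate fails for $\beta=0$ is that $\int_0^\infty \lambda^{-q}m^{\nu,\kappa}_\theta(\lambda)\|\Phi^{\nu,\kappa}_{0,\theta}(\lambda;r)\|^2\,\mathrm d\lambda$ diverges at $\lambda=\infty$, not at $\lambda=0$, and neither a ``Solomyak-type bound'' nor a dyadic decomposition by itself cures this. The paper's device is an \emph{energy cutoff}: one first proves (Lemma~\ref{l: energy splitting}) that
\[
\rank P_{(-\infty,0)}\big(D^{\nu,\kappa}_{\theta,\infty}(V)\big)\leqslant \rank P_{(-\infty,0)}\big(D^{\nu,\kappa}_{\theta,2\|V_+\|_\infty}(2V_+)\big),
\]
i.e.\ one may restrict to the finite spectral window $[0,2\|V_+\|_\infty)$ at the price of doubling $V_+$; only then does the Hilbert--Schmidt estimate (Lemma~\ref{l:IntegralEstimate} with $q=1$) become finite, producing the factor $\ln^2\big(\mathrm e+2r\|V_+\|_\infty\big)$ from the upper integration limit. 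Your phrase ``dyadic decomposition of the spectral variable at the scale $\|V_+\|_{\mathsf L^\infty}$'' gestures at the right scale but does not supply the variational splitting argument that actually justifies discarding the high-energy part, and the invocation of Solomyak-type eigenvalue bounds is a red herring here.
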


\subsection{Lieb-Thirring type estimates on the negative eigenvalues of \texorpdfstring{$D^{\nu, \kappa}_{\theta, \infty}(V)$}{D(V)}}

In Theorem \ref{t: LT} we provide estimates on the sums of powers of negative eigenvalues of $D^{\nu, \kappa}_{\theta, \infty}(V)$ via weighted integrals of powers of the perturbation potential $V$. In the case of $\theta =\pi/2$ the result is fully analogous to the classical Lieb-Thirring estimate.

\begin{thm}\label{t: LT}\ 
\begin{enumerate}
 \item[a)] For $\nu, \kappa\in \mathbb R$, $\theta \in[0, \pi)$ 
let $W^{\nu, \kappa}_{\theta}: \mathbb R_+\to \mathbb R_+$ be defined by
\begin{align}\label{weights}
W^{\nu, \kappa}_{\theta}(r)\!\!:=\!\!\begin{cases}
                    1, &\hspace{-3mm}\begin{cases}\text{for $\nu^2 = \kappa^2\neq 0$ and $\theta =\pi/2$};\\ \text{for $\beta \in \mathrm i\mathbb R_+$ and $\theta \in[0, \pi)$};\\ \text{for $\kappa =0$, $\nu\in\mathbb R$ and $\theta\in[0, \pi)$};\end{cases}\\
		    \max\big\{\!\!-\ln(\mathrm e^{\tan\theta}r), 1\big\}^2\!\!,\!\!\!\!&\text{for $\nu^2\! = \kappa^2\! \neq 0$ and $\theta\! \in\! [0, \pi)\!\setminus\!\{\pi/2\}$};\\
		    \max\big\{1, |\cot\theta|r^{-2\beta}\big\}, &\hspace{-3mm}\begin{cases}\text{for $\beta \in (0, 1/2)$ and $\theta \in (0, \pi)$};\\ \text{for $\beta \geqslant 1/2$ and $\theta =\pi/2$}.\end{cases}
                   \end{cases}
\end{align}
Then for any $\gamma >0$ there exists $K^{\nu, \kappa, \gamma} >0$ such that \eqref{LT} holds.
\item[b)] For $\beta\in(0, 1/2)$ and any $\gamma >2\beta$ there exists $K^{\nu, \kappa, \gamma} >0$ such that \eqref{LT modified} holds.
\end{enumerate}
In both cases the finiteness of the right hand sides of \eqref{LT} or \eqref{LT modified} implies that $V :=V_+$ satisfies Hypothesis A.
\end{thm}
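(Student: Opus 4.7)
The strategy is to lift the counting estimates of Theorem \ref{t: the alternative} to trace estimates via the Aizenman--Lieb identity
\begin{equation*}
\tr A_-^\gamma = \gamma\int_0^\infty t^{\gamma-1}\, \rank P_{(-\infty,0)}(A + t)\,\mathrm dt,\qquad A := D_{\theta,\infty}^{\nu,\kappa}(V).
\end{equation*}
Noting that $A + t = D_{\theta,\infty}^{\nu,\kappa}(V - t\mathbbm 1)$ and that $\bigl\|(V(r) - t\mathbbm 1)_+\bigr\|_{\mathbb C^{2\times 2}} = \bigl(\|V_+(r)\|_{\mathbb C^{2\times 2}} - t\bigr)_+$ pointwise, the $t$-integration converts every counting bound of the form $\rank P_{(-\infty,0)}(D_{\theta,\infty}^{\nu,\kappa}(V)) \leqslant \int \|V_+\|^q W\,\mathrm dr$ into a trace bound $\tr(D_{\theta,\infty}^{\nu,\kappa}(V))_-^\gamma \leqslant C \int \|V_+\|^{\gamma+q} W(r)\,\mathrm dr$, the finite constant $C$ coming from a beta-function integral.

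\textbf{Part (a).} For $(\nu,\kappa,\theta) \in \mathfrak M_{\mathrm{II}}$ I substitute \eqref{integral estimate for beta>0} with $q\in(1,1+2\beta)$; for $(\nu,\kappa,\theta)\in\mathfrak M_{\mathrm{III}}$ I substitute \eqref{no VL for beta=0}. In the latter case the factor $\ln^2(\mathrm e + 2r\|V_+\|_{\mathsf L^\infty})$ combines with $t^{\gamma-1}$ and $(\|V_+\| - t)_+$ under the $t$-integral to produce the claimed $\max\{-\ln(\mathrm e^{\tan\theta}r),1\}^2$ weight. In the cases marked ``1'' in Table 1 one needs the endpoint $q = 1$ of the CLR bound, which is not literally covered by Theorem \ref{t: the alternative}. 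I would argue this endpoint directly via Birman--Schwinger, reducing to a Schatten-class bound on $V_+^{1/2}\bigl(P_{\theta,\infty}^{\nu,\kappa}D_\theta^{\nu,\kappa}P_{\theta,\infty}^{\nu,\kappa} + t\bigr)^{-1/2}$ and computing the integral kernel of the resolvent from the generalised eigenfunctions $\Phi_{0,\theta}^{\nu,\kappa}$ supplied by Theorem \ref{t:spectral representation}. The Aizenman--Lieb integration then delivers the endpoint $q = 1$ whenever the asymptotics of $\Phi_{0,\theta}^{\nu,\kappa}$ near $r = 0$ and $x = 0$ are sufficiently regular, which is exactly what the table encodes.

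\textbf{Part (b).} In the critical case $\beta\in(0,1/2)$, $\theta = 0$ the operator has a virtual level by Theorem \ref{t: the alternative}.I, so no CLR bound at $q = 1$ is available and Aizenman--Lieb alone does not suffice. The two-term structure of \eqref{LT modified} points to a two-scale split of the $t$-integral. For $t$ small relative to $\|V_+(r)\|$ the virtual-level singularity at $r = 0$ is regularised by a Hardy-type bound $D_{0,\infty}^{\nu,\kappa}(0) \geqslant c\, r^{-2\beta}$ in the quadratic form sense, inherited from the $r\to 0$ asymptotics of $\Phi_{0,0}^{\nu,\kappa}$; this produces the first term $\int \|V_+\|^{1+\gamma-2\beta}r^{-2\beta}\,\mathrm dr$. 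For $t$ large relative to $\|V_+(r)\|$ a subcritical Birman--Schwinger estimate yields the second term $\int \|V_+\|^{1+\gamma}\,\mathrm dr$. The assumption $\gamma > 2\beta$ is precisely what forces convergence of the $t$-integral at $t = 0$ once the two-scale split has been carried out.

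\textbf{Main obstacle.} The delicate part is (b): implementing the two-scale split so that the Hardy-type weight $r^{-2\beta}$ appears with the sharp exponent $2\beta$ and the two $t$-regimes match cleanly. In (a) the subtle point is the $q \downarrow 1$ limit in the cases where Table 1 claims $W\equiv 1$; one either controls the constant $C_q^{\nu,\kappa}$ in \eqref{integral estimate for beta>0} as $q \downarrow 1$ or falls back on the direct Birman--Schwinger argument described above.
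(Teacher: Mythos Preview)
Your Aizenman--Lieb framework is the right starting point and matches the paper, but your plan to feed the counting bounds \eqref{integral estimate for beta>0} and \eqref{no VL for beta=0} from Theorem~\ref{t: the alternative} into it has a structural mismatch that you yourself flag but do not resolve. A CLR-type bound $\rank P_{(-\infty,0)}\big(D_{\theta,\infty}^{\nu,\kappa}(V)\big)\leqslant \int\|V_+\|^q W$ yields, after the $t$-integration, the exponent $\gamma+q$ rather than the claimed $\gamma+1$; since Theorem~\ref{t: the alternative} requires $q>1$, you cannot reach the target exponent this way. Moreover, the majority of the cases listed in \eqref{weights} with $W^{\nu,\kappa}_\theta\equiv 1$ lie in $\mathfrak M_{\mathrm I}$, where no CLR bound exists at all; and in the $\mathfrak M_{\mathrm{III}}$ case the $\|V_+\|_{\mathsf L^\infty}$-dependent logarithm in \eqref{no VL for beta=0} does not reduce to the $V$-independent weight $\max\{-\ln(\mathrm e^{\tan\theta}r),1\}^2$ under the $t$-integration.

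The paper avoids all of this by proving a single auxiliary estimate (Lemma~\ref{l: number of eigenvalues below tau}) of the form
\[
\rank P_{(-\infty,-\tau)}\big(D_{\theta,\infty}^{\nu,\kappa}(V)\big)\leqslant \frac{\tau^{1-q}}{q-1}\,k^{\nu,\kappa}\int_0^\infty\|V_+(r)\|^q_{\mathbb C^{2\times2}}W_\theta^{\nu,\kappa}(r)\,\mathrm dr,
\]
valid for \emph{every} case in \eqref{weights}, including those with a virtual level. The point is that the right-hand side is allowed to diverge as $\tau\to 0$; what matters is the explicit $\tau^{1-q}$ prefactor, which arises from replacing the full $\lambda$-integral in the Birman--Schwinger bound (Lemma~\ref{l:IntegralEstimate}) by the supremum bound $\sup_{\lambda>0} m_\theta^{\nu,\kappa}(\lambda)\|\Phi_{0,\theta}^{\nu,\kappa}(\lambda;r)\|^2_{\mathbb C^2}\leqslant k^{\nu,\kappa}W_\theta^{\nu,\kappa}(r)$ and then integrating $(\lambda+\tau)^{-q}$ in $\lambda$. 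In the Aizenman--Lieb step one applies this at level $-\tau/2$ to the shifted potential $(V_+-\tau/2)_+$; the factor $\tau^{1-q}$ combines with $\tau^{\gamma-1}$ and $(\|V_+(r)\|-\tau/2)_+^q$ to produce exactly $\|V_+(r)\|^{\gamma+1}$ after a beta-integral, for any $q\in(1,1+\gamma)$. Establishing the pointwise sup bound case by case is where the weights \eqref{weights} actually come from; this is the content you have not supplied.

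For part~(b) the paper uses neither a Hardy-type lower bound nor a two-scale split of the $\tau$-integral. Instead it keeps the $\lambda$-integral from Lemma~\ref{l:IntegralEstimate} intact (for $\theta=0$ one has $m_0^{\nu,\kappa}(\lambda)\|\Phi_{0,0}^{\nu,\kappa}(\lambda;r)\|^2 = m_0^{\nu,\kappa}(1)\|\Phi_U^{\nu,\kappa}(\lambda r)\|^2$), rescales $\lambda\mapsto \lambda\tau/2$, and bounds the resulting function $\Omega_q^{\nu,\kappa}(x):=\int_0^\infty(1+\lambda)^{-q}\|\Phi_U^{\nu,\kappa}(\lambda x)\|^2\,\mathrm d\lambda$ by $H^{\nu,\kappa}_q(x^{-2\beta}+1)$ using the asymptotics \eqref{Phi_M and Phi_U estimates}. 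The two terms in \eqref{LT modified} then drop out of a single $\tau$-integration, and the constraint $\gamma>2\beta$ is precisely the convergence condition $q<1+\gamma-2\beta$ for the resulting beta-integral.
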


\subsection{Application to two-dimensional projected Coulomb-Dirac operators}

Let us now consider positively projected massless Coulomb-Dirac operators in two dimensions. Such operators are relevant for description of graphene with Coulomb impurity \cite{MorozovMueller}.
Every $u\in \mathsf L^2(\mathbb R^2)$ can be represented in the polar coordinates as
\begin{equation*}
 u(r, \varphi)= \frac1{\sqrt{2\pi}}\sum_{m\in \mathbb Z}r^{-1/2}u_m(r)\mathrm e^{\mathrm im\varphi}
\end{equation*}
with
\begin{equation*}
 u_m(r):= \sqrt{\frac r{2\pi}}\int_0^{2\pi}u(r, \varphi)\mathrm e^{-\mathrm im\varphi}\mathrm d\varphi.
\end{equation*}
Introducing the unitary angular momentum decomposition
\begin{equation}\label{A}
 \mathcal A:\mathsf L^2(\mathbb R^2, \mathbb C^2)\to \underset{\varkappa\in \mathbb Z+1/2}\bigoplus\mathsf L^2(\mathbb R_+, \mathbb C^2), \quad \binom{v}{w}\mapsto \underset{\varkappa\in \mathbb Z+1/2}\bigoplus\binom{v_{\varkappa-1/2}}{\mathrm iw_{\varkappa+1/2}}
\end{equation}
we observe (see e.g. \cite{MorozovMueller}) that for any self-adjoint realisation of $-\mathrm i\boldsymbol\sigma\cdot\nabla -\nu|\cdot|^{-1}$ in $\mathsf L^2(\mathbb R^2, \mathbb C^2)$ there exists a map
\begin{align}\label{theta}
\boldsymbol\theta: \mathbb Z+1/2 \to [0,\pi) \quad\text{with} \quad \boldsymbol\theta(\kappa) =\pi/2 \quad \text{for all} \quad \kappa^2 \geqslant\nu^2 +1/4
\end{align}
such that the self-adjoint operator in question coincides with
\begin{align}\label{Coulomb-Dirac 2D}
 D^{\nu}_{\boldsymbol\theta} :=\mathcal A^*\bigg(\bigoplus_{\kappa\in \mathbb Z+1/2}D^{\nu, \kappa}_{\boldsymbol\theta(\kappa)}\bigg)\mathcal A,
\end{align}
where the components on the right hand side are defined in Theorem \ref{t: self-adjoint realisations}. On the other hand, every $\boldsymbol\theta$ satisfying \eqref{theta} gives rise to a self-adjoint realisation of $-\mathrm i\boldsymbol\sigma\cdot\nabla -\nu|\cdot|^{-1}$ in $\mathsf L^2(\mathbb R^2, \mathbb C^2)$ via \eqref{Coulomb-Dirac 2D}.

Let $P^{\nu}_{\boldsymbol\theta} :=\mathcal A^*(\bigoplus_{\kappa\in \mathbb Z +1/2}P_{\theta, \infty}^{\nu, \kappa})\mathcal A$ be the spectral projector onto the positive spectral subspace of $D^{\nu}_{\boldsymbol\theta}$.
For measurable Hermitian $(2\times2)$-matrix functions $Q$ on $\mathbb R^2$ we are interested in the negative spectrum of
\begin{align}\label{D projected with V 2d}
 D^{\nu}_{\boldsymbol\theta}(Q) :=P^{\nu}_{\boldsymbol\theta}(D^{\nu}_{\boldsymbol\theta} -Q)P^{\nu}_{\boldsymbol\theta}
\end{align}
in $P^{\nu}_{\boldsymbol\theta}\mathsf L^2(\mathbb R^2, \mathbb C^2)$.
We assume

\vspace{-4mm}
\paragraph{Hypothesis B.} \emph{The operator $P^{\nu}_{\boldsymbol\theta}QP^{\nu}_{\boldsymbol\theta}$ is relatively form bounded with respect to $D^{\nu}_{\boldsymbol\theta}$ with a relative bound less than one in $P^{\nu}_{\boldsymbol\theta}\mathsf L^2(\mathbb R_+, \mathbb C^2)$.}
\smallskip

Under Hypothesis B the operator \eqref{D projected with V 2d} is well-defined via its quadratic form and self-adjoint by KLMN theorem. Hypothesis B is fulfilled for all $Q\in\mathsf L^\infty(\mathbb R^2)$.

\begin{thm}\label{t: virtual levels 2d}Let $\nu \in\mathbb R$. 
\begin{enumerate}
 \item Suppose that there exists $\kappa_0\in \mathbb Z+1/2$ such that $\big(\nu, \kappa_0, \boldsymbol\theta(\kappa_0)\big)\in \mathfrak M_{\mathrm{I}}$ as defined in Theorem \ref{t: the alternative}. Then $D^{\nu}_{\boldsymbol\theta}(Q)$ has non-empty negative spectrum provided 
\begin{align}\label{V}
 V := \frac1{2\pi}\int_0^{2\pi}\begin{pmatrix}
                   Q_{11}(\cdot, \varphi) & -\mathrm i Q_{12}(\cdot, \varphi)\mathrm e^{\mathrm i\varphi}\\ \mathrm i Q_{21}(\cdot, \varphi)\mathrm e^{-\mathrm i\varphi} & Q_{22}(\cdot, \varphi)
                  \end{pmatrix}\mathrm d\varphi
\end{align}
satisfies \eqref{V integrability} and \eqref{VL via A} with $(\nu, \kappa, \theta):= \big(\nu, \kappa_0, \boldsymbol\theta(\kappa_0)\big)$.
 \item Suppose that for all $\kappa\in \mathbb Z+1/2$ the triple $\big(\nu, \kappa, \boldsymbol\theta(\kappa)\big)$ does not belong to $\mathfrak M_{\mathrm{I}}$. If there exists a measurable function $R: \mathbb R_+\to \overline{\mathbb R_+}$ with
\begin{align}\label{R bounds Q}
 Q \leqslant R\big(|\cdot|\big)\mathbb I \quad\text{almost everywhere in }\mathbb R^2
\end{align}
(where $\mathbb I$ denotes the $2\times 2$ identity matrix) such that:
\begin{enumerate}
 \item For all $\kappa\in \mathbb Z+1/2$ with $\kappa^2\leqslant \nu^2 +1/4$ such that $\big(\nu, \kappa, \boldsymbol\theta(\kappa)\big)\in \mathfrak M_{\mathrm{II}}$ there exists $q\in (1, 1 +2\beta)$ such that the right hand side of \eqref{integral estimate for beta>0} with $V :=R\mathbb I$ and $W^{\nu, \kappa}_{\theta, q}$ defined in \eqref{CLR weight} is finite;
 \item For all $\kappa\in \mathbb Z+1/2$ with $\kappa^2\leqslant \nu^2 +1/4$ such that $\big(\nu, \kappa, \boldsymbol\theta(\kappa)\big)\in \mathfrak M_{\mathrm{III}}$ we have $R\in \mathsf L^\infty(\mathbb R_+)$ and the right hand side of \eqref{no VL for beta=0} with $V :=R\mathbb I$ is finite;
 \item $R \in \mathsf L^\infty(\mathbb R_+, r\mathrm dr) +\mathsf L^2(\mathbb R_+, r\mathrm dr)$
\end{enumerate}
then there exists $\alpha_c >0$ such that for all $\alpha\in [0, \alpha_c)$ operator $D^{\nu}_{\boldsymbol\theta}(\alpha Q)$ has no negative spectrum. Note that for any $Q\in \mathsf C_0^\infty\big(\mathbb R^2\big)$ all the assumptions are satisfied.
\end{enumerate}
\end{thm}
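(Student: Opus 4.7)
The plan is to translate the two-dimensional problem into channelwise half-line problems via the angular momentum decomposition \eqref{A}--\eqref{Coulomb-Dirac 2D}. The pivotal computation is that if $\psi\in \mathsf L^2(\mathbb R^2, \mathbb C^2)$ is the image under $\mathcal A^*$ of a vector supported only in the $\kappa_0$-channel and equal to $g=(g_1, g_2)^\intercal$ there, then writing $\psi$ in polar coordinates using $v_m(r) = \sqrt{r/(2\pi)}\int v(r, \varphi)\mathrm e^{-\mathrm im\varphi}\mathrm d\varphi$ and integrating $\langle\psi, Q\psi\rangle_{\mathbb C^2}$ in $\varphi$ yields
\begin{equation*}
 \langle\psi, Q\psi\rangle_{\mathsf L^2(\mathbb R^2, \mathbb C^2)} =\int_0^\infty \langle g(r), V(r) g(r)\rangle_{\mathbb C^2}\, \mathrm dr
\end{equation*}
with $V$ as in \eqref{V}; the $r^{-1/2}$ weight inside $\mathcal A$ absorbs the Jacobian $r\,\mathrm dr$, and the $\mathrm e^{\pm\mathrm i\varphi}$ factors in $V$ arise from the frequency mismatch $(\kappa_0+1/2)-(\kappa_0-1/2)=1$ between the two spinor components. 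Together with \eqref{Coulomb-Dirac 2D}, this shows that on such $\psi$ the quadratic form of $D^{\nu}_{\boldsymbol\theta}(Q)$ coincides with the quadratic form of $D^{\nu,\kappa_0}_{\boldsymbol\theta(\kappa_0),\infty}(V)$ on $g$.

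Part~1 is then an immediate consequence: the assumptions \eqref{V integrability} and \eqref{VL via A} put $V$ into the hypotheses of Theorem~\ref{t: the alternative}(I) in channel $\kappa_0$, yielding a vector $g$ in the positive spectral subspace of $D^{\nu,\kappa_0}_{\boldsymbol\theta(\kappa_0)}$ with negative Rayleigh quotient for $D^{\nu,\kappa_0}_{\boldsymbol\theta(\kappa_0), \infty}(V)$. Its image $\psi := \mathcal A^*(\ldots, 0, g, 0, \ldots)$ lies in $P^{\nu}_{\boldsymbol\theta}\mathsf L^2(\mathbb R^2, \mathbb C^2)$ and has negative Rayleigh quotient for $D^{\nu}_{\boldsymbol\theta}(Q)$, which by the min--max principle forces $D^{\nu}_{\boldsymbol\theta}(Q)$ to possess a negative eigenvalue.

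For Part~2, I would first exploit $Q\leq R(|\cdot|)\mathbb I$ and the min--max principle to upper-bound $\rank P_{(-\infty,0)}(D^\nu_{\boldsymbol\theta}(\alpha Q))$ by $\rank P_{(-\infty,0)}(D^\nu_{\boldsymbol\theta}(\alpha R\mathbb I))$. Since the scalar radial multiplier $R(|\cdot|)\mathbb I$ commutes with $\mathcal A$, the operator $D^\nu_{\boldsymbol\theta}(\alpha R\mathbb I)$ is the direct sum $\mathcal A^*\bigoplus_\kappa D^{\nu,\kappa}_{\boldsymbol\theta(\kappa), \infty}(\alpha R\mathbb I) \mathcal A$, and the total rank of negative spectrum equals the $\kappa$-sum of channelwise ranks. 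For each of the finitely many $\kappa$ with $\kappa^2 \leq \nu^2 + 1/4$, hypothesis~(a) together with Theorem~\ref{t: the alternative}(II), or hypothesis~(b) together with Theorem~\ref{t: the alternative}(III), bounds the corresponding rank by a quantity of order $\alpha^q$ (respectively $\alpha \log^2(1/\alpha)$), which is strictly below $1$ for $\alpha$ small; integrality of the rank then forces it to vanish.

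The main obstacle is the infinite tail over $\kappa^2 > \nu^2 + 1/4$, where $\boldsymbol\theta(\kappa) = \pi/2$ and $\beta \geq 1/2$ are forced. Each such channel again falls under Theorem~\ref{t: the alternative}(II) with weight $W^{\nu,\kappa}_{\pi/2, q}(r) = C_q^{\nu,\kappa} r^{q-1}$; choosing $q$ close to $1$ and using condition~(iii)---which is equivalent to $R$ lying in $\mathsf L^\infty(\mathbb R^2) + \mathsf L^2(\mathbb R^2)$ as a radial function---bounds the individual channel rank by $\alpha^q C_q^{\nu,\kappa} \|R\|^q$. The delicate step is to extract sufficient decay of the constant $C_q^{\nu,\kappa}$ in $|\kappa|$, which must be tracked through the spectral density $m^{\nu,\kappa}_{\pi/2}$ and the generalised eigenfunctions $\Phi^{\nu,\kappa}_{0,\pi/2}$ from Theorem~\ref{t:spectral representation}; once $\sum_{\kappa^2 > \nu^2 + 1/4} C_q^{\nu,\kappa} < \infty$ is established, the total rank is $O(\alpha^q)$ and hence vanishes for $\alpha$ sufficiently small. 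The final sentence of the theorem is immediate on taking $R := \|Q(\cdot)\|_{\mathbb C^{2\times 2}}$, which is continuous and compactly supported and therefore trivially satisfies (i)--(iii).
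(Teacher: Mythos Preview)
Your treatment of Part~1 and of the finitely many channels $\kappa^2\leqslant\nu^2+1/4$ in Part~2 is correct and coincides with the paper's argument.

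The handling of the infinite tail $\kappa^2>\nu^2+1/4$, however, has a genuine gap. First, condition~(c) does \emph{not} imply finiteness of the channelwise CLR integral $\int_0^\infty R(r)^q r^{q-1}\,\mathrm dr$ that you would need: if $R_1\in\mathsf L^\infty(\mathbb R_+,r\,\mathrm dr)$ then $R_1(r)\leqslant C/r$ and $R_1^q r^{q-1}\sim r^{-1}$, which is not integrable; your parenthetical identification of $\mathsf L^\infty(\mathbb R_+,r\,\mathrm dr)$ with radial $\mathsf L^\infty(\mathbb R^2)$ is also incorrect. Second, even granting the integral, you would still need $\sum_{|\kappa|>\kappa_\nu}C_q^{\nu,\kappa}<\infty$, and nothing in the construction of $m^{\nu,\kappa}_{\pi/2}$ or $\Phi^{\nu,\kappa}_M$ suggests this---the normalisations carry factors like $\kappa(\kappa+\beta)$ that tend to grow.

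The paper avoids both obstacles by a different mechanism. A separate lemma (Lemma~\ref{l: higher channels}) gives a uniform operator comparison $|D^{\nu,\kappa}_{\pi/2}|\geqslant C^\nu|D^{0,\kappa}_{\pi/2}|$ for all large $|\kappa|$, with $C^\nu>0$ independent of $\kappa$. One then splits $R=R_1+R_2$ as in~(c). For $R_1\in\mathsf L^\infty(\mathbb R_+,r\,\mathrm dr)$ the Hardy-type inequality $|D^{0,\kappa}_{\pi/2}|\geqslant K/r$ (inherited from Herbst's bound $(-\Delta)^{1/2}\geqslant K|\cdot|^{-1}$ in $\mathbb R^2$) yields $D^{\nu,\kappa}_{\pi/2,\infty}(2\alpha R_1\mathbb I)\geqslant0$ for small $\alpha$, uniformly in $\kappa$, with no CLR estimate needed. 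For $R_2\in\mathsf L^2(\mathbb R_+,r\,\mathrm dr)$ one \emph{re-sums} the channels back to two dimensions: the comparison reduces to $(-\Delta)^{1/2}-2(C^\nu)^{-1}\alpha R_2(|\cdot|)$ on $\mathbb R^2$, to which the known Cwikel--Lieb--Rozenblum bound for the fractional Laplacian applies with a single constant. This gives a rank bound of order $\alpha^2\|R_2\|_{\mathsf L^2(r\,\mathrm dr)}^2$, again uniform, and the theorem follows.
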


For $\nu \in [0, 1/2]$ there exists a distinguished self-adjoint realisation $D^\nu$ of $-\mathrm i\boldsymbol\sigma\cdot\nabla -\nu|\cdot|^{-1}$, which coincides with $D^{\nu}_{\boldsymbol\theta_0}$ with $\boldsymbol\theta_0(\kappa) :=\pi/2$ for all $\kappa\in\mathbb Z+ 1/2$, see \cite{Mueller, MorozovMueller}. We write $D^\nu(Q)$ for $D^{\nu}_{\boldsymbol\theta_0}(Q)$. Theorem \ref{t: virtual levels 2d} implies
\begin{cor}
Suppose that $V$ defined in \eqref{V} satisfies 
\begin{align*}
 \|V\|_{\mathbb C^{2\times2}}\in \mathsf L^1\big(\mathbb R_+, (1 +r^{2})\mathrm dr\big)\ \text{and}\ \int_0^\infty\bigg\langle\binom{-1}1, V(r)\binom{-1}1\bigg\rangle_{\mathbb C^{2}}\mathrm dr >0.
\end{align*}
Then for any $\alpha >0$ the negative spectrum of $D^{1/2}(\alpha Q)$ is non-empty.
\end{cor}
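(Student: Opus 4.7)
The plan is to reduce the corollary to a direct application of part~1 of Theorem~\ref{t: virtual levels 2d}. Since $\nu = 1/2$, the distinguished realisation is $D^{1/2} = D^{1/2}_{\boldsymbol\theta_0}$ with $\boldsymbol\theta_0(\kappa) \equiv \pi/2$, so it suffices to exhibit some $\kappa_0\in \mathbb Z+1/2$ for which $(\nu, \kappa_0, \boldsymbol\theta_0(\kappa_0))\in\mathfrak M_{\mathrm I}$, and to verify that the corollary's hypotheses on $V$ imply both \eqref{V integrability} and \eqref{VL via A} for that triple.

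First I would take $\kappa_0 = 1/2$. Then $\nu = 1/2 = \kappa_0$ gives $\beta = \sqrt{\kappa_0^2 - \nu^2} = 0$ and $\kappa_0\neq 0$; combined with $\boldsymbol\theta_0(\kappa_0) = \pi/2$ this places $(1/2, 1/2, \pi/2)$ in the second alternative of the case split in \eqref{A_theta}, so it belongs to $\mathfrak M_{\mathrm I}$ and the associated virtual-level vector degenerates to the constant
\[
A_{\pi/2}^{1/2,\,1/2}(r) = \binom{-\nu}{\kappa_0} = \tfrac12 \binom{-1}{1}.
\]

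Next I would verify \eqref{V integrability}. With $\Re\beta = 0$ the weight in the first condition is $r^2\,\mathrm dr$, which is covered by the hypothesis $\|V\|_{\mathbb C^{2\times 2}}\in\mathsf L^1(\mathbb R_+, (1+r^2)\mathrm dr)$; the same hypothesis yields $\|V\|_{\mathbb C^{2\times 2}}\in\mathsf L^1(\mathbb R_+, \mathrm dr)$, and since $A_{\pi/2}^{1/2,1/2}$ is a bounded constant vector the second condition reduces to this $\mathsf L^1(\mathrm dr)$-bound. For \eqref{VL via A}, the constancy of $A_{\pi/2}^{1/2,1/2}$ gives
\[
\int_0^\infty \big\langle A_{\pi/2}^{1/2,1/2}, V(r) A_{\pi/2}^{1/2,1/2}\big\rangle_{\mathbb C^2}\mathrm dr = \tfrac14 \int_0^\infty \bigg\langle\binom{-1}{1}, V(r)\binom{-1}{1}\bigg\rangle_{\mathbb C^2}\mathrm dr > 0
\]
by the second assumption of the corollary.

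Finally I would apply Theorem~\ref{t: virtual levels 2d}(1) to $\alpha Q$ in place of $Q$ for arbitrary $\alpha > 0$. Linearity of \eqref{V} in $Q$ replaces $V$ by $\alpha V$, which retains both the integrability and the strict positivity for every $\alpha > 0$, and Theorem~\ref{t: virtual levels 2d}(1) then delivers the required non-emptiness of the negative spectrum of $D^{1/2}(\alpha Q)$. The only mildly delicate step is the case selection in \eqref{A_theta}: the critical identification $|\nu| = |\kappa_0| = 1/2$ puts $\beta$ exactly at the threshold value $0$ with $\kappa_0\neq 0$ and $\theta = \pi/2$, so that $A_{\pi/2}^{\nu,\kappa_0}$ collapses to a constant proportional to $\binom{-1}{1}$; once this observation is made, the remainder is bookkeeping.
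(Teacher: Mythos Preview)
Your proof is correct and follows exactly the route the paper intends: the corollary is stated as an immediate consequence of Theorem~\ref{t: virtual levels 2d}, and your choice $\kappa_0=1/2$ (so that $\beta=0$, $\kappa_0\neq 0$, $\theta=\pi/2$, hence $(1/2,1/2,\pi/2)\in\mathfrak M_{\mathrm I}$ with $A^{1/2,1/2}_{\pi/2}=\tfrac12\binom{-1}{1}$) together with the verification of \eqref{V integrability} and \eqref{VL via A} is precisely the intended unpacking.
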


For $\nu \in[0, 1/2)$ an application of Theorem \ref{t: virtual levels 2d}, part 2 to $D^{\nu}(Q)$ gives a weaker result than that of Theorem 3 in \cite{MorozovMueller}, which gives the following bound on the amount of negative eigenvalues of $D^{\nu}(Q)$: There exists $C^{\mathrm{CLR}}_\nu> 0$ such that
\begin{equation}\label{CLR}
 \rank \big(D^\nu(Q)\big)_- \leqslant C^{\mathrm{CLR}}_\nu\int_{\mathbb R^2} \tr \big(Q_+(\mathbf x)\big)^2\dd\mathbf x.
\end{equation}
Indeed, even for $Q :=R\big(|\cdot|\big)\mathbb I$ with radial non-negative $R$ the hypothesis of Theorem \ref{t: virtual levels 2d} requires $R \in \mathsf L^\infty(\mathbb R_+, r\mathrm dr) +\mathsf L^2(\mathbb R_+, r\mathrm dr)$ and
\begin{align*}
 \int_0^\infty R^q(r)r^{q -1}\mathrm dr <\infty \quad \text{for some }q \in (1, 1 +2\sqrt{1/4 -\nu^2}).
\end{align*}
But then we can write $R =R_1 + R_2$ with $R_1\in \mathsf L^\infty(\mathbb R_+, r\mathrm dr)$, $R_2\in \mathsf L^2(\mathbb R_+, r\mathrm dr)$ and $R_1, R_2\geqslant 0$. Hence we have
\begin{align*}\begin{split}
 &\int_{\mathbb R^2} \tr \big(Q_+(\mathbf x)\big)^2\dd\mathbf x =4\int_0^\infty \big(R_1(r) +R_2(r)\big)^2r\mathrm dr\\ &\leqslant 8\big\|R_1\big\|^{2 -q}_{\mathsf L^\infty(\mathbb R_+, r\mathrm dr)}\int_0^\infty R_1^q(r)r^{q -1}\mathrm dr + 8\int_0^\infty R_2^2(r)r\mathrm dr<\infty,
\end{split}\end{align*}
so \eqref{CLR} implies the statement of Theorem \ref{t: virtual levels 2d}, part 2 under weaker assumptions.

\section{Proof of Theorem \ref{t: self-adjoint realisations}}

The fundamental solution of $d^{\nu, \kappa}\Psi^{\nu, \kappa} =0$
is a linear combination of $\Psi^{\nu, \kappa}_M$ and $\Psi^{\nu, \kappa}_U$ with
\begin{align}\label{Psi_M}
 \Psi^{\nu, \kappa}_M(r) := \begin{cases}\kappa r^\beta\binom{\kappa +\beta}{-\nu},& \text{for }\nu^2\neq \kappa^2 \neq0\text{ and }\kappa \neq -\beta;\\ \beta r^{\beta}\binom01,& \text{for }\beta =-\kappa \in\mathbb R_+;\\ \binom{-\nu}\kappa, & \text{for }\nu^2 =\kappa^2 \neq0;\\ r^{-\mathrm i\nu} \binom1{-\mathrm i}, & \text{for }\kappa =0\end{cases}
 \end{align}
and
\begin{align}\label{Psi_U}
 \Psi^{\nu, \kappa}_U(r) := \begin{cases}
   r^{-\beta}\binom{\nu}{-\kappa -\beta},& \text{for }\beta\in \mathbb R_+\text{ and }\kappa \neq -\beta;\\
   r^{-\beta}\binom10,& \text{for }\beta =-\kappa \in\mathbb R_+;\\
   \kappa r^{-\beta}\binom{\kappa -\beta}{-\nu},& \text{for }\beta\in \mathrm i\mathbb R_+ \text{ and }\kappa \neq0;\\
   \ln r\binom{-\nu}\kappa -\frac1{2\kappa}\binom{\nu}{\kappa},\ & \text{for }\nu^2 =\kappa^2 \neq0;\\
   r^{\mathrm i\nu} \binom1{\mathrm i}, & \text{for }\kappa =0.
 \end{cases}
\end{align}

Since none of the solutions belongs to $\mathsf L^2\big((1, \infty)\big)$, \eqref{Dirac symbol} is always in the limit point case at infinity. For $\beta\geqslant 1/2$, $\Psi^{\nu, \kappa}_U$ does not belong to $\mathsf L^2\big((0, 1)\big)$ and we have a limit point case at zero. Otherwise there is a limit circle case at zero. The rest follows from Theorem 1.5 in \cite{Weidmann1971}.

\section{Proof of Theorem \ref{t:spectral representation}}

We begin by studying the classical solutions of the spectral equation
\begin{equation}\label{spectral equation}
 d^{\nu, \kappa}f =\lambda f
\end{equation}
on the half-line $\mathbb R_+$, where $\lambda\in \mathbb C\setminus \mathrm i\overline{\mathbb R_+}$ is the spectral parameter.

\begin{lem}\label{Lemma solutions dirac equation}
Let $M$ and $U$ be the Kummer functions (see \cite{dlmf}, Section 13.2). For $\lambda =1$ any classical solution of \eqref{spectral equation} is a linear combination of the following two linearly independent functions on $\mathbb R_+$:
\begin{enumerate}
 \item For $\nu^2\neq \kappa^2 \neq0$ and $\beta \neq -\kappa$,
\begin{align*}
 \begin{split}\Phi^{\nu, \kappa}_M(r) &:=r^{\beta}\mathrm e^{-\mathrm ir}\Bigg(\beta(\kappa +\beta +\mathrm i\nu)M(\mathrm i\nu +\beta, 2\beta, 2\mathrm ir)\binom1{-\mathrm i}\\ &+(\nu -\mathrm i\beta)M(1 +\mathrm i\nu +\beta, 1 +2\beta, 2\mathrm ir)\binom\nu{-\kappa -\beta}\Bigg)\end{split}
\end{align*}
and
\begin{align}\label{Phi_U tilde beta neq 0}
 \begin{split}\widetilde\Phi^{\nu, \kappa}_U(r) &:=r^{\beta}\mathrm e^{-\mathrm ir}\Bigg((\kappa +\beta +\mathrm i\nu)U(\mathrm i\nu +\beta, 2\beta, 2\mathrm ir)\binom1{-\mathrm i}\\ &+2(\mathrm i\beta -\nu)U(1 +\mathrm i\nu +\beta, 1 +2\beta, 2\mathrm ir)\binom\nu{-\kappa -\beta}\Bigg);\end{split}
\end{align}
\item for $\beta =-\kappa \in\mathbb R_+$,
\begin{align*}
 \begin{split}\Phi^{\nu, \kappa}_M(r) &:=\beta r^{\beta}\mathrm e^{-\mathrm ir}\Bigg(M(\beta, 2\beta, 2\mathrm ir)\binom{\mathrm i}1 +M(1 +\beta, 1 +2\beta, 2\mathrm ir)\binom{-\mathrm i}0\Bigg),\end{split}
\end{align*}
and
\begin{align}\label{Phi_U tilde beta neq -kappa}
 \begin{split}\widetilde\Phi^{\nu, \kappa}_U(r) &:=r^{\beta}\mathrm e^{-\mathrm ir}\Bigg(U(\beta, 2\beta, 2\mathrm ir)\binom{\mathrm i}1 +U(1 +\beta, 1 +2\beta, 2\mathrm ir)\binom{2\mathrm i\beta}0\Bigg)\end{split}
\end{align}
 \item for $\nu^2= \kappa^2 \neq0$,
\begin{align*}
 \begin{split}\Phi^{\nu, \kappa}_M(r) :=\mathrm e^{-\mathrm ir}\Bigg(M(1 +\mathrm i\nu, 2, 2\mathrm ir)&\bigg((\kappa +\mathrm i\nu)r\binom1{-\mathrm i} -\binom{\nu}{-\kappa}\bigg)\\ &+(\nu -\mathrm i)rM(2 +\mathrm i\nu, 3, 2\mathrm ir)\binom\nu{-\kappa}\Bigg)\end{split}
\end{align*}
and
\begin{align}
 \begin{split}\label{Phi_U tilde beta=0}\widetilde\Phi^{\nu, \kappa}_U(r) :=\mathrm e^{-\mathrm ir}\Bigg(U(1 +\mathrm i\nu, 2, 2\mathrm ir)&\bigg((\kappa +\mathrm i\nu)r\binom1{-\mathrm i} -\binom{\nu}{-\kappa}\bigg)\\ &+2(\mathrm i -\nu)rU(2 +\mathrm i\nu, 3, 2\mathrm ir)\binom\nu{-\kappa}\Bigg);\end{split}
\end{align}
 \item for $\kappa =0$, 
\begin{align}\label{Phi for kappa=0}
 \Phi^{\nu, 0}_M(r) :=r^{-\mathrm i\nu}\mathrm e^{-\mathrm ir}\binom1{-\mathrm i}, \qquad \Phi^{\nu, 0}_U(r) :=r^{\mathrm i\nu}\mathrm e^{\mathrm ir}\binom1{\mathrm i}.
\end{align}

\end{enumerate}
It is convenient to replace $\widetilde\Phi^{\nu, \kappa}_U$ by another solution:\\
For $0 <\beta <1/2$,
\begin{align}\label{Phi_U beta 0..1/2}
 \Phi^{\nu, \kappa}_U(r) := 
\frac{-2^{2\beta -1}\mathrm i\mathrm e^{\mathrm i\pi\beta}\Gamma(\beta +\mathrm i\nu)}{\Gamma(2\beta)}\Big(\widetilde\Phi^{\nu, \kappa}_U(r) +\frac{2\Gamma(-2\beta)}{\Gamma(1 -\beta +\mathrm i\nu)}\Phi^{\nu, \kappa}_M(r)\Big);
\end{align}
for $\beta \geqslant1/2$,
\begin{align*}
 \Phi^{\nu, \kappa}_U(r) := 
\frac{-2^{2\beta -1}\mathrm i\mathrm e^{\mathrm i\pi\beta}\Gamma(\beta +\mathrm i\nu)}{\Gamma(2\beta)}\widetilde\Phi^{\nu, \kappa}_U(r);
\end{align*}
for $\beta \in \mathrm{i}\mathbb R_+$ and $\kappa \neq 0$,
\begin{align}
 \begin{split}\label{Phi_U beta imaginary}
\Phi^{\nu, \kappa}_U(r) &:= \frac{-2^{2\beta -1}\mathrm i\kappa(\kappa -\beta)\mathrm e^{\mathrm i\pi\beta}\Gamma(\beta +\mathrm i\nu)}{\nu\Gamma(2\beta)}\Big(\widetilde\Phi^{\nu, \kappa}_U(r) +\frac{2\Gamma(-2\beta)}{\Gamma(1 -\beta +\mathrm i\nu)}\Phi^{\nu, \kappa}_M(r)\Big);
 \end{split}
\end{align}
and for $\nu^2= \kappa^2 \neq0$,
\begin{align}\label{Phi_U beta=0}
 \Phi^{\nu, \kappa}_U(r) := \Gamma(\mathrm i\nu)\widetilde\Phi^{\nu, \kappa}_U(r) -\Big(\ln2  +2\gamma +\psi(1+\mathrm{i}\nu) +\frac{\mathrm{i}\pi}{2} +\frac{\mathrm{i}}{2 \nu}\Big)\Phi^{\nu, \kappa}_M(r),
\end{align}
where $\psi$ is the digamma function and $\gamma$ is the Euler-Mascheroni constant.
The above solutions allow unique analytic continuations from $\mathbb R_+$ to $\mathbb C\setminus \mathrm i\overline{\mathbb R_+}$.
For arbitrary $\lambda\in \mathbb C\setminus \mathrm i\overline{\mathbb R_+}$ any classical solution to \eqref{spectral equation} is given by a linear combination of $\Phi^{\nu, \kappa}_M(\lambda\,\cdot)$ and $\Phi^{\nu, \kappa}_U(\lambda\,\cdot)$.
\end{lem}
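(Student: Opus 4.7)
The plan is to verify the stated expressions by reducing the system $d^{\nu,\kappa}f = \lambda f$ to Kummer's confluent hypergeometric equation, then to use scaling and analytic continuation to pass from $\lambda = 1$ to general $\lambda \in \mathbb{C}\setminus \mathrm{i}\overline{\mathbb{R}_+}$.

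Scaling first. Under $r \mapsto \lambda r$ the derivative and the Coulomb terms $\nu/r$, $\kappa/r$ both acquire an overall factor $\lambda$, so that $f(r) := g(\lambda r)$ satisfies $d^{\nu,\kappa}f = \lambda f$ whenever $d^{\nu,\kappa}g = g$ on $\mathbb{R}_+$. Thus it is enough to treat $\lambda = 1$. The claimed analytic continuation then follows from the substitution $z = 2\mathrm{i}\lambda r$: $M(a,b,\cdot)$ is entire and $U(a,b,\cdot)$ is holomorphic on $\mathbb{C}\setminus(-\infty, 0]$, and, for $r\in \mathbb R_+$, the preimage of the cut $(-\infty, 0]$ in $z$ is exactly $\lambda \in \mathrm{i}\overline{\mathbb{R}_+}$; the same applies to the prefactor $(2\mathrm{i}\lambda r)^\beta$ arising in the ansatz below.

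For $\lambda = 1$ in the generic case $\nu^2 \neq \kappa^2 \neq 0$, $\beta \neq -\kappa$, I would make the ansatz $f(r) = r^\beta \mathrm{e}^{-\mathrm{i}r}(h_1(r), h_2(r))^\intercal$, which isolates the correct behaviour at $0$ (governed by $\Psi_M^{\nu,\kappa}$, $\Psi_U^{\nu,\kappa}$) and the oscillatory asymptotics at $\infty$. Substituting into the first-order system associated to \eqref{Dirac symbol} and passing to the variable $z = 2\mathrm{i}r$ produces a pair of equations equivalent to two Kummer equations
\begin{equation*}
 zw''(z) + (b - z)w'(z) - aw(z) = 0
\end{equation*}
with parameter pairs $(a,b) = (\mathrm{i}\nu + \beta, 2\beta)$ and $(1 + \mathrm{i}\nu + \beta, 1 + 2\beta)$. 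The two-vector shape of the formulas for $\Phi_M^{\nu,\kappa}$ and $\widetilde\Phi_U^{\nu,\kappa}$ is fixed by the requirement that both components of the original first-order system be solved simultaneously, which is exactly what the Kummer contiguous relations between $M(a,b,\cdot)$ and $M(a+1, b+1, \cdot)$ (and analogously for $U$) provide. The remaining cases reduce to the generic one with minor modifications: for $\beta = -\kappa$ one of the vector factors collapses and is replaced by $(0,1)^\intercal$ or $(1,0)^\intercal$; for $\nu^2 = \kappa^2 \neq 0$ the parameter $b = 2\beta = 0$ becomes integer, so one works instead with the pairs $(a,b) = (1+\mathrm{i}\nu, 2)$ and $(2 +\mathrm{i}\nu, 3)$; and for $\kappa = 0$ the system diagonalises under $(f_1, f_2) \mapsto (f_1 \mp \mathrm{i}f_2)$, yielding the elementary solutions in \eqref{Phi for kappa=0}. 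In each case correctness is confirmed by direct substitution.

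The renormalisations $\Phi_U^{\nu,\kappa}$ in \eqref{Phi_U beta 0..1/2}--\eqref{Phi_U beta=0} are dictated by matching the small-$r$ behaviour of the solution with $\Psi_U^{\nu,\kappa}$. Using the Kummer connection formula
\begin{equation*}
 U(a, b, z) = \frac{\Gamma(1 - b)}{\Gamma(a - b + 1)} M(a, b, z) + \frac{\Gamma(b - 1)}{\Gamma(a)} z^{1 - b} M(a - b + 1, 2 - b, z),
\end{equation*}
$\widetilde\Phi_U^{\nu,\kappa}$ splits into a singular part $\propto r^{-\beta}$ and a regular part proportional to $\Phi_M^{\nu,\kappa}$. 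Subtracting the right multiple of $\Phi_M^{\nu,\kappa}$ isolates the singular branch; the Gamma-factor prefactors are then pinned down by requiring the leading term as $r\to +0$ to equal $\Psi_U^{\nu,\kappa}$. The logarithmic case $\nu^2 = \kappa^2$ requires taking the limit $b \to 1$ in the connection formula, producing the $\ln 2$, $\psi(1 + \mathrm{i}\nu)$ and $\gamma$ terms in \eqref{Phi_U beta=0}. The main obstacle is purely book-keeping: verifying the exact Gamma-function normalisations in each case by expanding the Kummer functions at $z = 0$ and comparing with \eqref{Psi_M}--\eqref{Psi_U}; no conceptual difficulty remains beyond this.
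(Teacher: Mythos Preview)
Your approach is correct and essentially the same as the paper's: both reduce the $\lambda=1$ system via the ansatz $f(r)=r^\beta\mathrm e^{-\mathrm ir}(\cdots)$ and the substitution $z=2\mathrm ir$ to Kummer's equation, then invoke scaling and analyticity of $M$, $U$ on $\mathbb C\setminus(-\infty,0]$ for general $\lambda$. The one stylistic difference is that the paper first applies a constant linear change of dependent variables (sending $(F_1,F_2)$ to $(G_1,G_2)$ with $G_1=\tfrac{\kappa+\beta}{2}F_1+\tfrac{\nu}{2}F_2$, etc.) which decouples the system into the single second-order equation $G_1''+(2\beta/r)G_1'+(1+2\nu/r)G_1=0$ together with $G_2=-G_1'$; this yields one Kummer equation with $(a,b)=(\mathrm i\nu+\beta,2\beta)$, and the second Kummer function in the final formulas then appears only through the derivative relation $G_2=-G_1'$ and the contiguous identities. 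Your version keeps both components in play and produces the two contiguous Kummer equations simultaneously; the content is the same. Your discussion of the renormalisations $\Phi_U^{\nu,\kappa}$ via the connection formula and matching to $\Psi_U^{\nu,\kappa}$ is a useful addition---the paper's proof does not justify those constants at all and defers their purpose to the subsequent lemma on small-$r$ asymptotics.
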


\begin{proof}
The validity of Lemma \ref{Lemma solutions dirac equation} can be checked by a straightforward calculation using functional relations between Kummer functions and their derivatives, see e.g. \cite{dlmf}, 13.3.13--15, 13.3.22. It is, however, more instructive to provide a derivation of the solutions, which we will now do for $\kappa \neq 0$. \\
We start by seeking the solutions to \eqref{spectral equation} with $\lambda =1$ in the form
\begin{align}\label{FG}
f(r) =r^\beta F(r)
\end{align}
obtaining
\begin{equation}\label{Dirac equation FG}
 \begin{pmatrix}
  -\nu/r -1& -\frac{\mathrm d}{\mathrm dr}- \frac{\beta +\kappa}{r}\\ \frac{\mathrm d}{\mathrm dr}+ \frac{\beta -\kappa}{r} & -\nu/r -1
 \end{pmatrix}\binom {F_1(r)}{F_2(r)} =0.
\end{equation}
Introducing
\begin{align}
 \binom {G_1}{G_2} :=\begin{cases}\begin{pmatrix}\frac{\kappa +\beta}2F_1+ \frac\nu2F_2\\ \frac\nu2F_1 -\frac{\kappa +\beta}2F_2\end{pmatrix},& \text{ for }\beta\neq \kappa \neq0,\\ \displaystyle\binom {F_2}{F_1},& \text{ for }\beta= -\kappa \neq0\end{cases}\label{HL}
\end{align}
we obtain that \eqref{Dirac equation FG} is equivalent to
\begin{align}
 G_2 &=-G_1',\label{L}\\ G_1''(r) +\frac{2\beta}rG_1'(r) +\Big(1 +\frac{2\nu}r\Big)G_1(r) &=0.\label{H equation}
\end{align}
We can now find two linearly independent solutions of \eqref{H equation} analytic in $\mathbb C\setminus\mathrm i\overline{\mathbb R_+}$.

For $\beta \neq0$ making the substitution
\begin{align}
 G_1(r) =\mathrm e^{-\mathrm ir}w(2\mathrm ir), \qquad z:= 2\mathrm ir\label{w}
\end{align}
we obtain
\begin{align*}
 zw''(z) +(2\beta -z)w'(z) -(\mathrm i\nu +\beta)w(z) =0, \quad z\in\mathbb C\setminus\overline{\mathbb R_-},
\end{align*}
which is the Kummer equation with parameters $a :=\mathrm i\nu +\beta$, $b :=2\beta$ (see e.g. \cite{dlmf}, 13.2.1).

For $\beta =0$, the substitution
\begin{align}\label{G_1 substitution}
 G_1(r) =r\mathrm e^{-\mathrm ir}v(2\mathrm ir), \qquad z:= 2\mathrm ir
\end{align}
delivers
\begin{align*}
 zv''(z) +(2 -z)v'(z) -(1 +\mathrm i\nu)v(z) =0, \quad z\in\mathbb C\setminus \overline{\mathbb R_-},
\end{align*}
i.e. the Kummer equation with parameters $a :=1 +\mathrm i\nu$, $b :=2$.

In both cases the linearly independent solutions of the Kummer equation can be chosen as the Kummer functions $M(a, b, z)$ and $U(a, b, z)$ (see \cite{dlmf}, Section 13.2).
Substituting back into \eqref{w} (or \eqref{G_1 substitution}), \eqref{L} (together with 13.3.15 and 13.3.22 in \cite{dlmf}), \eqref{HL} and \eqref{FG}, we obtain the two independent solutions of \eqref{spectral equation} with $\lambda =1$ analytic in $\mathbb C\setminus \mathrm i\overline{\mathbb R_+}$ stated in the lemma.
\end{proof}

\begin{lem}\label{lemma asymptotics at zero}
 The solutions $\Phi^{\nu, \kappa}_M$ and $\Phi^{\nu, \kappa}_U$ have the following asymptotics at zero:
\begin{align}\label{Phi_M at zero}
 \Phi^{\nu, \kappa}_M(r) = \begin{cases}\kappa r^\beta\binom{\kappa +\beta}{-\nu} +O(r^{1 +\beta}),&\text{for }\nu^2\neq \kappa^2 \neq0\text{ and }\kappa \neq -\beta;\\ \beta r^{\beta}\binom01 +O(r^{1 +\beta}),& \text{for }\beta =-\kappa \in\mathbb R_+;\\ \binom{-\nu}\kappa +O(r), & \text{for }\nu^2 =\kappa^2 \neq0;\\ r^{-\mathrm i\nu} \binom1{-\mathrm i} +O(r), & \text{for }\kappa =0;\end{cases}
 \end{align}
\begin{align}\label{Phi_U at zero}
 \Phi^{\nu, \kappa}_U(r) = \begin{cases}
   r^{-\beta}\binom{\nu}{-\kappa -\beta} +O(r^{1 -\beta}),& \text{for }\beta\in \mathbb R_+\setminus\{1/2, -\kappa\};\\
   r^{-1/2}\binom\nu{-\kappa -1/2} +O(r^{1/2}\ln r),& \text{for }\beta =1/2 \neq-\kappa;\\
   r^{-\beta}\binom10 +O(r^{1 -\beta}),& \text{for }\beta =-\kappa \in\mathbb R_+\setminus\{1/2\};\\
   r^{-1/2}\binom10 +O(r^{1/2}\ln r),& \text{for }\beta =-\kappa =1/2;\\
   \kappa r^{-\beta}\binom{\kappa -\beta}{-\nu} +O(r)  ,&\text{for }\beta\in \mathrm i\mathbb R_+ \text{ and }\kappa \neq0;\\
   \ln r\binom{-\nu}\kappa -\frac1{2\kappa}\binom{\nu}{\kappa}+O(r\ln r),\ & \text{for }\nu^2 =\kappa^2 \neq0;\\
   r^{\mathrm i\nu} \binom1{\mathrm i} +O(r), & \text{for }\kappa =0.
 \end{cases}
\end{align}
For $\beta\in [0, 1/2)$ and $\kappa \neq 0$ both $\Phi^{\nu, \kappa}_M$ and $\Phi^{\nu, \kappa}_U$ are real-valued, for $\beta\in \mathrm i\mathbb R_+$ or $\kappa=0$ they are complex conjugate of each other.
\end{lem}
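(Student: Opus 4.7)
The plan is to obtain the asymptotics by substituting the standard expansions of the Kummer functions $M(a,b,z)$ and $U(a,b,z)$ at $z=0$ into the defining formulas (with $z = 2\mathrm{i}r$) and collecting the leading terms. Since $M(a,b,z) = 1 + (a/b)z + O(z^2)$ is entire in $z$, each $\Phi^{\nu,\kappa}_M$ is analytic at zero; multiplying by the prefactor $r^\beta \mathrm{e}^{-\mathrm{i}r}$ (or its analogue in the $\nu^2 = \kappa^2$ and $\kappa = 0$ cases) and reading off the constant vector of the leading term immediately yields \eqref{Phi_M at zero} with the stated $O(r^{1+\beta})$ or $O(r)$ remainder; for $\kappa = 0$ the formulas in \eqref{Phi for kappa=0} are already in closed form so the expansion is trivial.

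For $\Phi^{\nu,\kappa}_U$ I would apply the Kummer connection formula (DLMF 13.2.42, valid for $b \notin \mathbb{Z}$),
\begin{equation*}
U(a,b,z) = \frac{\Gamma(1-b)}{\Gamma(a-b+1)}M(a,b,z) + \frac{\Gamma(b-1)}{\Gamma(a)}z^{1-b}M(a-b+1, 2-b, z),
\end{equation*}
to $\widetilde\Phi^{\nu,\kappa}_U$, which decomposes it into an $r^\beta$-regular part (from $M(a,b,z)$) and an $r^{-\beta}$-singular part (from $z^{1-b}M(a-b+1,2-b,z)$). The key observation is that the $\frac{2\Gamma(-2\beta)}{\Gamma(1-\beta+\mathrm{i}\nu)}\Phi^{\nu,\kappa}_M$ subtraction in \eqref{Phi_U beta 0..1/2} and \eqref{Phi_U beta imaginary} is tailored precisely to cancel the $r^\beta$-regular piece, so that $\Phi^{\nu,\kappa}_U$ has a pure $r^{-\beta}$ leading behaviour with no admixture of $r^\beta$; the complex-valued prefactor $\frac{-2^{2\beta-1}\mathrm{i}\mathrm{e}^{\mathrm{i}\pi\beta}\Gamma(\beta+\mathrm{i}\nu)}{\Gamma(2\beta)}$ is then chosen to normalise the leading constant vector to the value stated in \eqref{Phi_U at zero}.

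For the degenerate parameters $\beta = 1/2$ (where $b = 2\beta = 1$) and $\nu^2 = \kappa^2 \neq 0$ (where $b = 2$) the above connection formula must be replaced by its logarithmic analogues (DLMF 13.2.9 and 13.2.13), which account for the $O(r^{1/2}\ln r)$ and $\ln r$ corrections appearing in \eqref{Phi_U at zero}. In the $\nu^2 = \kappa^2$ case the scalar $-(\ln 2 + 2\gamma + \psi(1+\mathrm{i}\nu) + \mathrm{i}\pi/2 + \mathrm{i}/(2\nu))$ in \eqref{Phi_U beta=0} is precisely the constant needed to subtract off the non-logarithmic regular part of $\Gamma(\mathrm{i}\nu)\widetilde\Phi^{\nu,\kappa}_U(r)$ at zero, leaving $\ln r\binom{-\nu}{\kappa} - \frac{1}{2\kappa}\binom{\nu}{\kappa} + O(r\ln r)$.

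The reality and conjugation claims will follow from the real-coefficient nature of $d^{\nu,\kappa}$, which implies that complex conjugation maps classical solutions of $d^{\nu,\kappa}f = f$ to classical solutions, combined with a uniqueness argument in the basis $\{\Phi^{\nu,\kappa}_M, \Phi^{\nu,\kappa}_U\}$. For $\beta \in [0,1/2)$ real and $\kappa \neq 0$, $\Phi^{\nu,\kappa}_M$ and $\Phi^{\nu,\kappa}_U$ have distinct leading orders $r^\beta$ and $r^{-\beta}$ (or constant and $\ln r$ when $\beta = 0$) with real leading coefficients; decomposing $\overline{\Phi^{\nu,\kappa}_M}$ and $\overline{\Phi^{\nu,\kappa}_U}$ in the basis and matching leading and sub-leading behaviour (crucially using that the $r^\beta$-admixture has been killed in $\Phi^{\nu,\kappa}_U$ by construction, so the expansion proceeds in $r^{-\beta+k}$ with $k \in \mathbb{Z}_{\geqslant 0}$) forces both functions to coincide with their conjugates. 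For $\beta \in \mathrm{i}\mathbb{R}_+$ or $\kappa = 0$, conjugation swaps $r^\beta \leftrightarrow r^{-\beta}$ (respectively $r^{\mathrm{i}\nu} \leftrightarrow r^{-\mathrm{i}\nu}$), and the same matching gives $\overline{\Phi^{\nu,\kappa}_M} = \Phi^{\nu,\kappa}_U$. I expect the main obstacle to be the careful bookkeeping in the two logarithmic cases $\beta = 1/2$ and $\beta = 0$, $\nu^2 = \kappa^2$: one must verify that the specific $\Gamma$-function coefficients in the definitions of $\Phi^{\nu,\kappa}_U$ remove all regular-at-zero remainders which would otherwise spoil the clean asymptotics stated in \eqref{Phi_U at zero}.
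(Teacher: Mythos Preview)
Your proposal is correct and follows essentially the same approach as the paper's proof: both derive the asymptotics of $\Phi^{\nu,\kappa}_M$ from the power series of $M(a,b,z)$ (DLMF 13.2.2), use the connection formula DLMF 13.2.42 for $\widetilde\Phi^{\nu,\kappa}_U$ in the non-integer cases and the logarithmic formula DLMF 13.2.9 in the degenerate cases $\beta\in\mathbb N/2$ and $\beta=0$, and note that the constants in \eqref{Phi_U beta 0..1/2}, \eqref{Phi_U beta imaginary}, \eqref{Phi_U beta=0} are chosen precisely to kill the $r^\beta$ (respectively non-logarithmic) contribution. For the reality and conjugation statements the paper argues just as you do, using that complex conjugation preserves the solution space of the real-coefficient system $d^{\nu,\kappa}f=f$; the only cosmetic difference is that the paper phrases this by showing that $\Im\Phi^{\nu,\kappa}_M$, $\Im\Phi^{\nu,\kappa}_U$ (respectively $\Im(\Phi^{\nu,\kappa}_M+\Phi^{\nu,\kappa}_U)$ and $\Re(\Phi^{\nu,\kappa}_M-\Phi^{\nu,\kappa}_U)$) are solutions of order $O(r^{1-\beta})$ (respectively $O(r)$) and hence vanish identically, which is equivalent to your basis-matching argument.
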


\begin{proof}
The asymptotics of $\Phi^{\nu, \kappa}_M$ follows from the definitions of Lemma \ref{Lemma solutions dirac equation} and 13.2.2 of \cite{dlmf}.

For $\beta\in \mathbb R_+\setminus(\mathbb N/2)$ or $\beta\in \mathrm i\mathbb R_+$ and $\kappa \neq 0$,
the expansion of $\widetilde\Phi^{\nu, \kappa}_U$ follows from \eqref{Phi_U tilde beta neq 0} or \eqref{Phi_U tilde beta neq -kappa} and 13.2.42 in \cite{dlmf}. For $\beta\in (0, 1/2)$ and $\beta\in \mathrm i\mathbb R_+$, the definitions \eqref{Phi_U beta 0..1/2} and \eqref{Phi_U beta imaginary} are constructed in such a way that the coefficients at $r^\beta$ in the asymptotics \eqref{Phi_U at zero} are zero.

For $\beta \in \mathbb N/2$ or $\beta =0$ and $\kappa \neq0$ the expansion of $\widetilde\Phi^{\nu, \kappa}_U$ follows from \eqref{Phi_U tilde beta neq 0} or \eqref{Phi_U tilde beta=0} and 13.2.9 in \cite{dlmf}. In \eqref{Phi_U beta=0} the linear combination of $\widetilde\Phi^{\nu, \kappa}_U$ and $\Phi^{\nu, \kappa}_M$ is chosen in such a way that \eqref{Phi_U at zero} holds true.

Since all the entries of \eqref{Dirac symbol} are invariant under complex conjugation, for $\lambda\in \mathbb R$ the real and imaginary parts of any solution to \eqref{spectral equation} are again solutions to \eqref{spectral equation}.

For $\beta\in [0, 1/2)$ and $\kappa \neq 0$, the imaginary parts of $\Phi^{\nu, \kappa}_M$ and $\Phi^{\nu, \kappa}_U$ are solutions of \eqref{spectral equation} with $\lambda =1$, thus linear combinations of $\Phi^{\nu, \kappa}_M$ and $\Phi^{\nu, \kappa}_U$. On the other hand, by \eqref{Phi_M at zero} and \eqref{Phi_U at zero} this imaginary parts are $O(r^{1 -\beta})$. Hence they must vanish identically.
Analogously, for $\beta\in \mathrm i\mathbb R_+$ or $\kappa=0$, the imaginary part of $\Phi^{\nu, \kappa}_M +\Phi^{\nu, \kappa}_U$ and the real part of $\Phi^{\nu, \kappa}_M -\Phi^{\nu, \kappa}_U$ are solutions of \eqref{spectral equation} with $\lambda =1$, thus linear combinations of $\Phi^{\nu, \kappa}_M$ and $\Phi^{\nu, \kappa}_U$. Since this combinations are $O(r)$ by \eqref{Phi_M at zero} and \eqref{Phi_U at zero}, they must be identically zero. Thus $\Phi^{\nu, \kappa}_U =\overline{\Phi^{\nu, \kappa}_M}$ holds.
\end{proof}

\begin{lem}\label{l:c^kappa,nu}
For $\kappa \neq0$, $\nu\in\mathbb R$ and $\lambda \in\mathbb C\setminus(\mathbb R\cup\mathrm i\mathbb R_+)$ let
\begin{align}\label{c^kappa,nu}
  c^{\nu, \kappa}(\lambda):= \begin{cases}
                              c^{\nu, \kappa}_{+,\pm}, &\text{for }\pm\Re\lambda >0,\, \Im\lambda >0;\\
			      c^{\nu, \kappa}_{-}, &\text{for }\Im\lambda <0
                             \end{cases}
\end{align}
with
\begin{align}\label{c_+,pm}
 c^{\nu, \kappa}_{+,\pm}\!\! :=\!\!\begin{cases}
                            \dfrac{\mathrm i2^{2\beta -1}\big|\Gamma(\beta +\mathrm i\nu)\big|^2\mathrm e^{\pm\pi\nu}\mathrm e^{(1 \mp1)\mathrm i\pi\beta}}{\beta\Gamma^2(2\beta)} +\dfrac{\mathrm i2^{2\beta}\Gamma(\beta +\mathrm i\nu)\mathrm e^{\mathrm i\pi\beta}\Gamma(-2\beta)}{\Gamma(2\beta)\Gamma(1 -\beta +\mathrm i\nu)},&\\ \hfill \text{for }\beta \in(0, 1/2);&\\
			    \dfrac{\mathrm i2^{2\beta -1}\big|\Gamma(\beta +\mathrm i\nu)\big|^2\mathrm e^{\pm\pi\nu}\mathrm e^{(1 \mp1)\mathrm i\pi\beta}}{\beta\Gamma^2(2\beta)}, \hfill \text{for }\beta \geqslant 1/2;&\\
			    \dfrac{\mathrm i\kappa(\kappa -\beta)2^{2\beta -1}\Gamma(\beta +\mathrm i\nu)\mathrm e^{\mathrm i\pi\beta}}{\nu\Gamma(2\beta)}\Big(\dfrac{\Gamma(\beta -\mathrm i\nu)\mathrm e^{\pm\pi\nu \mp\mathrm i\pi\beta}}{\beta\Gamma(2\beta)} +\dfrac{2\Gamma(-2\beta)}{\Gamma(1 -\beta +\mathrm i\nu)}\Big),&\\ \hfill\text{ for }\beta \in\mathrm i\mathbb R_+;&\\
			    \Gamma(1 -\mathrm i\nu)\Gamma(\mathrm i\nu)\mathrm e^{\pm\pi\nu} +\ln2  +2\gamma +\psi(1+\mathrm{i}\nu) +\dfrac{\mathrm{i}\pi}{2} +\dfrac{\mathrm{i}}{2 \nu},&\\ \hfill \text{for }\beta =0&
                           \end{cases}
\end{align}
and
\begin{align}\label{c_-}
 c^{\nu, \kappa}_{-} :=\begin{cases}
                            \dfrac{\mathrm i2^{2\beta}\Gamma(\beta +\mathrm i\nu)\mathrm e^{\mathrm i\pi\beta}\Gamma(-2\beta)}{\Gamma(2\beta)\Gamma(1 -\beta +\mathrm i\nu)}, & \text{for }\beta \in(0, 1/2);\\
			    0, & \text{for }\beta \geqslant 1/2;\\
			    \dfrac{\mathrm i\kappa(\kappa -\beta)2^{2\beta}\Gamma(\beta +\mathrm i\nu)\Gamma(-2\beta)\mathrm e^{\mathrm i\pi\beta}}{\nu\Gamma(2\beta)\Gamma(1 -\beta +\mathrm i\nu)}, & \text{for }\beta \in\mathrm i\mathbb R_+;\\
			    \ln2  +2\gamma +\psi(1+\mathrm{i}\nu) +\dfrac{\mathrm{i}\pi}{2} +\dfrac{\mathrm{i}}{2 \nu}, & \text{for }\beta =0.
                           \end{cases}
\end{align}
Then
\begin{align}\label{Phi_infty}
 \Phi_\infty^{\nu, \kappa}(\lambda; r):= \begin{cases}
                                          \Phi_U^{\nu, \kappa}(\lambda r) +c^{\nu, \kappa}(\lambda)\Phi_M^{\nu, \kappa}(\lambda r),& \text{ for }\kappa \neq0,\ \lambda\in \mathbb C\setminus (\mathbb R\cup\mathrm i\mathbb R_+);\\
					  \Phi_U^{\nu, 0}(\lambda r),& \text{ for }\kappa =0,\ \Im\lambda> 0,\ \Re \lambda\neq 0;\\
					  \Phi_M^{\nu, 0}(\lambda r),& \text{ for }\kappa =0,\ \Im\lambda< 0
                                         \end{cases}
\end{align}
is the unique (up to a constant factor) non-trivial solution of \eqref{spectral equation}
which is square integrable at infinity.
\end{lem}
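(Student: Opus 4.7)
The plan is to determine the asymptotics of $\Phi_M^{\nu,\kappa}(\lambda\cdot)$ and $\Phi_U^{\nu,\kappa}(\lambda\cdot)$ near infinity and then choose the unique linear combination in which the exponentially growing mode is annihilated.

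First, I apply the standard large-argument expansions of the Kummer functions $M(a,b,\cdot)$ and $U(a,b,\cdot)$ (Section 13.7 of \cite{dlmf}) to each of the solution formulas in Lemma \ref{Lemma solutions dirac equation}, evaluated at $z=2\mathrm i\lambda r$. The two-term asymptotic for $M(a,b,z)$ contains one summand proportional to $\mathrm e^z z^{a-b}$ and another to $(-z)^{-a}$, whose branch depends on the half-plane of $z$, while $U(a,b,z)$ contributes only the $z^{-a}$ mode. After combining with the prefactor $\mathrm e^{-\mathrm i\lambda r}$, each solution takes the form
\begin{equation*}
 \Phi^{\nu,\kappa}_\star(\lambda r) = A_\star(\lambda)\,\mathrm e^{-\mathrm i\lambda r} r^{p_-}\xi_-(\lambda) + B_\star(\lambda)\,\mathrm e^{\mathrm i\lambda r} r^{p_+}\xi_+(\lambda) + o(1),\qquad \star\in\{M,U\},
\end{equation*}
with $\xi_\pm(\lambda)\in\mathbb C^2$ linearly independent and coefficients $A_\star,\,B_\star$ explicitly expressible through gamma functions.

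For $\Im\lambda>0$ the mode $\mathrm e^{\mathrm i\lambda r}$ decays exponentially, while $\mathrm e^{-\mathrm i\lambda r}$ blows up; the roles reverse for $\Im\lambda<0$. In the upper half-plane I therefore solve $A_U(\lambda)+c\,A_M(\lambda)=0$ for $c$, and in the lower half-plane $B_U(\lambda)+c\,B_M(\lambda)=0$; the unique solution is $c^{\nu,\kappa}(\lambda)$. The split of $c^{\nu,\kappa}_{+,\pm}$ by the sign of $\Re\lambda$ arises solely from the branch prescription for $(-2\mathrm i\lambda r)^{-a}$, which picks up an additional factor $\mathrm e^{\pm 2\pi\mathrm i a}$ as $\Re\lambda$ changes sign while $\lambda$ remains in the upper half-plane; this accounts for the factors $\mathrm e^{\pm\pi\nu}$ and $\mathrm e^{(1\mp 1)\mathrm i\pi\beta}$ in \eqref{c_+,pm} after using the reflection formula and $|\Gamma(\beta+\mathrm i\nu)|^2=\Gamma(\beta+\mathrm i\nu)\Gamma(\beta-\mathrm i\nu)$ for real $\beta$. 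For $\kappa=0$ the two fundamental solutions displayed in \eqref{Phi for kappa=0} are already the pure exponential modes $\mathrm e^{\pm \mathrm i\lambda r}$, so \eqref{Phi_infty} simply selects the decaying one and no mixing is necessary.

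Having exhibited one $\mathsf L^2$-at-infinity solution $\Phi_\infty^{\nu,\kappa}(\lambda;\cdot)$, uniqueness up to a scalar is automatic: the space of classical solutions is two-dimensional, and the limit point case at infinity used in the proof of Theorem \ref{t: self-adjoint realisations} forces the square-integrable subspace at infinity to be at most one-dimensional. Square integrability of the selected combination is immediate since the surviving mode decays like $\mathrm e^{-|\Im\lambda|r}$ times a power. The main obstacle is clean bookkeeping across the four regimes in \eqref{Phi_U beta 0..1/2}--\eqref{Phi_U beta=0}, each with its own renormalisation of $\Phi_U^{\nu,\kappa}$; the case $\beta=0$ is the most delicate, since $U(a,b,z)$ has a logarithmic singularity at $b=2$, and a careful limit in the parameter produces precisely the digamma and logarithmic contributions $\ln 2 + 2\gamma + \psi(1+\mathrm i\nu)+\mathrm i\pi/2+\mathrm i/(2\nu)$ appearing in \eqref{c_+,pm} and \eqref{c_-}. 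Modulo these algebraic reductions the argument is routine.
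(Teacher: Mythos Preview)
Your proposal is correct and follows essentially the same route as the paper: both compute the large-$r$ asymptotics of $\Phi_M^{\nu,\kappa}(\lambda\cdot)$ and $\Phi_U^{\nu,\kappa}(\lambda\cdot)$ via the DLMF expansions of $M$ and $U$, isolate the exponentially growing mode in each half-plane, and fix $c^{\nu,\kappa}(\lambda)$ as the unique coefficient annihilating it, with uniqueness coming from the limit point case at infinity established in the proof of Theorem~\ref{t: self-adjoint realisations}. The paper invokes Theorem~1.4 of \cite{Weidmann1971} for the uniqueness step and is somewhat terser about the case-by-case algebra, but the argument is the same.
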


\begin{proof}
According to 13.7.2 and 13.2.4 in \cite{dlmf}, the following asymptotics hold true for large $r$ provided $\arg\lambda \in (-3\pi/2, \pi/2)$:
\begin{align*}
\begin{split}
 &M(a, b, 2\mathrm i\lambda r) =\big(1 +O(r^{-1})\big)\begin{cases}
                                               \dfrac{\Gamma(b)\mathrm e^{\pm\mathrm i\pi a}(2\mathrm i\lambda r)^{-a}}{\Gamma(b -a)}, &\text{for }\pm\Re\lambda >0,\, \Im\lambda >0;\\
						\dfrac{\Gamma(b)\mathrm e^{2\mathrm i\lambda r}(2\mathrm i\lambda r)^{a -b}}{\Gamma(a)}, &\text{for } \Im\lambda <0,\\
                                              \end{cases}
\end{split}
\end{align*}
unless $a \in -\mathbb N_0$ or $a -b \in \mathbb N_0$ (which cases are not relevant for our calculation).
By 13.7.3 in \cite{dlmf}, $U(a, b, 2\mathrm i\lambda r) =(2\mathrm i\lambda r)^{-a}\big(1 +O(r^{-1})\big)$ for all $a$, $b\in \mathbb C$, $\lambda \in\mathbb C\setminus \overline{\mathrm i\mathbb R_+}$ and $r\gg 1$. This allows us to compute the asymptotics of $\Phi^{\nu, \kappa}_M(\lambda\,\cdot)$ and $\Phi^{\nu, \kappa}_U(\lambda\,\cdot)$ with $\lambda\in \mathbb C\setminus(\mathbb R \cup \mathrm i\mathbb R_+)$ for large positive values of the argument. Since \eqref{Dirac symbol} is in the limit point case at infinity (see the proof of Theorem \ref{t: self-adjoint realisations}), by Theorem 1.4 in \cite{Weidmann1971} for $\lambda\in \mathbb C\setminus(\mathbb R \cup \mathrm i\mathbb R_+)$ there exists a unique (up to a factor) solution $\Phi_\infty^{\nu, \kappa}(\lambda; \cdot)$ to \eqref{spectral equation} which is square integrable at infinity. To construct such a solution it is enough to find a non-trivial linear combination of $\Phi^{\nu, \kappa}_M(\lambda\,\cdot)$ 
and $\Phi^{\nu, \kappa}_U(\lambda\,\cdot)$ for which the coefficient at the non square integrable asymptotic term vanishes. The coefficients in the statement of the lemma are chosen to satisfy this condition.
\end{proof}

We fix a solution to the spectral equation \eqref{spectral equation} satisfying the boundary condition \eqref{real bc} or \eqref{imaginary bc}:
\begin{lem}\label{l:Phi_0}
 For $(\nu, \kappa, \theta)\in \mathfrak M$ every solution of \eqref{spectral equation} satisfying the boundary condition \eqref{real bc} (for $\beta \in\overline{\mathbb R_+}$ and $\kappa \neq0$) or \eqref{imaginary bc} (for $\beta \in\mathrm i\mathbb R_+$ or $\kappa =0$) with $\lambda\in\mathbb C\setminus \mathrm i\overline{\mathbb R_+}$ is proportional to
\begin{align}\label{Phi_0}
 \Phi_{0, \theta}^{\nu, \kappa}(\lambda; r) :=a_\theta^{\nu, \kappa}(\lambda)\Phi^{\nu, \kappa}_U(\lambda r) +b_\theta^{\nu, \kappa}(\lambda)\Phi^{\nu, \kappa}_M(\lambda r),
\end{align}
with
\begin{align}\label{a}
 a_\theta^{\nu, \kappa}(\lambda):= \begin{cases}
                           \lambda^\beta\cos\theta, &\text{ for }\beta \in\overline{\mathbb R_+}\text{ and }\kappa \neq0;\\
			   \lambda^\beta\mathrm e^{\mathrm i\theta}, &\text{ for }\beta \in\mathrm i\mathbb R_+\text{ and }\kappa \neq0;\\
			   \lambda^{-\mathrm i\nu}\mathrm e^{\mathrm i\theta}, &\text{ for }\kappa =0
                          \end{cases}
\end{align}
and
\begin{align}\label{b}
 b_\theta^{\nu, \kappa}(\lambda):= \begin{cases}
                           \lambda^{-\beta}\sin\theta, &\text{ for }\beta \in\mathbb R_+;\\
			   \sin\theta -\cos\theta\ln\lambda, &\text{ for }\beta =0\text{ and }\kappa \neq0;\\
			   \lambda^{-\beta}\mathrm e^{-\mathrm i\theta}, &\text{ for }\beta \in\mathrm i\mathbb R_+\text{ and }\kappa \neq0;\\
			   \lambda^{\mathrm i\nu}\mathrm e^{-\mathrm i\theta}, &\text{ for }\kappa =0.
                          \end{cases}
\end{align}
Here the analytic branches of powers and logarithms of $\lambda$ are fixed by the convention $\arg\lambda \in(-3\pi/2, \pi/2)$. For $\lambda\in\mathbb R\setminus\{0\}$ both components of $\Phi_{0, \theta}^{\nu, \kappa}$ are real-valued.
\end{lem}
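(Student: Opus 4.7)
The plan is to use Lemma \ref{Lemma solutions dirac equation} to write any solution of \eqref{spectral equation} in the form $f = a \Phi_U^{\nu, \kappa}(\lambda \cdot) + b \Phi_M^{\nu, \kappa}(\lambda \cdot)$ with constants $a, b \in \mathbb C$, and then to substitute this ansatz into the appropriate boundary condition~\eqref{real bc} or~\eqref{imaginary bc} and pass to the limit $r \to +0$. This reduces the problem to a single scalar linear equation in $(a, b)$; normalising its one-dimensional solution space case by case will produce the explicit coefficients $a_\theta^{\nu, \kappa}(\lambda)$ and $b_\theta^{\nu, \kappa}(\lambda)$ recorded in \eqref{a}--\eqref{b}.

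The first step is to extract the leading behaviour of $\Phi_U^{\nu, \kappa}(\lambda r)$ and $\Phi_M^{\nu, \kappa}(\lambda r)$ as $r \to +0$ from Lemma \ref{lemma asymptotics at zero}. In the generic subcase $\beta \in (0, 1/2)$, $\kappa \neq 0$ the homogeneity of the leading terms gives $\Phi_U^{\nu, \kappa}(\lambda r) = \lambda^{-\beta} \Psi_U^{\nu, \kappa}(r) + o(r^{-\beta})$ and $\Phi_M^{\nu, \kappa}(\lambda r) = \lambda^{\beta} \Psi_M^{\nu, \kappa}(r) + o(r^{\beta})$, with analogous statements in the remaining subcases. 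Writing $J := \bigl(\begin{smallmatrix} 0 & -1\\ 1 & 0 \end{smallmatrix}\bigr)$, the central computation is then that of the four sesquilinear forms $\bigl\langle J\, \Psi_i^{\nu, \kappa}, \Psi_j^{\nu, \kappa}\bigr\rangle_{\mathbb C^2}$ with $i, j \in \{M, U\}$. In the real cases ($\beta \in [0, 1/2)$, $\kappa \neq 0$) both $\Psi_M^{\nu, \kappa}$ and $\Psi_U^{\nu, \kappa}$ are real-valued by Lemma \ref{lemma asymptotics at zero} and $J$ is antisymmetric, so $\langle J v, v\rangle_{\mathbb R^2} = 0$ forces the diagonal forms to vanish and only the cross-terms survive. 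In the complex cases ($\beta \in \mathrm i \mathbb R_+$ or $\kappa = 0$) the identity $\Psi_U^{\nu, \kappa} = \overline{\Psi_M^{\nu, \kappa}}$ of the same lemma does the opposite: the cross-terms vanish while the diagonal ones contribute. In either situation passing to the limit $r \to +0$ in \eqref{real bc} or \eqref{imaginary bc} leaves a single scalar relation whose solution is precisely $(a, b) \propto \bigl(a_\theta^{\nu, \kappa}(\lambda), b_\theta^{\nu, \kappa}(\lambda)\bigr)$.

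The main obstacle I anticipate is the subcase $\beta = 0$, $\kappa \neq 0$: there the logarithmic term in the asymptotic of $\Psi_U^{\nu, \kappa}$ is homogeneous of degree zero, so the rescaling $r \mapsto \lambda r$ generates an additional $(\ln \lambda)\,\Psi_M^{\nu, \kappa}(r)$ contribution inside the leading asymptotics of $\Phi_U^{\nu, \kappa}(\lambda r)$. Carefully tracking this term through the inner-product calculation is what produces the inhomogeneous piece $-\cos\theta\,\ln \lambda$ in $b_\theta^{\nu, \kappa}(\lambda)$. Finally, the reality statement for $\lambda \in \mathbb R \setminus \{0\}$ in case~1 follows immediately for $\lambda > 0$ from the real-valuedness of $\Phi_M^{\nu, \kappa}$ and $\Phi_U^{\nu, \kappa}$ stated in Lemma \ref{lemma asymptotics at zero}, and for $\lambda < 0$ from the cancellation of the phases $\mathrm e^{\mp \mathrm i \pi \beta}$ (respectively $-\mathrm i \pi$ in the logarithmic case) picked up by the principal branches of $\lambda^{\pm \beta}$ (respectively $\ln \lambda$) against those acquired by $\Phi_U^{\nu, \kappa}(\lambda r)$ and $\Phi_M^{\nu, \kappa}(\lambda r)$ under the analytic continuation from $\mathbb R_+$ across the lower half-plane to the negative real axis.
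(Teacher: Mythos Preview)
Your proposal is correct and follows essentially the same route as the paper's own proof: write the general solution as a linear combination of $\Phi^{\nu,\kappa}_U(\lambda\cdot)$ and $\Phi^{\nu,\kappa}_M(\lambda\cdot)$ via Lemma~\ref{Lemma solutions dirac equation}, insert into the boundary condition \eqref{real bc} or \eqref{imaginary bc}, and read off the coefficients from the asymptotics of Lemma~\ref{lemma asymptotics at zero} together with the explicit formulae \eqref{Psi_M}--\eqref{Psi_U}. Your write-up is simply more explicit about the mechanism (the vanishing of the diagonal $\langle J\Psi_i,\Psi_i\rangle$ pairings in the real case versus the vanishing of the cross pairings in the complex case, and the origin of the $\ln\lambda$ shift when $\beta=0$), whereas the paper compresses all of this into a single sentence; the one case you do not single out, $\beta\geqslant 1/2$, is handled in the paper by noting that then $\theta=\pi/2$ forces $a_\theta^{\nu,\kappa}=0$ and only multiples of $\Phi^{\nu,\kappa}_M(\lambda\cdot)$ are admissible, which is of course consistent with your scheme.
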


\begin{proof}
By the last statement of Lemma \ref{Lemma solutions dirac equation} any solution to \eqref{spectral equation} is a linear combination of $\Phi^{\nu, \kappa}_U(\lambda\,\cdot)$ and $\Phi^{\nu, \kappa}_M(\lambda\,\cdot)$.
For $\beta \geqslant 1/2$ (hence $\theta =\pi/2$), the only admissible solutions are multiples of $\Phi^{\nu, \kappa}_M(\lambda\cdot)$. 
Otherwise, substituting the general solution into \eqref{real bc} (or \eqref{imaginary bc}, respectively), we conclude the statement of the lemma from the asymptotics \eqref{Phi_M at zero}, \eqref{Phi_U at zero} and \eqref{Psi_M}, \eqref{Psi_U}.
\end{proof}

\begin{lem}\label{l:spectrum of D}
 For any $(\nu, \kappa, \theta)\in \mathfrak M$ the operator $D^{\nu, \kappa}_\theta$ has no eigenvalues.
\end{lem}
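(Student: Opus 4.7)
\emph{Proof plan.} Since $D^{\nu,\kappa}_\theta$ is self-adjoint by Theorem \ref{t: self-adjoint realisations}, every eigenvalue must be real. I plan to rule out every $\lambda\in\mathbb R$ by showing that no nontrivial classical solution of \eqref{spectral equation} can belong to $\mathsf L^2(\mathbb R_+, \mathbb C^2)$; the boundary condition at zero will not be needed.

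For $\lambda\neq 0$ I would combine the large-argument expansions of the Kummer functions (DLMF 13.7.2 and 13.7.3) with the explicit formulas for $\Phi^{\nu,\kappa}_M$ and $\Phi^{\nu,\kappa}_U$ from Lemma \ref{Lemma solutions dirac equation} to derive asymptotic expansions of the form
\begin{equation*}
 \Phi^{\nu,\kappa}_\bullet(\lambda r) = A_\bullet\binom{1}{-\mathrm i}\mathrm e^{-\mathrm i\lambda r}r^{-\mathrm i\nu} + B_\bullet\binom{1}{\mathrm i}\mathrm e^{\mathrm i\lambda r}r^{\mathrm i\nu} + O(r^{-1}), \quad r\to\infty,
\end{equation*}
for $\bullet\in\{M,U\}$, with explicit scalars $A_\bullet, B_\bullet$ depending on $\lambda, \nu, \kappa$ (the case $\kappa=0$ is immediate from \eqref{Phi for kappa=0} and needs no unwinding). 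Both asymptotic modes oscillate with constant amplitude since $r^{\pm\mathrm i\nu}$ has unit modulus for $\nu\in\mathbb R$. Writing a general classical solution as $c_M\Phi^{\nu,\kappa}_M(\lambda\,\cdot) + c_U\Phi^{\nu,\kappa}_U(\lambda\,\cdot)$, square integrability near infinity then forces
\begin{equation*}
 c_M A_M + c_U A_U = 0 \quad\text{and}\quad c_M B_M + c_U B_U = 0.
\end{equation*}
A short computation gives $W\big[\binom{1}{\mathrm i}\mathrm e^{\mathrm i\lambda r}r^{\mathrm i\nu},\binom{1}{-\mathrm i}\mathrm e^{-\mathrm i\lambda r}r^{-\mathrm i\nu}\big]=-2\mathrm i$, so the determinant $A_M B_U - A_U B_M$ equals $W[\Phi^{\nu,\kappa}_M,\Phi^{\nu,\kappa}_U]/(2\mathrm i)$, which is nonzero by the linear independence asserted in Lemma \ref{Lemma solutions dirac equation} (and computed explicitly in Lemma \ref{l: Wronskian}, to which constancy of the Dirac Wronskian applies because the coefficient matrix of the first-order system has vanishing trace). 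Hence $c_M=c_U=0$.

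For $\lambda=0$ the classical solutions are the linear combinations of $\Psi^{\nu,\kappa}_M$ and $\Psi^{\nu,\kappa}_U$ from \eqref{Psi_M}, \eqref{Psi_U}, and a direct case-check rules out any nontrivial member of $\mathsf L^2(\mathbb R_+,\mathbb C^2)$: for $\beta\in(0,1/2)$ both $r^{\beta}$ and $r^{-\beta}$ fail at infinity; for $\beta\geqslant 1/2$ (so $\theta=\pi/2$), $\Psi^{\nu,\kappa}_M\sim r^\beta$ fails at infinity while $\Psi^{\nu,\kappa}_U\sim r^{-\beta}$ fails near zero; for $\beta\in\mathrm i\mathbb R_+$ or $\kappa=0$ both fundamental solutions have bounded nonzero modulus at infinity; and for $\nu^2=\kappa^2\neq 0$ one is constant while the other grows logarithmically. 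The main technical obstacle is the careful bookkeeping needed to unwind the Kummer asymptotics through the somewhat elaborate definitions of $\Phi^{\nu,\kappa}_M$ and $\Phi^{\nu,\kappa}_U$ until the polarization vectors collapse to the clean free-Dirac eigenvectors $\binom{1}{\pm\mathrm i}$; thereafter the Wronskian identity closes the argument without further calculation.
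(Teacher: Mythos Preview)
Your proposal is correct and follows essentially the same approach as the paper: for $\lambda\neq 0$ you use the large-argument asymptotics of the Kummer functions (the paper simply points back to the computation in Lemma~\ref{l:c^kappa,nu}) to see that every nontrivial linear combination of $\Phi^{\nu,\kappa}_M(\lambda\,\cdot)$ and $\Phi^{\nu,\kappa}_U(\lambda\,\cdot)$ is oscillatory with nonvanishing amplitude at infinity, and for $\lambda=0$ you inspect $\Psi^{\nu,\kappa}_M$, $\Psi^{\nu,\kappa}_U$ directly. Your explicit use of the Wronskian identity to guarantee that the $2\times 2$ linear system for $(c_M,c_U)$ is nonsingular is a clean way to rule out accidental cancellation of the leading asymptotic terms; the paper leaves this implicit. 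Your case analysis at $\lambda=0$ is in fact slightly more careful than the paper's one-line remark, since for $\beta>1/2$ the solution $\Psi^{\nu,\kappa}_U$ actually is square integrable at infinity and must be excluded by its behaviour near zero, as you correctly note.
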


\begin{proof}
Assume that $\lambda\in \mathbb R\setminus\{0\}$ is an eigenvalue of $D^{\nu, \kappa}_\theta$. Then the corresponding eigenfunction must be a linear combination of $\Phi^{\nu, \kappa}_M(\lambda\,\cdot)$ and $\Phi^{\nu, \kappa}_U(\lambda\,\cdot)$ by Lemma \ref{Lemma solutions dirac equation}. No such non-trivial linear combination, however, can be square integrable, as follows from  analysis of asymptotics at infinity similar to the one from the proof of Lemma \ref{l:c^kappa,nu}. Analogously, for $\lambda =0$ any solution of \eqref{spectral equation} is a linear combination of \eqref{Psi_M} and \eqref{Psi_U} which are not square integrable at infinity.
\end{proof}

\begin{lem}\label{l: Wronskian}
Let $(\nu, \kappa, \theta) \in\mathfrak M$. For any two solutions $f =\displaystyle \binom{f_1}{f_2}$, $g =\displaystyle \binom{g_1}{g_2}$ of \eqref{spectral equation} the function
\begin{align}\label{W}
 W\big[f, g\big](r):= f_1(r)g_2(r) -f_2(r)g_1(r)
\end{align}
does not depend on $r\in \mathbb R_+$. Moreover, the formulae
\begin{align}
 \begin{split}\label{W of good solutions}
 &W\big[\Phi_\infty^{\nu, \kappa}(\lambda; \cdot), \Phi_{0, \theta}^{\nu, \kappa}(\lambda; \cdot)\big]=\\ &W[\Phi_M^{\nu, \kappa}, \Phi_U^{\nu, \kappa}]\begin{cases}
  c^{\nu, \kappa}(\lambda)a_\theta^{\nu, \kappa}(\lambda) -b_\theta^{\nu, \kappa}(\lambda), &\text{ for }\kappa \neq0,\ \lambda\in \mathbb C\setminus (\mathbb R\cup\mathrm i\mathbb R_+);\\
  -b_\theta^{\nu, 0}(\lambda),& \text{ for }\kappa =0,\ \Im\lambda> 0,\ \Re \lambda\neq 0;\\
   a_\theta^{\nu, 0}(\lambda),& \text{ for }\kappa =0,\ \Im\lambda< 0
\end{cases}
 \end{split}
\end{align}
and
\begin{align}\label{W_MU}
 W[\Phi_M^{\nu, \kappa}, \Phi_U^{\nu, \kappa}] =\begin{cases}
  -2\beta\kappa(\kappa +\beta), &\text{ for }\beta\in \mathbb R_+\setminus\{-\kappa\};\\
  -\beta, &\text{ for }\beta =-\kappa\in \mathbb R_+;\\
  \nu,& \text{ for }\beta =0 \text{ and } \kappa \neq0;\\
  -2\kappa^2\nu\beta,& \text{ for }\beta\in \mathrm i\mathbb R_+\text{ and } \kappa \neq0;\\
  2\mathrm i,& \text{ for }\kappa =0
\end{cases}
\end{align}
hold with the coefficients defined in Lemmata \ref{l:c^kappa,nu} and \ref{l:Phi_0}.
\end{lem}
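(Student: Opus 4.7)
My plan is to establish the three statements in order: constancy of the Wronskian, the identity \eqref{W of good solutions}, and the explicit values \eqref{W_MU}.

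First, to prove that $W[f,g](r)$ is independent of $r$ whenever $f,g$ both satisfy \eqref{spectral equation}, I would differentiate $W[f,g]$ and substitute the ODE. Writing the system componentwise,
\begin{equation*}
f_1'(r) = \Bigl(\lambda +\tfrac{\nu}{r}\Bigr)f_2(r) +\tfrac{\kappa}{r}f_1(r), \qquad f_2'(r) = -\Bigl(\lambda +\tfrac{\nu}{r}\Bigr)f_1(r) -\tfrac{\kappa}{r}f_2(r),
\end{equation*}
and similarly for $g$, one checks that $W[f,g]' = f_1'g_2 + f_1 g_2' - f_2' g_1 - f_2 g_1'$ reduces to a sum that cancels term by term (the coefficients of $\lambda +\nu/r$ and of $\kappa/r$ each sum to zero). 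A preliminary lemma worth noting is that if $F$ solves \eqref{spectral equation} with $\lambda =1$, then $r\mapsto F(\lambda r)$ solves it with spectral parameter $\lambda$, and its Wronskian with another rescaled solution $r\mapsto G(\lambda r)$ equals the constant $W[F,G]$. This legitimises treating $W[\Phi_M^{\nu, \kappa}, \Phi_U^{\nu, \kappa}]$ in \eqref{W_MU} as a $\lambda$-independent number.

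For part 2, I would simply use bilinearity and antisymmetry of $W$. Inserting the expansions \eqref{Phi_infty} and \eqref{Phi_0} gives, for $\kappa\neq 0$ and $\lambda\in \mathbb C\setminus (\mathbb R \cup \mathrm i\mathbb R_+)$,
\begin{equation*}
W\bigl[\Phi_\infty^{\nu, \kappa}(\lambda;\cdot),\Phi_{0,\theta}^{\nu, \kappa}(\lambda;\cdot)\bigr] = W\bigl[\Phi_U^{\nu, \kappa}(\lambda\cdot) + c^{\nu, \kappa}(\lambda)\Phi_M^{\nu, \kappa}(\lambda\cdot),\, a_\theta^{\nu, \kappa}(\lambda)\Phi_U^{\nu, \kappa}(\lambda\cdot) + b_\theta^{\nu, \kappa}(\lambda)\Phi_M^{\nu, \kappa}(\lambda\cdot)\bigr],
\end{equation*}
which expands to $\bigl(c^{\nu, \kappa}(\lambda)a_\theta^{\nu, \kappa}(\lambda) - b_\theta^{\nu, \kappa}(\lambda)\bigr)W[\Phi_M^{\nu, \kappa},\Phi_U^{\nu, \kappa}]$ after using $W[\Phi_U,\Phi_M] = -W[\Phi_M,\Phi_U]$. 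The two cases with $\kappa=0$ are even shorter since $\Phi_\infty^{\nu, 0}$ is either $\Phi_U^{\nu, 0}(\lambda\cdot)$ or $\Phi_M^{\nu, 0}(\lambda\cdot)$.

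For part 3, since the Wronskian is constant in $r$, I would evaluate it by sending $r\to 0^+$ using the asymptotics \eqref{Phi_M at zero}, \eqref{Phi_U at zero} from Lemma \ref{lemma asymptotics at zero}, and retaining only the leading terms that survive the product. For instance, in the generic case $\nu^2 \neq \kappa^2\neq 0$ with $\beta\in\mathbb R_+\setminus\{-\kappa\}$ the factors $r^\beta$ and $r^{-\beta}$ cancel and one gets
\begin{equation*}
W[\Phi_M^{\nu, \kappa},\Phi_U^{\nu, \kappa}] = \kappa(\kappa+\beta)\cdot(-\kappa-\beta) -(-\kappa\nu)\cdot\nu = \kappa\bigl[\nu^2-(\kappa+\beta)^2\bigr] = -2\beta\kappa(\kappa+\beta),
\end{equation*}
using $\nu^2 -\kappa^2 =-\beta^2$. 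The remaining cases are analogous: in the $\nu^2 =\kappa^2$ case the $\ln r$ contributions cancel because both components of $\Phi_M^{\nu, \kappa}$ are multiplied by the same $\ln r$-coefficient of $\Phi_U^{\nu, \kappa}$, leaving only the finite term $\nu$; in the $\beta =-\kappa$ case the two asymptotic vectors are orthogonal along coordinate axes, so only the off-diagonal product $-\beta\cdot r^\beta\cdot r^{-\beta}$ remains; the $\beta\in \mathrm i\mathbb R_+$ case is a direct computation; and for $\kappa =0$ the solutions are already explicit powers times constant vectors, giving $W = 2\mathrm i$.

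The only mild obstacle is bookkeeping in the logarithmic case $\nu^2 =\kappa^2\neq 0$, where one must verify that the $\ln r$ terms cancel exactly so that the Wronskian is finite and equal to $\nu$; all other cases reduce to short algebraic identities once the leading asymptotics of $\Phi_M^{\nu, \kappa}$ and $\Phi_U^{\nu, \kappa}$ are plugged in. The bounds $\beta\in\{1/2,\,-\kappa=1/2\}$, where subleading $r^{1/2}\ln r$ terms appear in $\Phi_U^{\nu, \kappa}$, pose no issue because those terms vanish in the $r\to 0^+$ limit of the Wronskian.
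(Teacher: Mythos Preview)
Your proposal is correct and follows exactly the paper's approach: differentiate $W$ and use the system to get constancy, substitute \eqref{Phi_infty} and \eqref{Phi_0} into $W$ to get \eqref{W of good solutions}, and evaluate the constant Wronskian by taking $r\to 0^+$ via the asymptotics of Lemma~\ref{lemma asymptotics at zero} to obtain \eqref{W_MU}. You simply flesh out the computations that the paper summarises in one sentence each.
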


\begin{proof}
 The independence of \eqref{W} from $r\in\mathbb R_+$ follows from \eqref{spectral equation} by a straightforward computation. Hence the analyticity of $\Phi_M^{\nu, \kappa}$ and $\Phi_U^{\nu, \kappa}$ in $\mathbb C\setminus\mathrm i\overline{\mathbb R_+}$ implies that the function $W[\Phi_M^{\nu, \kappa}, \Phi_U^{\nu, \kappa}]$ is constant on $\mathbb C\setminus\mathrm i\overline{\mathbb R_+}$. Relation \eqref{W of good solutions} follows by substituting \eqref{Phi_infty} and \eqref{Phi_0} into \eqref{W}. At last, \eqref{W_MU} follows from \eqref{W} by passing to the limit $r\to +0$ and using Lemma \ref{lemma asymptotics at zero}.
\end{proof}

\begin{lem}\label{l: Green function}
Let $(\nu, \kappa, \theta) \in\mathfrak M$. For $\lambda \in \mathbb C\setminus(\mathbb R\cup \mathrm i\mathbb R_+)$, the Green function $G^{\nu, \kappa}_\theta: \mathbb R\times(\mathbb R_+)^2 \to \mathbb C^{2\times2}$ of $D^{\nu, \kappa}_\theta$, i.e. the integral kernel of $(D^{\nu, \kappa}_\theta -\lambda\mathbb I)^{-1}$,
is given by
\begin{align}\label{Green function}
 G^{\nu, \kappa}_\theta(\lambda; x, y) :=\frac1{W\big[\Phi_\infty^{\nu, \kappa}(\lambda; \cdot), \Phi_{0, \theta}^{\nu, \kappa}(\lambda; \cdot)\big]}
 \begin{cases}
  \Phi_{0, \theta}^{\nu, \kappa}(\lambda; x)\big(\Phi_\infty^{\nu, \kappa}(\lambda; y)\big)^\intercal, &\text{for }x <y;\\
  \Phi_\infty^{\nu, \kappa}(\lambda; x)\big(\Phi_{0, \theta}^{\nu, \kappa}(\lambda; y)\big)^\intercal, &\text{for }x \geqslant y.
 \end{cases}
\end{align}
\end{lem}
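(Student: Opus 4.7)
The plan is to verify that the integral operator $T_\lambda$ with kernel \eqref{Green function} coincides with $(D^{\nu, \kappa}_\theta - \lambda\mathbb I)^{-1}$, which amounts to checking that for each $f$ in the dense subspace $\mathsf C_0^\infty(\mathbb R_+, \mathbb C^2)$ of $\mathsf L^2(\mathbb R_+, \mathbb C^2)$ the function $u := T_\lambda f$ lies in $\mathfrak D(D^{\nu,\kappa}_\theta)$ and satisfies $(D^{\nu,\kappa}_\theta - \lambda\mathbb I) u = f$. The first step is to observe that the Wronskian $W\bigl[\Phi_\infty^{\nu,\kappa}(\lambda; \cdot), \Phi_{0,\theta}^{\nu,\kappa}(\lambda;\cdot)\bigr]$ appearing in the denominator of \eqref{Green function} does not vanish for $\lambda \in \mathbb C \setminus (\mathbb R \cup \mathrm i\mathbb R_+)$: it is a well-defined constant by Lemma \ref{l: Wronskian}, and its vanishing would force $\Phi_\infty^{\nu,\kappa}(\lambda; \cdot)$ to be proportional to $\Phi_{0,\theta}^{\nu,\kappa}(\lambda; \cdot)$, producing an $\mathsf L^2$ eigenfunction of $D^{\nu,\kappa}_\theta$ at the non-real value $\lambda$ and contradicting the self-adjointness from Theorem \ref{t: self-adjoint realisations}.

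For $f \in \mathsf C_0^\infty(\mathbb R_+, \mathbb C^2)$ the candidate resolvent evaluates to
\[
 u(x) = \frac{1}{W\bigl[\Phi_\infty^{\nu,\kappa}(\lambda;\cdot), \Phi_{0,\theta}^{\nu,\kappa}(\lambda;\cdot)\bigr]}\Big(\Phi_\infty^{\nu,\kappa}(\lambda; x) \int_0^x \!\bigl(\Phi_{0,\theta}^{\nu,\kappa}(\lambda; y)\bigr)^\intercal f(y)\, \mathrm dy + \Phi_{0,\theta}^{\nu,\kappa}(\lambda; x) \int_x^\infty \!\bigl(\Phi_\infty^{\nu,\kappa}(\lambda; y)\bigr)^\intercal f(y)\, \mathrm dy\Big).
\]
Each factor is locally absolutely continuous on $\mathbb R_+$, hence so is $u$. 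Applying $d^{\nu,\kappa} - \lambda$ componentwise, the terms in which the derivative falls on $\Phi_\infty^{\nu,\kappa}$ or $\Phi_{0,\theta}^{\nu,\kappa}$ cancel by \eqref{spectral equation}, while the pointwise contributions from differentiating the variable limits of the integrals via the fundamental theorem of calculus combine, through the antidiagonal structure of the first-order part of $d^{\nu,\kappa}$ in \eqref{Dirac symbol}, into $W\bigl[\Phi_\infty^{\nu,\kappa}, \Phi_{0,\theta}^{\nu,\kappa}\bigr] f(x)$. This cancels the prefactor and yields $(d^{\nu,\kappa} - \lambda) u = f$ pointwise.

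It remains to show $u \in \mathfrak D(D^{\nu,\kappa}_\theta)$. Fix $[a, b] \subset \mathbb R_+$ containing $\supp f$. For $x < a$, $u(x)$ is a fixed scalar multiple of $\Phi_{0,\theta}^{\nu,\kappa}(\lambda; x)$, which by Lemma \ref{l:Phi_0} satisfies the boundary condition \eqref{real bc} or \eqref{imaginary bc} at zero and is $\mathsf L^2$ near $0$ (automatically in the limit point case; by construction in the limit circle case, where both $\Phi^{\nu,\kappa}_M$ and $\Phi^{\nu,\kappa}_U$ are $\mathsf L^2$ near zero). For $x > b$, $u(x)$ is a fixed scalar multiple of $\Phi_\infty^{\nu,\kappa}(\lambda; x)$, which is $\mathsf L^2$ near infinity by Lemma \ref{l:c^kappa,nu}. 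Continuity on the compact interval $[a, b]$ gives $\mathsf L^2$ membership there as well, so $u \in \mathsf L^2(\mathbb R_+, \mathbb C^2) \cap \mathfrak D\bigl((D^{\nu,\kappa})^*\bigr)$ and satisfies the relevant boundary condition; hence $u \in \mathfrak D(D^{\nu,\kappa}_\theta)$. Since $\lambda$ belongs to the resolvent set of the self-adjoint operator $D^{\nu,\kappa}_\theta$, the identity $T_\lambda f = (D^{\nu,\kappa}_\theta - \lambda\mathbb I)^{-1} f$ on the dense subspace $\mathsf C_0^\infty(\mathbb R_+, \mathbb C^2)$ extends by continuity, identifying $G^{\nu,\kappa}_\theta$ as the integral kernel of the resolvent. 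The main technical step is the bookkeeping in the pointwise differential-equation computation, where one must track carefully how the two boundary-term contributions from the fundamental theorem of calculus reassemble into exactly the scalar $W\bigl[\Phi_\infty^{\nu,\kappa}, \Phi_{0,\theta}^{\nu,\kappa}\bigr]$ times $f$.
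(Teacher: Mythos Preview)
Your proof is correct and follows essentially the same approach as the paper's: define the candidate resolvent by the integral formula, verify for $f\in\mathsf C_0^\infty(\mathbb R_+,\mathbb C^2)$ that the result lies in $\mathfrak D(D^{\nu,\kappa}_\theta)$ and solves $(d^{\nu,\kappa}-\lambda)u=f$, then extend by density. You are more explicit than the paper in two places---the nonvanishing of the Wronskian (which the paper defers to Lemma~\ref{l:no poles}, using the same eigenvalue-contradiction argument you give) and the pointwise computation showing the boundary terms reassemble into $W[\Phi_\infty^{\nu,\kappa},\Phi_{0,\theta}^{\nu,\kappa}]f$---while the paper is slightly more careful in the final extension step, arguing pointwise convergence of $R^{\nu,\kappa}_\theta(\lambda)f_n$ together with $\mathsf L^2$-convergence of $(D^{\nu,\kappa}_\theta-\lambda\mathbb I)^{-1}f_n$ to conclude the integral representation holds for every $f\in\mathsf L^2$, not just that the two bounded operators agree.
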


\begin{proof}
Fix $\lambda \in \mathbb C\setminus(\mathbb R\cup \mathrm i\mathbb R_+)$. For $f\in\mathsf L^2(\mathbb R_+, \mathbb C^2)$ and $x \in\mathbb R_+$ let
\begin{align}\begin{split}\label{resolvent in action}
 \big(R^{\nu, \kappa}_\theta(\lambda) f\big)(x) &:=\frac{\Phi_\infty^{\nu, \kappa}(\lambda; x)}{W\big[\Phi_\infty^{\nu, \kappa}(\lambda; \cdot), \Phi_{0, \theta}^{\nu, \kappa}(\lambda; \cdot)\big]}\int_0^x\big(\Phi_{0, \theta}^{\nu, \kappa}(\lambda; y)\big)^\intercal f(y)\mathrm dy \\&+\frac{\Phi_{0, \theta}^{\nu, \kappa}(\lambda; x)}{W\big[\Phi_\infty^{\nu, \kappa}(\lambda; \cdot), \Phi_{0, \theta}^{\nu, \kappa}(\lambda; \cdot)\big]}\int_x^\infty \big(\Phi_\infty^{\nu, \kappa}(\lambda; y)\big)^\intercal f(y)\mathrm dy.
\end{split}
\end{align}
Since $\Phi_{0, \theta}^{\nu, \kappa}(\lambda;\, \cdot)\in \mathsf L^2\big((0, x)\big)$ and $\Phi_\infty^{\nu, \kappa}(\lambda;\, \cdot)\in \mathsf L^2\big((x, \infty)\big)$, the integrals on the right hand side of \eqref{resolvent in action} converge and the map $f \mapsto\big(R^{\nu, \kappa}_\theta(\lambda) f\big)(x)$ is continuous on $\mathsf L^2(\mathbb R_+, \mathbb C^2)$.
It is easy to observe that for all $f\in \mathsf C^\infty_0(\mathbb R_+, \mathbb C^2)$ the function $R^{\nu, \kappa}_\theta(\lambda) f$
belongs to the domain of $D^{\nu, \kappa}_\theta$ and satisfies
\begin{align*}
(D^{\nu, \kappa}_\theta -\lambda\mathbb I)R^{\nu, \kappa}_\theta(\lambda) f =f.
\end{align*}
Now for arbitrary $f\in\mathsf L^2(\mathbb R_+, \mathbb C^2)$ let $(f_n)_{n\in\mathbb N}$ be a sequence of $\mathsf C^\infty_0(\mathbb R_+, \mathbb C^2)$-functions converging to $f$ in $\mathsf L^2(\mathbb R_+, \mathbb C^2)$.
Applying \eqref{resolvent in action} to $f_n$ and passing to the limit $n \to\infty$ we observe that $R^{\nu, \kappa}_\theta(\lambda) f_n\underset{n\to\infty}\longrightarrow R^{\nu, \kappa}_\theta(\lambda) f$ pointwise in $\mathbb R_+$. On the other hand, since $\lambda$ belongs to the resolvent set of $D^{\nu, \kappa}_\theta$,
\begin{align*}
R^{\nu, \kappa}_\theta(\lambda) f_n =(D^{\nu, \kappa}_\theta -\lambda\mathbb I)^{-1}f_n \overset{\mathsf L^2(\mathbb R_+, \mathbb C^2)}{\underset{n\to\infty}\longrightarrow} (D^{\nu, \kappa}_\theta -\lambda\mathbb I)^{-1}f.
\end{align*}
This implies that $(D^{\nu, \kappa}_\theta -\lambda\mathbb I)^{-1}f =R^{\nu, \kappa}_\theta(\lambda) f =\int_{\mathbb R_+}G^{\nu, \kappa}_\theta(\lambda; \cdot, y)f(y)\mathrm dy$.
\end{proof}

\begin{lem}\label{l:no poles}
For all $(\nu, \kappa, \theta)\in \mathfrak M$ with $\kappa \neq0$ the following functions are continuous and have no zeros (see Lemmata \ref{l:c^kappa,nu}, \ref{l:Phi_0} for definitions):
\begin{enumerate}
 \item $c^{\nu, \kappa}_{+,\pm}a_\theta^{\nu, \kappa}(\cdot) -b_\theta^{\nu, \kappa}(\cdot)$, on $(\pm\mathbb R_+) +\mathrm i\overline{\mathbb R_+}$;
 \item $c^{\nu, \kappa}_{-}a_\theta^{\nu, \kappa}(\cdot) -b_\theta^{\nu, \kappa}(\cdot)$, on $\big(\mathbb R -\mathrm i\overline{\mathbb R_+}\big)\setminus\{0\}$.
\end{enumerate}
\end{lem}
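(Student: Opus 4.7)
My plan for the proof is to split the analysis according to whether $\lambda$ lies in the interior (an open half-plane) or on the real boundary, handling each by a distinct technique. Continuity of both functions will follow directly from the explicit formulas: the constants $c^{\nu,\kappa}_{+,\pm}$ and $c^{\nu,\kappa}_-$ contribute nothing, while $a_\theta^{\nu,\kappa}$ and $b_\theta^{\nu,\kappa}$ involve only the elementary functions $\lambda^{\pm\beta}$ (and $\ln\lambda$ for $\beta=0$) built from the branch $\arg\lambda\in(-3\pi/2,\pi/2)$, which is continuous on each of the two closed half-planes in question (both of which avoid the cut $\mathrm i\overline{\mathbb R_+}$). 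The main reduction for non-vanishing is \eqref{W of good solutions} combined with $W[\Phi_M^{\nu,\kappa},\Phi_U^{\nu,\kappa}]\neq 0$ from \eqref{W_MU}: these identify $c^{\nu,\kappa}(\lambda)a_\theta^{\nu,\kappa}(\lambda)-b_\theta^{\nu,\kappa}(\lambda)$ with a nonzero scalar multiple of $W[\Phi_\infty^{\nu,\kappa}(\lambda;\cdot),\Phi_{0,\theta}^{\nu,\kappa}(\lambda;\cdot)]$, so the task becomes showing linear independence of these two solutions on the relevant domain.

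In the open upper or lower half-plane I would invoke self-adjointness of $D_\theta^{\nu,\kappa}$. A zero of the Wronskian would force $\Phi_\infty^{\nu,\kappa}(\lambda;\cdot)$ and $\Phi_{0,\theta}^{\nu,\kappa}(\lambda;\cdot)$ to be proportional; since $\Phi_\infty$ is square integrable at infinity by Lemma~\ref{l:c^kappa,nu} and, by Lemma~\ref{lemma asymptotics at zero}, both $\Phi_M^{\nu,\kappa}$ and $\Phi_U^{\nu,\kappa}$ (and hence $\Phi_{0,\theta}^{\nu,\kappa}$) are square integrable near zero whenever $\kappa\neq 0$ and $\beta\in[0,1/2)\cup\mathrm{i}\mathbb R_+$, any such common solution would belong to $\mathsf L^2(\mathbb R_+,\mathbb C^2)$; Lemma~\ref{l:Phi_0} then gives that it satisfies the boundary condition defining $\mathfrak D(D_\theta^{\nu,\kappa})$, turning it into an eigenfunction at $\lambda\in\mathbb C\setminus\mathbb R$, contradicting self-adjointness. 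The leftover case $\beta\geqslant 1/2$ (where $\theta=\pi/2$ is forced) is trivial because $a_\theta^{\nu,\kappa}\equiv 0$ and $b_\theta^{\nu,\kappa}(\lambda)=\lambda^{-\beta}\neq 0$.

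On the real boundary $\lambda\in\mathbb R\setminus\{0\}$ the self-adjointness argument fails since $\Phi_\infty^{\nu,\kappa}(\lambda;\cdot)$ is no longer $\mathsf L^2$ at infinity, and I will instead compute directly. For $\lambda>0$ the values $a_\theta^{\nu,\kappa}(\lambda)$ and $b_\theta^{\nu,\kappa}(\lambda)$ are real (for real $\theta$ and $\beta\in\mathbb R_+$), so vanishing of $ca-b$ reduces to $\Im c^{\nu,\kappa}=0$; for $\lambda<0$ the branch gives $\lambda^{\pm\beta}=|\lambda|^{\pm\beta}\mathrm e^{\mp\mathrm i\pi\beta}$ and $\ln\lambda=\ln|\lambda|-\mathrm i\pi$, so vanishing reduces to $\Im(c^{\nu,\kappa}\mathrm e^{-2\mathrm i\pi\beta})=0$ (respectively $\Im c^{\nu,\kappa}=\pi$ in the subcase $\beta=0$). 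The plan is to evaluate each relevant imaginary part in closed form: for $\beta\in(0,1/2)\cup\mathrm{i}\mathbb R_+$ the reflection identity $\Gamma(z)\Gamma(1-z)=\pi/\sin(\pi z)$ rewrites $\Gamma(-2\beta)/\Gamma(1-\beta+\mathrm i\nu)$ in terms of $|\Gamma(\beta+\mathrm i\nu)|^2$ and $\sin(\pi(\beta-\mathrm i\nu))$, whose real/imaginary expansion together with $\sin(2\pi\beta)=2\sin(\pi\beta)\cos(\pi\beta)$ collapses the target imaginary part to a compact expression proportional to $\pm\mathrm e^{\pm\pi\nu}|\Gamma(\beta+\mathrm i\nu)|^2/(\beta\Gamma(2\beta)^2)$ with definite sign; for $\beta=0$ the identities $\psi(1+\mathrm i\nu)=\psi(\mathrm i\nu)+1/(\mathrm i\nu)$ and $\psi(1-\mathrm i\nu)-\psi(\mathrm i\nu)=-\mathrm i\pi\coth(\pi\nu)$ yield $\Im\psi(1+\mathrm i\nu)=\pi\coth(\pi\nu)/2-1/(2\nu)$, after which the target quantities simplify to expressions like $\pi(\coth(\pi\nu)+1)/2$ or $-\pi\mathrm e^{-\pi\nu}/(2\sinh(\pi\nu))$, manifestly nonzero under $\nu\neq 0$ (forced by $\kappa\neq 0$ and $\beta=0$). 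The main technical obstacle is precisely this bookkeeping of phases and signs in the $\lambda<0$ cases: notably for $\beta=0$ the naive candidate $\Im c^{\nu,\kappa}_{+,-}$ actually does vanish at $\nu=\ln 2/(2\pi)$, so care is required to establish non-vanishing of the correct quantity $\Im c^{\nu,\kappa}_{+,-}-\pi$ and the analogous phase-shifted quantities for $\beta\in(0,1/2)\cup\mathrm{i}\mathbb R_+$.
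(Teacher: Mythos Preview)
Your plan is essentially the paper's proof: continuity from the explicit formulae, non-vanishing off the real axis via the Wronskian identity \eqref{W of good solutions} and self-adjointness of $D^{\nu,\kappa}_\theta$, and direct computation on $\mathbb R\setminus\{0\}$ using the reflection formula and the digamma identities. Your reductions for $\beta\in(0,1/2)$ and $\beta=0$ (including the observation that for $\lambda<0$ one must check $\Im(c\,\mathrm e^{-2\mathrm i\pi\beta})\neq 0$ and $\Im c\neq\pi$ respectively) are exactly \eqref{Imaginary conditions} and \eqref{Imaginary conditions 2} in the paper.

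There is one genuine gap. For $\beta\in\mathrm i\mathbb R_+$ you correctly note that $a_\theta^{\nu,\kappa},b_\theta^{\nu,\kappa}$ are \emph{not} real, so your imaginary-part reduction does not apply; yet you then fold this case into the same computational sentence and claim the outcome is ``proportional to $\pm\mathrm e^{\pm\pi\nu}|\Gamma(\beta+\mathrm i\nu)|^2/(\beta\Gamma(2\beta)^2)$ with definite sign''. For imaginary $\beta$ the denominator $\beta\Gamma(2\beta)^2$ is complex, so this expression is not real and has no sign. The correct reduction here is a \emph{modulus} condition: since $|a_\theta^{\nu,\kappa}(\lambda)|=|b_\theta^{\nu,\kappa}(\lambda)|^{-1}$ equals $1$ for $\lambda>0$ and $\mathrm e^{\pi|\beta|}$ for $\lambda<0$, vanishing of $ca-b$ forces $|c|=1$ or $|c|=\mathrm e^{2\pi\mathrm i\beta}$ respectively. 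The paper then computes $|c^{\nu,\kappa}_{+,\pm}|^2$ and $|c^{\nu,\kappa}_-|^2$ in closed form (via the same reflection identity you cite, giving ratios of $\sinh(\pi(\nu\mp\mathrm i\beta))$) and checks these values are never attained. Once you replace ``imaginary part'' by ``modulus'' in this subcase, your plan goes through.
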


\begin{proof}
By Definitions \eqref{a}, \eqref{b} the functions in question are analytic in $\mathbb C\setminus \mathrm i\overline{\mathbb R_+}$. 
Moreover, according to \eqref{c^kappa,nu}, \eqref{W of good solutions} and \eqref{W_MU} they are proportional to $W\big[\Phi_\infty^{\nu, \kappa}(\lambda; \cdot), \Phi_{0, \theta}^{\nu, \kappa}(\lambda; \cdot)\big]$ in the interiors of $(\pm\mathbb R_+) +\mathrm i\overline{\mathbb R_+}$ and $\big(\mathbb R -\mathrm i\overline{\mathbb R_+}\big)\setminus\{0\}$, respectively, with non-vanishing coefficients. But then they cannot vanish there, since otherwise $\Phi_{0, \theta}^{\nu, \kappa}(\lambda; \cdot)$ would be proportional to $\Phi_\infty^{\nu, \kappa}(\lambda; \cdot)$ and thus an eigenfunction of the self-adjoint operator $D^{\nu, \kappa}_\theta$ with the eigenvalue $\lambda\not\in\mathbb R$.

It remains to prove that there are no zeros for $\lambda\in \mathbb R\setminus\{0\}$. Recall that by our convention from Lemma \ref{l:Phi_0}, for any $\gamma\in \mathbb C$
\begin{align*}
 \lambda^\gamma =\begin{cases}
                 |\lambda|^\gamma, &\text{for }\lambda >0;\\ \mathrm e^{-\mathrm i\pi\gamma}|\lambda|^\gamma, &\text{for }\lambda <0
                \end{cases}
\end{align*}
holds.

For $\beta \geqslant \pi/2$ we have $\theta =\pi/2$ and the statement follows immediately.

Suppose now that $\beta\in (0, 1/2)$. In this case it is enough to show that
\begin{align}\label{Imaginary conditions}
\Im (\mathrm e^{(\pm1 -1)\mathrm i\pi\beta}c^{\nu, \kappa}_{+,\pm}) \neq0\quad\text{and}\quad\Im (\mathrm e^{(\pm1 -1)\mathrm i\pi\beta}c^{\nu, \kappa}_{-}) \neq0. 
\end{align}
This relations are implied by \eqref{c_+,pm}, \eqref{c_-} and
\begin{align*}
 \Im\Big(\frac{\mathrm i2^{2\beta}\Gamma(\beta +\mathrm i\nu)\mathrm e^{\pm\mathrm i\pi\beta}\Gamma(-2\beta)}{\Gamma(2\beta)\Gamma(1 -\beta +\mathrm i\nu)}\Big)= -\dfrac{2^{2\beta -1}\big|\Gamma(\beta +\mathrm i\nu)\big|^2\mathrm e^{\pm\pi\nu}}{2\beta\Gamma^2(2\beta)},
\end{align*}
which follows from the properties of the gamma function (see 5.5.3 in \cite{dlmf}).

For $\beta =0$ it is enough to establish that
\begin{align}\label{Imaginary conditions 2}
\Im c^{\nu, \kappa}_{+,\pm} \neq \pi(1 \mp1)/2 \quad\text{and}\quad\Im c^{\nu, \kappa}_{-} \neq \pi(1 \mp1)/2,
\end{align}
which follows from \eqref{c_+,pm}, \eqref{c_-} and the relations
\begin{align*}
 \Gamma(1- \mathrm i\nu)\Gamma(\mathrm i\nu)\mathrm e^{\pm\pi\nu} &=-\mathrm i\pi\coth(\pi\nu) \mp\mathrm i\pi,\\
 \Im\Big(\psi(1 +\mathrm i\nu) +\frac{\mathrm i}{2\nu}\Big) &=\frac\pi2\coth(\pi\nu),
\end{align*}
see 5.5.3 and 5.4.18 in \cite{dlmf}.

For $\beta \in\mathrm i\mathbb R_+$, the claim follows from
\begin{align*}
|c^{\nu, \kappa}_{+,\pm}| \neq \mathrm e^{(1 \mp1)\pi\mathrm i\beta} \quad\text{and}\quad |c^{\nu, \kappa}_{-}| \neq \mathrm e^{(1 \mp1)\pi\mathrm i\beta},
\end{align*}
implied by \eqref{c_+,pm}, \eqref{c_-} and
\begin{align*}
 |c^{\nu, \kappa}_{+,\pm}|^2 =\mathrm e^{(2 \mp4)\pi\mathrm i\beta}\frac{\sinh\big(\pi(\nu -\mathrm i\beta)\big)}{\sinh\big(\pi(\nu +\mathrm i\beta)\big)}, \quad |c^{\nu, \kappa}_{-}|^2 =\mathrm e^{2\pi\mathrm i\beta}\frac{\sinh\big(\pi(\nu +\mathrm i\beta)\big)}{\sinh\big(\pi(\nu -\mathrm i\beta)\big)},
\end{align*}
see 5.4.3 in \cite{dlmf}.
\end{proof}

\begin{lem}\label{l: boundedness of G on compacta}
Let $I$, $J$ be compact subsets of $\mathbb R_+$ and $\mathbb R\setminus\{0\}$, respectively. Then for every $x >0$ there exist $C^{\nu, \kappa}_{\theta, \pm}(I, J; x)\in \mathbb R_+$ such that
\begin{align*}
 \sup_{(\lambda, y) \in (J\pm \mathrm i(0, 1])\times I}\big\|G^{\nu, \kappa}_\theta(\lambda; x, y)\big\|_{\mathbb C^{2\times 2}} \leqslant C^{\nu, \kappa}_{\theta, \pm}(I, J; x)
\end{align*}
holds.
\end{lem}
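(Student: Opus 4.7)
The proof has two independent components: a strictly positive lower bound on the modulus of the Wronskian in the denominator of \eqref{Green function} and a uniform upper bound on the rank-one matrix numerator, both on the compact set $K_\pm := J \pm \mathrm i [0, 1]$ (which contains $J \pm \mathrm i (0, 1]$).

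For the denominator I would invoke formulae \eqref{W of good solutions} and \eqref{W_MU} of Lemma \ref{l: Wronskian}: since $W[\Phi_M^{\nu,\kappa}, \Phi_U^{\nu,\kappa}]$ is a non-zero constant, it suffices to bound the $\lambda$-dependent factor from below in modulus on $K_\pm$. For $\kappa \neq 0$, on each of the three regions $\{\Re\lambda > 0\} \cap K_+$, $\{\Re\lambda < 0\} \cap K_+$ and $K_-$ this factor takes the form $c\, a_\theta^{\nu, \kappa}(\lambda) - b_\theta^{\nu, \kappa}(\lambda)$ with the appropriate constant $c \in \{c^{\nu, \kappa}_{+, +}, c^{\nu, \kappa}_{+, -}, c^{\nu, \kappa}_{-}\}$ from \eqref{c^kappa,nu}. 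Writing $J_\pm := J \cap \mathbb R_\pm$ (both compact and bounded away from $0$), the set $J_\pm + \mathrm i [0, 1]$ is a compact subset of $(\pm \mathbb R_+) + \mathrm i\overline{\mathbb R_+}$, while $K_-$ is a compact subset of $(\mathbb R - \mathrm i\overline{\mathbb R_+})\setminus\{0\}$; Lemma \ref{l:no poles} furnishes continuity together with absence of zeros on each such domain, which yields the sought strictly positive lower bound. For $\kappa = 0$, \eqref{W of good solutions} and \eqref{W_MU} give, up to the factor $\pm 2\mathrm i$, the value $b_\theta^{\nu, 0}(\lambda) = \lambda^{\mathrm i\nu}\mathrm e^{-\mathrm i\theta}$ on $K_+$ and $a_\theta^{\nu, 0}(\lambda) = \lambda^{-\mathrm i\nu}\mathrm e^{\mathrm i\theta}$ on $K_-$; under the convention $\arg\lambda \in (-3\pi/2, \pi/2)$ the modulus $|\lambda^{\mp\mathrm i\nu}| = \mathrm e^{\pm\nu\arg\lambda}$ is a continuous, strictly positive function of $\lambda$ on $K_\pm$, hence bounded below by compactness.

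For the numerator I would expand $\Phi_{0, \theta}^{\nu, \kappa}(\lambda; \cdot)$ and $\Phi_\infty^{\nu, \kappa}(\lambda; \cdot)$ according to \eqref{Phi_0} and \eqref{Phi_infty} as linear combinations of $\Phi_M^{\nu, \kappa}(\lambda r)$ and $\Phi_U^{\nu, \kappa}(\lambda r)$. The scalar coefficients $a_\theta^{\nu, \kappa}$, $b_\theta^{\nu, \kappa}$, $c^{\nu, \kappa}$ from Lemmata \ref{l:c^kappa,nu} and \ref{l:Phi_0} are continuous in $\lambda$ on the half-planes containing $K_\pm$, hence bounded on $K_\pm$. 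By Lemma \ref{Lemma solutions dirac equation} the functions $\Phi_M^{\nu, \kappa}$ and $\Phi_U^{\nu, \kappa}$ are analytic on $\mathbb C \setminus \mathrm i\overline{\mathbb R_+}$. Since $J$ is bounded away from $0$, $\Re(\lambda r) \neq 0$ for every $\lambda \in K_\pm$ and every $r \in I \cup \{x\}$; the continuous map $(\lambda, r) \mapsto \lambda r$ therefore sends the compact set $K_\pm \times (I \cup \{x\})$ into a compact subset of $\mathbb C \setminus \mathrm i\overline{\mathbb R_+}$, on which $\Phi_M^{\nu, \kappa}$ and $\Phi_U^{\nu, \kappa}$ are uniformly bounded by continuity.

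Combining these two estimates via \eqref{Green function} together with the elementary bound $\|u v^\intercal\|_{\mathbb C^{2\times 2}} \leqslant \|u\|_{\mathbb C^2}\|v\|_{\mathbb C^2}$ yields the desired uniform estimate on $\|G^{\nu, \kappa}_\theta(\lambda; x, y)\|_{\mathbb C^{2\times 2}}$. I foresee no real obstacle beyond careful bookkeeping of branches and subcases; the essential analytic content --- non-vanishing of the Wronskian up to and including the real axis --- has already been isolated in Lemma \ref{l:no poles}.
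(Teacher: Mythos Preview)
Your argument is correct and is essentially the paper's own proof spelled out in detail: the paper simply observes that, by the analyticity of $\Phi_M^{\nu,\kappa}$, $\Phi_U^{\nu,\kappa}$ and the continuity and non-vanishing results of Lemmata~\ref{l:c^kappa,nu}, \ref{l:Phi_0}, \ref{l:no poles}, the function $G_\theta^{\nu,\kappa}(\cdot;x,\cdot)$ extends continuously from $(J\pm\mathrm i(0,1])\times I$ to the compact $(J\pm\mathrm i[0,1])\times I$, whence the bound. Your explicit separation into numerator and denominator estimates, and your direct treatment of the case $\kappa=0$ (which Lemma~\ref{l:no poles} does not cover), make the reasoning more transparent but do not differ in substance.
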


\begin{proof}
The functions $\Phi^{\nu, \kappa}_M$ and $\Phi^{\nu, \kappa}_U$ are analytic in $\mathbb C\setminus \mathrm i\overline{\mathbb R_+}$. Hence by \eqref{Green function}, \eqref{W of good solutions}, \eqref{W_MU}, and Lemmata \ref{l:c^kappa,nu}, \ref{l:Phi_0} and \ref{l:no poles} for every $x >0$ the function $G^{\nu, \kappa}_\theta(\cdot; x, \cdot)$ allows a unique continuous extension from $\big(J\pm \mathrm i(0, 1]\big)\times I$ to the compact $\big(J\pm \mathrm i[0, 1]\big)\times I$. The statement of the lemma follows.
\end{proof}

\begin{lem}\label{t: spectral function}
Consider the map $E^{\nu, \kappa}_\theta: \mathbb R\times \mathbb R_+^2\to \mathbb C^{2\times 2}$,
\begin{align*}
 E^{\nu, \kappa}_\theta(\lambda; x,y) :=\frac1{2\pi\mathrm i}\big(G^{\nu, \kappa}_\theta(\lambda +\mathrm i0; x,y) -G^{\nu, \kappa}_\theta(\lambda -\mathrm i0; x,y)\big).
\end{align*}
Then
\begin{align}\label{E simplified}
 E^{\nu, \kappa}_\theta(\lambda; x,y) =m^{\nu, \kappa}_\theta(\lambda)\Phi_{0, \theta}^{\nu, \kappa}(\lambda; x)\big(\Phi_{0, \theta}^{\nu, \kappa}(\lambda; y)\big)^\intercal
\end{align}
holds with
\begin{align}\begin{split}\label{m}
 &m^{\nu, \kappa}_\theta(\lambda):= \\ &\dfrac{\big(c^{\nu, \kappa}(\lambda -\mathrm i0) -c^{\nu, \kappa}(\lambda +\mathrm i0)\big)}{2\pi\mathrm iW[\Phi_M^{\nu, \kappa}, \Phi_U^{\nu, \kappa}]\big(c^{\nu, \kappa}(\lambda +\mathrm i0)a_\theta^{\nu, \kappa}(\lambda) -b_\theta^{\nu, \kappa}(\lambda)\big)\big(c^{\nu, \kappa}(\lambda -\mathrm i0)a_\theta^{\nu, \kappa}(\lambda) -b_\theta^{\nu, \kappa}(\lambda)\big)},
\end{split}
\end{align}
for $\kappa \neq0$ and 
\begin{align}\label{m^0}
m^{\nu, 0}_\theta(\lambda) :=(4\pi)^{-1}.
\end{align}
For any bounded interval $\mathcal I\subset \mathbb R\setminus\{0\}$ the spectral projection on $\mathcal I$ of $D^{\nu, \kappa}_\theta$ is given by
\begin{align}\label{spectral projection on intervals}
P_{\mathcal I}(D^{\nu, \kappa}_\theta)f =\int_{\mathcal I}\int_0^\infty E^{\nu, \kappa}_\theta(\lambda; \cdot,y)f(y)\,\mathrm dy\,\mathrm d\lambda
\end{align}
for every compactly supported $f\in \mathsf L^2(\mathbb R_+, \mathbb C^2)$.
\end{lem}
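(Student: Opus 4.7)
\medskip\noindent\textbf{Proof proposal.} My plan is to derive \eqref{E simplified} by direct manipulation of the Green function formula \eqref{Green function}, then to obtain \eqref{spectral projection on intervals} from Stone's formula, using the uniform bound in Lemma \ref{l: boundedness of G on compacta} to justify interchanging the boundary-value limit with the integrations.

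For the first part, observe that under the branch convention $\arg\lambda\in(-3\pi/2,\pi/2)$ used throughout Lemma \ref{l:Phi_0}, the coefficients $a_\theta^{\nu,\kappa}(\lambda)$ and $b_\theta^{\nu,\kappa}(\lambda)$ are single-valued across $\mathbb R\setminus\{0\}$, while $\Phi^{\nu,\kappa}_M(\lambda\,\cdot)$ and $\Phi^{\nu,\kappa}_U(\lambda\,\cdot)$ extend analytically across it by Lemma \ref{Lemma solutions dirac equation}; hence $\Phi^{\nu,\kappa}_{0,\theta}(\lambda;\cdot)$ takes the same value whether $\lambda\in\mathbb R\setminus\{0\}$ is approached from above or below. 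The only jump of $G^{\nu,\kappa}_\theta$ on the real axis thus originates from the discontinuity of $c^{\nu,\kappa}$ in \eqref{Phi_infty}. Writing $c_\pm:=c^{\nu,\kappa}(\lambda\pm\mathrm i0)$ and inserting $\Phi^{\nu,\kappa}_\infty(\lambda\pm\mathrm i0;\cdot)=\Phi^{\nu,\kappa}_U(\lambda\,\cdot)+c_\pm\Phi^{\nu,\kappa}_M(\lambda\,\cdot)$ into both pieces of \eqref{Green function}, the elementary telescoping identity
\[
(c_- a_\theta -b_\theta)(\Phi_U +c_+\Phi_M) -(c_+ a_\theta -b_\theta)(\Phi_U +c_-\Phi_M) =(c_- -c_+)(a_\theta\Phi_U +b_\theta\Phi_M)
\]
together with \eqref{W of good solutions} collapses $(2\pi\mathrm i)^{-1}\bigl(G^{\nu,\kappa}_\theta(\lambda+\mathrm i0;x,y)-G^{\nu,\kappa}_\theta(\lambda-\mathrm i0;x,y)\bigr)$ to the rank-one product appearing in \eqref{E simplified}, with the scalar prefactor \eqref{m}; an analogous manipulation handles the $x<y$ subcase of \eqref{Green function}. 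For $\kappa=0$ no coefficient $c^{\nu,\kappa}$ appears: \eqref{Phi_infty} gives $\Phi^{\nu,0}_\infty(\lambda+\mathrm i0;\cdot)=\Phi^{\nu,0}_U(\lambda\,\cdot)$ and $\Phi^{\nu,0}_\infty(\lambda-\mathrm i0;\cdot)=\Phi^{\nu,0}_M(\lambda\,\cdot)$, while $W[\Phi^{\nu,0}_M,\Phi^{\nu,0}_U]=2\mathrm i$ by \eqref{W_MU}; combining this with the branch-convention identity $a^{\nu,0}_\theta(\lambda)\,b^{\nu,0}_\theta(\lambda)=1$, valid for both signs of $\lambda\in\mathbb R\setminus\{0\}$, yields $m^{\nu,0}_\theta\equiv(4\pi)^{-1}$ in a two-line computation.

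For \eqref{spectral projection on intervals}, I appeal to Stone's formula: since Lemma \ref{l:spectrum of D} rules out eigenvalues, $P_{\mathcal I}(D^{\nu,\kappa}_\theta)$ equals the strong-operator limit $\lim_{\varepsilon\to+0}(2\pi\mathrm i)^{-1}\int_{\mathcal I}\bigl((D^{\nu,\kappa}_\theta-\lambda-\mathrm i\varepsilon)^{-1}-(D^{\nu,\kappa}_\theta-\lambda+\mathrm i\varepsilon)^{-1}\bigr)\,\mathrm d\lambda$, with no endpoint half-weight corrections. Pairing this expression with a compactly supported test function and expressing each resolvent via Lemma \ref{l: Green function}, the uniform majorant supplied by Lemma \ref{l: boundedness of G on compacta} (whose applicability rests on Lemma \ref{l:no poles} to exclude vanishing of the Wronskian on the relevant closed strip) lets dominated convergence push $\varepsilon\to+0$ inside the triple integral and deliver \eqref{spectral projection on intervals}. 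The main obstacle is essentially clerical: the telescoping step and the identity $a^{\nu,0}_\theta\,b^{\nu,0}_\theta\equiv1$ must be verified against each sub-case of Lemmata \ref{l:c^kappa,nu} and \ref{l:Phi_0} (in particular checking $\lambda^{-\mathrm i\nu}\lambda^{\mathrm i\nu}=1$ for $\lambda<0$ under the chosen branch), but no analytical idea beyond this is required.
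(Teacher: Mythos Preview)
Your proposal is correct and follows essentially the same route as the paper: derive \eqref{E simplified} by inserting \eqref{Phi_infty}, \eqref{Phi_0} and \eqref{W of good solutions} into \eqref{Green function}, then obtain \eqref{spectral projection on intervals} from Stone's formula (legitimised by Lemma \ref{l:spectrum of D}) together with dominated convergence via Lemma \ref{l: boundedness of G on compacta}. Your write-up is actually more explicit than the paper's, which merely cites these ingredients without displaying the telescoping identity or the $\kappa=0$ computation.
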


\begin{proof}
The representations \eqref{E simplified} and \eqref{m} follow from Lemma \ref{l: Green function} together with \eqref{W of good solutions}, \eqref{Phi_0} and \eqref{Phi_infty}.
By Lemma \ref{l:spectrum of D} $D^{\nu, \kappa}_\theta$ has no eigenvalues. Hence for any compactly supported $f\in \mathsf L^2(\mathbb R_+, \mathbb C^2)$ by the Stone's formula (see e.g. Theorem 4.3 in \cite{TeschlMathematicalMethods}) and Lemma \ref{l: Green function} we have
\begin{align*}\begin{split}
P_{\mathcal I}(D^{\nu, \kappa}_\theta)f =\lim_{\varepsilon\to +0}\int_{\mathcal I}\int_0^\infty\frac{G^{\nu, \kappa}_\theta(\lambda +\mathrm i\varepsilon; \cdot, y) -G^{\nu, \kappa}_\theta(\lambda -\mathrm i\varepsilon; \cdot, y)}{2\pi\mathrm i}f(y)\,\mathrm dy\,\mathrm d\lambda.
\end{split}\end{align*}
By Lemma \ref{l: boundedness of G on compacta} and dominated convergence we can interchange the limit and the integration obtaining \eqref{spectral projection on intervals}.
\end{proof}

\begin{proof}[Proof of Theorem \ref{t:spectral representation}]
For $n\in\mathbb N$ let $\mathcal E_n := (-n, -n^{-1})\cup (n^{-1}, n)$. For compactly supported $f\in \mathsf L^2(\mathbb R_+, \mathbb C^2)$ we observe that by Lemma \ref{l:spectrum of D}, Fubini's theorem, \eqref{spectral projection on intervals} and \eqref{E simplified}
\begin{align*}\begin{split}
 \|f\|_{\mathsf L^2(\mathbb R_+, \mathbb C^2)}^2 &=\lim_{n\to\infty}\big\|P_{\mathcal E_n}(D^{\nu, \kappa}_\theta)f\big\|_{\mathsf L^2(\mathbb R_+, \mathbb C^2)}^2 =\lim_{n\to\infty}\langle f, P_{\mathcal E_n}(D^{\nu, \kappa}_\theta)f\rangle_{\mathsf L^2(\mathbb R_+, \mathbb C^2)}\\ &=\bigg\|\sqrt{m^{\nu, \kappa}_\theta(\cdot)}\int_0^\infty \big(\Phi_{0, \theta}^{\nu, \kappa}(\cdot; y)\big)^\intercal f(y)\,\mathrm dy\bigg\|_{\mathsf L^2(\mathbb R, \mathbb C)}^2
\end{split}\end{align*}
holds. Thus $\mathcal U^{\nu, \kappa}_\theta$ is well defined by \eqref{U} and is isometric, i.e. $(\mathcal U^{\nu, \kappa}_\theta)^*\mathcal U^{\nu, \kappa}_\theta =\mathbb I_{\mathsf L^2(\mathbb R_+, \mathbb C^2)}$.

Now for any $f\in\mathfrak D(D^{\nu, \kappa}_\theta)$ integrating by parts with the help of Theorem \ref{t: self-adjoint realisations} we obtain
\begin{align}\begin{split}\label{U D intertwining}
     &\mathcal U^{\nu, \kappa}_\theta D^{\nu, \kappa}_\theta f =\underset{R\to\infty}\Ltwolim\bigg(\sqrt{m^{\nu, \kappa}_\theta(\cdot)}\int_{1/R}^R \big(\Phi_{0, \theta}^{\nu, \kappa}(\cdot; y)\big)^\intercal D^{\nu, \kappa}_\theta f(y)\,\mathrm dy\bigg) \\&= \underset{R\to\infty}\Ltwolim \bigg((\cdot)\sqrt{m^{\nu, \kappa}_\theta(\cdot)}\int_{1/R}^R \big(\Phi_{0, \theta}^{\nu, \kappa}(\cdot; y)\big)^\intercal f(y)\,\mathrm dy\bigg) =\Lambda\,\mathcal U^{\nu, \kappa}_\theta f,
             \end{split}
\end{align}
hence \eqref{spectral representation}.

It remains to prove that $\mathcal U^{\nu, \kappa}_\theta$ is surjective. We denote by $P_{\theta}^{\nu, \kappa}$ the orthogonal projector in $\mathsf L^2(\mathbb R, \mathbb C)$ onto 
the range of $\mathcal U^{\nu, \kappa}_\theta$. According to \eqref{U D intertwining} the operator 
\begin{align*}
                 \Lambda^{\nu, \kappa}_\theta : \mathcal U^{\nu, \kappa}_\theta\mathfrak{D}(D^{\nu, \kappa}_\theta)
                 \rightarrow P_{\theta}^{\nu, \kappa}\mathsf{L}^2(\mathbb{R},\mathbb{C});\quad \Lambda^{\nu, \kappa}_\theta g:=
                 \Lambda g
\end{align*} 
is well-defined and self-adjoint in $P_{\theta}^{\nu, \kappa}\mathsf{L}^2(\mathbb{R},\mathbb{C})$. Since for every $g\in \mathfrak{D}(\Lambda)$ and $h\in \mathfrak{D}(\Lambda_{\theta}^{\nu, \kappa})$
\begin{align*}
\langle \Lambda_{\theta}^{\nu, \kappa}h, P_{\theta}^{\nu, \kappa}g \rangle_{\mathsf L^2(\mathbb R, \mathbb C)} = \langle h, \Lambda g \rangle_{\mathsf L^2(\mathbb R, \mathbb C)}
\end{align*}
holds, we conclude that $P_{\theta}^{\nu, \kappa}\mathfrak{D}(\Lambda)\subset\mathfrak{D}\big((\Lambda_{\theta}^{\nu, \kappa})^*\big) =\mathfrak{D}(\Lambda_{\theta}^{\nu, \kappa})$ and $\Lambda P_{\theta}^{\nu, \kappa}\mathfrak{D}(\Lambda) \subset P_{\theta}^{\nu, \kappa}\mathsf{L}^2(\mathbb{R},\mathbb{C})$, i.e., the range of $\mathcal U^{\nu, \kappa}_\theta$ is a reducing subspace of $\Lambda$.
Hence there exists a measurable $\Omega\subset \mathbb R$ such that 
\begin{align}\label{Omega exclusion}
\mathcal U^{\nu, \kappa}_\theta\mathsf L^2(\mathbb R_+, \mathbb C^2) =
\Ran\big(\mathds{1}_{\mathbb R\setminus\Omega}(\Lambda)\big)
\end{align}
(combine Corollary 4.6 and Theorem 4.8 in \cite{TeschlMathematicalMethods}).

Our goal is to show that $\Omega$ is a Lebesgue null set. If this is not the case, then there exists $\delta >0$ so that $\Omega\cap\big([-\delta^{-1}, -\delta]\cup[\delta, \delta^{-1}]\big)$ is of positive Lebesgue measure. Since $\Phi_{0, \theta}^{\nu, \kappa}(\cdot; 1)$ is analytic in $\mathbb C\setminus \mathrm i\overline{\mathbb R_+}$ and real-valued on $\mathbb R\setminus \{0\}$, the set
\[\Xi_0 :=\big\{\lambda\in \mathbb R\setminus \{0\}: (\Phi_{0, \theta}^{\nu, \kappa})_1(\lambda; 1) =0\big\}\]
where the first component of $\Phi_{0, \theta}^{\nu, \kappa}$ vanishes is Lebesgue null, and thus at least one of the sets
\begin{align}\label{positive sets}
 \Xi_\pm :=\big\{\lambda\in \mathbb R\setminus \{0\}: \pm(\Phi_{0, \theta}^{\nu, \kappa})_1(\lambda; 1) >0\big\}\cap \Omega\cap\big([-\delta^{-1}, -\delta]\cup[\delta, \delta^{-1}]\big)
\end{align}
has positive Lebesgue measure. Without restriction suppose that $\Xi_+$ is not null.
A simple calculation gives
\begin{align}\label{U^*}
 (\mathcal U^{\nu, \kappa}_\theta)^* g =\underset{R\to\infty}\Ltwolim\int_{-R}^R \Phi_{0, \theta}^{\nu, \kappa}(\lambda; \cdot) g(\lambda) \sqrt{m^{\nu, \kappa}_\theta(\lambda)}\,\mathrm d\lambda
\end{align}
for any $g\in \mathsf L^2(\mathbb R, \mathbb C)$. Choosing $g :=\mathds1_{\Xi_+}$
and using $m^{\nu, \kappa}_\theta(\cdot)>0$ we obtain
\begin{align}\label{positive at 1}
\big((\mathcal U^{\nu, \kappa}_\theta)^* \mathds1_{\Xi_+}\big)_1(1) >0.
\end{align}
By dominated convergence \eqref{U^*} implies that the map $r \mapsto\big((\mathcal U^{\nu, \kappa}_\theta)^* \mathds1_{\Xi_+}\big)_1(r)$ is continuous at $r =1$, thus by \eqref{positive at 1} we get $(\mathcal U^{\nu, \kappa}_\theta)^* \mathds1_{\Xi_+} \neq 0$ in $\mathsf L^2(\mathbb R_+, \mathbb C^2)$. This is, however, not possible, since by \eqref{positive sets} $\mathds1_{\Xi_+}\in \big(\mathcal U^{\nu, \kappa}_\theta\mathsf L^2(\mathbb R_+, \mathbb C^2)\big)^\perp =\ker\big((\mathcal U^{\nu, \kappa}_\theta)^*\big)$.
The contradiction implies that $\Omega$ is null, thus the right hand side of \eqref{Omega exclusion} coincides with $\mathsf L^2(\mathbb R, \mathbb C)$.
\end{proof}

\section{Proof of Theorem \ref{t: the alternative}}

\begin{proof}[Proof of Theorem \ref{t: the alternative}, part \emph{I}]
 We use $\varphi_n := (\mathcal U^{\nu, \kappa}_\theta)^*\big((\cdot)^{-1}\mathbbm 1_{[1/n^2, 1/n]}(\cdot)\big)$ for $n\in\mathbb N$ big enough as a test function. Theorem \ref{t:spectral representation} implies that $\varphi_n\in P_{\theta, \infty}^{\nu, \kappa}\mathfrak D(D^{\nu, \kappa}_\theta) \subset P_{\theta, \infty}^{\nu, \kappa}\mathfrak D\big(|D^{\nu, \kappa}_\theta|^{1/2}\big)$ and
\begin{align}\begin{split}\label{kinetic energy}
 &\langle|D^{\nu, \kappa}_\theta|^{1/2}\varphi_n, |D^{\nu, \kappa}_\theta|^{1/2}\varphi_n\rangle_{\mathsf L^2(\mathbb R_+, \mathbb C^2)}\\ &=\langle \mathcal U^{\nu, \kappa}_\theta\varphi_n, \big(\mathcal U^{\nu, \kappa}_\theta D^{\nu, \kappa}_\theta(\mathcal U^{\nu, \kappa}_\theta)^*\big)\mathcal U^{\nu, \kappa}_\theta\varphi_n\rangle_{\mathsf L^2(\mathbb R, \mathbb C^2)} =\int_{1/n^2}^{1/n}\lambda^{-1}\,\mathrm d\lambda =\ln n.
\end{split}
\end{align}

We also have by \eqref{U^*}
\begin{align}\begin{split}\label{V form}
 \langle\varphi_n, V\varphi_n\rangle_{\mathsf L^2(\mathbb R_+, \mathbb C^2)}= \int_0^\infty\int_{1/n^2}^{1/n}\int_{1/n^2}^{1/n}&\frac{\sqrt{m^{\nu, \kappa}_\theta(\lambda)}}\lambda\frac{\sqrt{m^{\nu, \kappa}_\theta(\mu)}}\mu\\ &\times\big\langle\Phi_{0, \theta}^{\nu, \kappa}(\lambda; r), V(r)\Phi_{0, \theta}^{\nu, \kappa}(\mu; r)\big\rangle_{\mathbb C^2}\mathrm d\mu\,\mathrm d\lambda\, \mathrm dr.
\end{split}
\end{align}
Recalling Lemma \ref{l:Phi_0} together with \eqref{Phi_M at zero} and \eqref{Phi_U at zero} and taking into account the boundedness of $\Phi_U^{\nu, \kappa}$ and $\Phi_M^{\nu, \kappa}$ on $[1, \infty)$ for all $\lambda, r\in \mathbb R_+$ (see Lemma \ref{Lemma solutions dirac equation} together with 13.7.2 and 13.7.3 in \cite{dlmf})  we obtain the decomposition
\begin{align*}
 \Phi_{0, \theta}^{\nu, \kappa}(\lambda; r) =A_{\theta}^{\nu, \kappa}(r) +B_{\theta}^{\nu, \kappa}(\lambda; r),
\end{align*}
where $A_{\theta}^{\nu, \kappa}$ is defined in \eqref{A_theta}
and
\begin{align}\label{B estimate}
 \big\|B_{\theta}^{\nu, \kappa}(\lambda; r)\big\|_{\mathbb C^2} \leqslant C^{\nu, \kappa}\lambda r^{1- \Re\beta}
\end{align}
with some finite $C^{\nu, \kappa} >0$. Thus writing $V =|V|^{1/2}(\sign V)|V|^{1/2}$ and using the Cauchy inequality we can estimate
\begin{align*}
 &\Big|\big\langle\Phi_{0, \theta}^{\nu, \kappa}(\lambda; r), V(r)\Phi_{0, \theta}^{\nu, \kappa}(\mu; r)\big\rangle_{\mathbb C^2}\Big|\\ &\leqslant 2\big\langle A_{\theta}^{\nu, \kappa}(r), \big|V(r)\big|A_{\theta}^{\nu, \kappa}(r)\big\rangle_{\mathbb C^2} +(C^{\nu, \kappa})^2(\lambda^2 +\mu^2)\big\|V(r)\big\|_{\mathbb C^{2\times2}} r^{2- 2\Re\beta},
\end{align*}
where the right hand side is integrable in $r$ over $\mathbb R_+$ by \eqref{V integrability}. Hence the order of integrations in \eqref{V form} can be interchanged by Fubini's theorem.
By Schwarz inequality and \eqref{B estimate} we have
\begin{align}\begin{split}\label{A +B CS}
 &\Re\int_0^\infty\big\langle\Phi_{0, \theta}^{\nu, \kappa}(\lambda; r), V(r)\Phi_{0, \theta}^{\nu, \kappa}(\mu; r)\big\rangle_{\mathbb C^2} \mathrm dr\\ &=\Re\big\langle|V|^{1/2}\big(A_{\theta}^{\nu, \kappa} +B_{\theta}^{\nu, \kappa}(\lambda; \cdot)\big), (\sign V)|V|^{1/2}\big(A_{\theta}^{\nu, \kappa} +B_{\theta}^{\nu, \kappa}(\mu; \cdot)\big)\big\rangle_{\mathsf L^2(\mathbb R_+, \mathbb C^2)}\\ &\geqslant \frac12\int_0^\infty\big\langle A_\theta^{\nu, \kappa}(r), V(r)A_\theta^{\nu, \kappa}(r)\big\rangle_{\mathbb C^{2}}\mathrm dr\\ &-\bigg(\frac{\int_0^\infty\big\langle A_\theta^{\nu, \kappa}(r), \big|V(r)\big|A_\theta^{\nu, \kappa}(r)\big\rangle_{\mathbb C^{2}}\mathrm dr}{\int_0^\infty\big\langle A_\theta^{\nu, \kappa}(r), V(r)A_\theta^{\nu, \kappa}(r)\big\rangle_{\mathbb C^{2}}\mathrm dr} +\frac12\bigg)\\ &\hspace{4.2cm}\times(C^{\nu, \kappa})^2(\lambda^2 +\mu^2)\int_0^\infty\big\|V(r)\big\|_{\mathbb C^{2\times2}}r^{2 -2\Re\beta}\mathrm dr.
\end{split}\end{align}
Inserting \eqref{A +B CS} into \eqref{V form} we arrive at
\begin{align}\begin{split}\label{V form estimate}
&\langle\varphi_n, V\varphi_n\rangle_{\mathsf L^2(\mathbb R_+, \mathbb C^2)} \geqslant \frac12\int_0^\infty\big\langle A_\theta^{\nu, \kappa}(r), V(r)A_\theta^{\nu, \kappa}(r)\big\rangle_{\mathbb C^{2}}\mathrm dr\bigg(\int_{1/n^2}^{1/n}\frac{\sqrt{m^{\nu, \kappa}_\theta(\lambda)}}\lambda\mathrm d\lambda\bigg)^2\\ &-\bigg(\frac{2\int_0^\infty\big\langle A_\theta^{\nu, \kappa}(r), \big|V(r)\big|A_\theta^{\nu, \kappa}(r)\big\rangle_{\mathbb C^{2}}\mathrm dr}{\int_0^\infty\big\langle A_\theta^{\nu, \kappa}(r), V(r)A_\theta^{\nu, \kappa}(r)\big\rangle_{\mathbb C^{2}}\mathrm dr} +1\bigg)(C^{\nu, \kappa})^2\\ &\quad \times\int_0^\infty \big\|V(r)\big\|_{\mathbb C^{2\times2}}r^{2 -2\Re\beta}\mathrm dr\bigg(\int_{1/n^2}^{1/n}\lambda\sqrt{m^{\nu, \kappa}_\theta(\lambda)}\mathrm d\lambda\bigg)\bigg(\int_{1/n^2}^{1/n}\frac{\sqrt{m^{\nu, \kappa}_\theta(\lambda)}}\lambda\mathrm d\lambda\bigg),
\end{split}
\end{align}
which is positive for $n$ big enough due to \eqref{V integrability} and \eqref{VL via A}.
It follows from \eqref{m}, \eqref{m^0}, \eqref{W_MU} and Lemmata \ref{l:c^kappa,nu} and \ref{l:Phi_0}, that there exists $h^{\nu, \kappa}> 0$ such that for $(\nu, \kappa, \theta)\in \mathfrak M_{\mathrm I}$
\begin{align*}
 m^{\nu, \kappa}_\theta(\lambda)\geqslant h^{\nu, \kappa}\lambda^{-2\Re\beta}
\end{align*}
holds for all $\lambda >0$. This implies
\begin{align}\label{m coefficient}
 \int_{1/n^2}^{1/n}\frac{\sqrt{m^{\nu, \kappa}_\theta(\lambda)}}\lambda \mathrm d\lambda \geqslant h^{\nu, \kappa}\begin{cases}
                                                                                                                   \ln n, &\text{for }\beta \in\overline{\mathrm i\mathbb R_+};\\
														   \dfrac{n^{2\beta} -n^{\beta}}{\beta}, &\text{for }\beta \in(0, 1/2).
                                                                                                                  \end{cases}
\end{align}
Now \eqref{kinetic energy}, \eqref{V form estimate} and \eqref{m coefficient} imply that the quadratic form of $D_{\theta, \infty}^{\nu, \kappa}(V)$ computed on $\varphi_n$ becomes negative for $n$ big enough. The existence of negative spectrum of $D_{\theta, \infty}^{\nu, \kappa}(V)$ follows by the minimax principle.
\end{proof}

For $E \in (0, \infty]$ let $P_{\theta, E}^{\nu, \kappa}:= P_{[0, E)}(D^{\nu, \kappa}_\theta)$ be the spectral projector of $D^{\nu, \kappa}_\theta$ corresponding to the interval $[0, E)$. 
Under Hypothesis A the quadratic form
\begin{align}\label{d form}
\mathfrak d_{\theta, E}^{\nu, \kappa}(V)[\cdot] :=\big\||D^{\nu, \kappa}_\theta|^{1/2}\cdot\big\|_{\mathsf L^2(\mathbb R_+, \mathbb C^2)}^2 -\langle |V|^{1/2}\cdot, (\sign V)|V|^{1/2}\cdot\rangle_{\mathsf L^2(\mathbb R_+, \mathbb C^2)}
\end{align}
is closed and bounded from below on $P_{\theta, E}^{\nu, \kappa}\mathfrak D\big(|D^{\nu, \kappa}_\theta|^{1/2}\big)$. We define
\begin{align*}
 D_{\theta, E}^{\nu, \kappa}(V) :=P_{\theta, E}^{\nu, \kappa}(D^{\nu, \kappa}_\theta -V)P_{\theta, E}^{\nu, \kappa}
\end{align*}
as the self-adjoint operator in $P_{\theta, E}^{\nu, \kappa}\mathsf L^2(\mathbb R_+, \mathbb C^2)$ corresponding to $\mathfrak d_{\theta, E}^{\nu, \kappa}(V)$.

For $\tau >0$ Hypothesis A implies that the Birman-Schwinger operator
\begin{align}\label{BS operator}
 B_{\theta, E}^{\nu, \kappa}(V, \tau) :=(D^{\nu, \kappa}_\theta P_{\theta, E}^{\nu, \kappa} +\tau\mathbb I)^{-1/2}P_{\theta, E}^{\nu, \kappa}V(D^{\nu, \kappa}_\theta P_{\theta, E}^{\nu, \kappa} +\tau\mathbb I)^{-1/2}P_{\theta, E}^{\nu, \kappa}
\end{align}
is bounded in $\mathsf L^2(\mathbb R_+, \mathbb C^2)$.

We will use the following version of the Birman-Schwinger principle:

\begin{lem}\label{l:Birman-Schwinger}
The equality
\begin{align}\label{Birman-Schwinger}\begin{split}
 &\rank P_{(-\infty, -\tau)}\big(D_{\theta, E}^{\nu, \kappa}(V)\big) = \rank P_{(1, \infty)}\big(B_{\theta, E}^{\nu, \kappa}(V, \tau)\big)\end{split}
\end{align}
holds for any $\tau >0$ with $B_{\theta, E}^{\nu, \kappa}(V, \tau)$ defined in \eqref{BS operator}.
\end{lem}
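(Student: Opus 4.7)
The plan is to run the standard Birman--Schwinger argument at the level of quadratic forms, being careful because $V$ is only assumed to be relatively form bounded via Hypothesis A. Throughout, write $A_\tau := D^{\nu, \kappa}_\theta P_{\theta, E}^{\nu, \kappa} +\tau\mathbb I$, which is bounded below by $\tau >0$ on $P_{\theta, E}^{\nu, \kappa}\mathsf L^2(\mathbb R_+, \mathbb C^2)$, so that $A_\tau^{\pm 1/2}$ is well defined there, $A_\tau^{-1/2}$ is bounded, and $A_\tau^{1/2}$ is a bijection from $P_{\theta, E}^{\nu, \kappa}\mathfrak D(|D^{\nu, \kappa}_\theta|^{1/2})$ onto $P_{\theta, E}^{\nu, \kappa}\mathsf L^2(\mathbb R_+, \mathbb C^2)$.

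First I would invoke the Glazman lemma for the semibounded self-adjoint operator $D_{\theta, E}^{\nu, \kappa}(V)$ (well defined under Hypothesis A by the KLMN theorem) to obtain
\begin{equation*}
\rank P_{(-\infty, -\tau)}\big(D_{\theta, E}^{\nu, \kappa}(V)\big) =\sup\big\{\dim L : L\subset P_{\theta, E}^{\nu, \kappa}\mathfrak D\big(|D^{\nu, \kappa}_\theta|^{1/2}\big),\ \mathfrak q[\psi]<0\ \text{on }L\setminus\{0\}\big\},
\end{equation*}
where $\mathfrak q[\psi] := \mathfrak d_{\theta, E}^{\nu, \kappa}(V)[\psi] +\tau\|\psi\|^2$ is the closed lower-semibounded form from \eqref{d form} shifted by $\tau$.

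Second I would perform the substitution $\phi := A_\tau^{1/2}\psi$, which preserves dimension. For $\psi\in \mathfrak D(D_{\theta, E}^{\nu, \kappa}(V))$ a direct computation using the spectral calculus for $A_\tau$ gives
\begin{equation*}
\mathfrak q[\psi] =\langle \phi, \phi\rangle -\big\langle A_\tau^{-1/2}\phi, P_{\theta, E}^{\nu, \kappa}VP_{\theta, E}^{\nu, \kappa}A_\tau^{-1/2}\phi\big\rangle =\big\langle\phi, \big(\mathbb I -B_{\theta, E}^{\nu, \kappa}(V, \tau)\big)\phi\big\rangle,
\end{equation*}
where the last expression is a bounded quadratic form on $P_{\theta, E}^{\nu, \kappa}\mathsf L^2(\mathbb R_+, \mathbb C^2)$. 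Extending the identity to all $\psi$ in the form domain (the third step below) and applying the Glazman lemma in the opposite direction to the bounded self-adjoint operator $\mathbb I -B_{\theta, E}^{\nu, \kappa}(V, \tau)$ identifies the supremum with $\rank P_{(-\infty, 0)}(\mathbb I -B_{\theta, E}^{\nu, \kappa}(V, \tau)) =\rank P_{(1, \infty)}(B_{\theta, E}^{\nu, \kappa}(V, \tau))$, yielding \eqref{Birman-Schwinger}.

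The main obstacle is the third step: verifying that the identity of the two forms extends from $\mathfrak D(D_{\theta, E}^{\nu, \kappa}(V))$ to the full form domain. I would argue this by a density/closedness argument, namely that the set $A_\tau^{-1/2}P_{\theta, E}^{\nu, \kappa}\mathsf L^2$ is a core for the form, so that both sides of the identity are closable and agree on a core, hence on the whole form domain. Here Hypothesis A is essential: the relative form bound smaller than one guarantees that $\phi\mapsto\langle A_\tau^{-1/2}\phi, P_{\theta, E}^{\nu, \kappa}VP_{\theta, E}^{\nu, \kappa}A_\tau^{-1/2}\phi\rangle$ is indeed representable by a bounded operator, which is $B_{\theta, E}^{\nu, \kappa}(V, \tau)$ as defined in \eqref{BS operator}, and that the cancellations implicit in $\mathfrak d_{\theta, E}^{\nu, \kappa}(V)$ are uniformly controlled. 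Once this is settled the rest is bookkeeping.
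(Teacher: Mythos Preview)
Your approach is essentially the same as the paper's: write the shifted form as $\langle A_\tau^{1/2}\psi,(\mathbb I-B_{\theta,E}^{\nu,\kappa}(V,\tau))A_\tau^{1/2}\psi\rangle$ and invoke the minimax (Glazman) principle on both sides. The argument is correct.

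However, you have manufactured an obstacle where none exists. There is no need to first establish the form identity on the operator domain $\mathfrak D(D_{\theta,E}^{\nu,\kappa}(V))$ and then extend by density. The identity
\[
\mathfrak d_{\theta,E}^{\nu,\kappa}(V)[\psi]+\tau\|\psi\|^2=\big\langle A_\tau^{1/2}\psi,\big(\mathbb I-B_{\theta,E}^{\nu,\kappa}(V,\tau)\big)A_\tau^{1/2}\psi\big\rangle
\]
holds directly for every $\psi\in P_{\theta,E}^{\nu,\kappa}\mathfrak D(|D^{\nu,\kappa}_\theta|^{1/2})$: the left-hand side is defined there by \eqref{d form}, and the right-hand side is defined because $A_\tau^{1/2}\psi\in P_{\theta,E}^{\nu,\kappa}\mathsf L^2$ and $B_{\theta,E}^{\nu,\kappa}(V,\tau)$ is bounded. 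Equality is immediate once you observe that $\|A_\tau^{1/2}\psi\|^2=\||D^{\nu,\kappa}_\theta|^{1/2}\psi\|^2+\tau\|\psi\|^2$ and that, by the very definition \eqref{BS operator} together with Hypothesis~A, the bounded form $\phi\mapsto\langle\phi,B_{\theta,E}^{\nu,\kappa}(V,\tau)\phi\rangle$ evaluated at $\phi=A_\tau^{1/2}\psi$ reproduces the $V$-term in \eqref{d form}. The paper's proof is accordingly two lines.
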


\begin{proof}
For any $\varphi\in P_{\theta, E}^{\nu, \kappa}\mathfrak D\big(|D^{\nu, \kappa}_\theta|^{1/2}\big)$, $\tau >0$ we have
\begin{align*}\begin{split}
 &\mathfrak d_{\theta, E}^{\nu, \kappa}(V)[\varphi] +\tau\|\varphi\|_{\mathsf L^2(\mathbb R_+, \mathbb C^2)}^2\\ & =\big\langle(D^{\nu, \kappa}_\theta P_{\theta, E}^{\nu, \kappa} +\tau\mathbb I)^{1/2}\varphi, \big(\mathds1 -B_{\theta, E}^{\nu, \kappa}(\tau)\big)(D^{\nu, \kappa}_\theta P_{\theta, E}^{\nu, \kappa} +\tau\mathbb I)^{1/2}\varphi\big\rangle_{\mathsf L^2(\mathbb R_+, \mathbb C^2)}.
\end{split}\end{align*}
The identity \eqref{Birman-Schwinger} follows by the minimax principle.
\end{proof}

\begin{lem}\label{l:IntegralEstimate}
For $q \geqslant1$ and $\tau >0$ the estimate
\begin{align}\begin{split}\label{rank via integral}
 &\rank P_{(-\infty, -\tau)}\big(D_{\theta, E}^{\nu, \kappa}(V)\big)\\ &\leqslant \int_0^\infty\int_0^E m^{\nu, \kappa}_\theta(\lambda)(\lambda +\tau)^{-q}\big\|V_+^{q/2}(r)\Phi_{0, \theta}^{\nu, \kappa}(\lambda; r)\big\|_{\mathbb C^2}^2\mathrm d\lambda\,\mathrm dr\end{split}
\end{align}
holds. If the integral on the right hand side of \eqref{rank via integral} is finite with $E :=\infty$, then Hypothesis A is satisfied for $V :=V_+$.
\end{lem}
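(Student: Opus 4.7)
The plan is to reduce the rank estimate to a trace estimate via the Birman-Schwinger principle (Lemma \ref{l:Birman-Schwinger}), and then evaluate that trace through the spectral representation provided by Theorem \ref{t:spectral representation}. Since $V \leqslant V_+$ as Hermitian matrix-valued functions, the minimax principle applied to the quadratic form \eqref{d form} yields $\rank P_{(-\infty, -\tau)}\big(D_{\theta, E}^{\nu, \kappa}(V)\big) \leqslant \rank P_{(-\infty, -\tau)}\big(D_{\theta, E}^{\nu, \kappa}(V_+)\big)$, where I use that Hypothesis A for $V$ implies Hypothesis A for $V_+$ (since $V_+ \leqslant |V|$). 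Applying Lemma \ref{l:Birman-Schwinger} to $V_+$ turns this into $\rank P_{(1, \infty)}(B_+)$ with $B_+ := B_{\theta, E}^{\nu, \kappa}(V_+, \tau) \geqslant 0$, and the elementary observation $\mathbbm 1_{(1, \infty)}(s) \leqslant s^q$ for $s \geqslant 0$ and $q \geqslant 1$ then gives $\rank P_{(1, \infty)}(B_+) \leqslant \tr B_+^q$.

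Setting $T := V_+^{1/2} P_{\theta, E}^{\nu, \kappa}(D^{\nu, \kappa}_\theta P_{\theta, E}^{\nu, \kappa} + \tau)^{-1/2}$ so that $B_+ = T^* T$, cyclicity together with the Araki-Lieb-Thirring inequality applied to $T T^* = V_+^{1/2}\big[P_{\theta, E}^{\nu, \kappa}(D^{\nu, \kappa}_\theta P_{\theta, E}^{\nu, \kappa} + \tau)^{-1} P_{\theta, E}^{\nu, \kappa}\big]V_+^{1/2}$ yields
\[
\tr B_+^q = \tr (T T^*)^q \leqslant \tr\Big[\big(P_{\theta, E}^{\nu, \kappa}(D^{\nu, \kappa}_\theta P_{\theta, E}^{\nu, \kappa} + \tau)^{-1} P_{\theta, E}^{\nu, \kappa}\big)^q V_+^q\Big].
\]
Next I would invoke Theorem \ref{t:spectral representation}: since $\mathcal U^{\nu, \kappa}_\theta$ diagonalises $D^{\nu, \kappa}_\theta$ as multiplication by the independent variable, functional calculus shows that the integral kernel of $\big(P_{\theta, E}^{\nu, \kappa}(D^{\nu, \kappa}_\theta P_{\theta, E}^{\nu, \kappa} + \tau)^{-1} P_{\theta, E}^{\nu, \kappa}\big)^q$ equals $\int_0^E m^{\nu, \kappa}_\theta(\lambda)(\lambda + \tau)^{-q}\Phi_{0, \theta}^{\nu, \kappa}(\lambda; x)\big(\Phi_{0, \theta}^{\nu, \kappa}(\lambda; y)\big)^\intercal\,\mathrm d\lambda$. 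Taking the trace against $V_+^q$ on the diagonal, using the pointwise identity $\tr\big[\Phi\Phi^\intercal V_+^q\big] = \big\langle\Phi, V_+^q\Phi\big\rangle_{\mathbb C^2} = \big\|V_+^{q/2}\Phi\big\|_{\mathbb C^2}^2$, and applying Fubini, I obtain exactly the right-hand side of \eqref{rank via integral}.

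For the final assertion, assume the right-hand side of \eqref{rank via integral} is finite at $E = \infty$. The chain of inequalities above yields $\|T\|^{2q} \leqslant \tr B_+^q$ bounded by this same double integral, and dominated convergence (with dominating function obtained from monotonicity in $\tau$) forces the integral to tend to zero as $\tau \to \infty$, so $\|T\| \to 0$. For any $\varphi \in P_{\theta, \infty}^{\nu, \kappa}\mathfrak D\big(|D^{\nu, \kappa}_\theta|^{1/2}\big)$ I can then write
\[
\big\langle\varphi, V_+ \varphi\big\rangle = \big\|T(D^{\nu, \kappa}_\theta P_{\theta, \infty}^{\nu, \kappa} + \tau)^{1/2}\varphi\big\|^2 \leqslant \|T\|^2\big(\big\langle|D^{\nu, \kappa}_\theta|\varphi, \varphi\big\rangle + \tau\|\varphi\|^2\big),
\]
and choosing $\tau$ large enough that $\|T\|^2 < 1$ delivers Hypothesis A for $V_+$.

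The main technical obstacle I expect is justifying the trace identity and the Araki-Lieb-Thirring step without assuming a priori Schatten-class membership for $T$; this I plan to handle by first proving the bound for bounded, compactly supported truncations of $V_+$ (where all operators involved are trace class and the diagonal kernel representation is classical) and then passing to the supremum via monotone convergence on $V_+$. A closely related subtlety is that for the form-bound argument in the last step the operator $T$ must be everywhere defined and bounded; the truncation argument provides this via uniform bounds coming from the finite double integral.
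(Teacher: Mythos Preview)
Your proposal is correct and follows essentially the same route as the paper: Birman--Schwinger, then the operator inequality $\tr(T^*T)^q\leqslant\|V_+^{q/2}(D_\theta^{\nu,\kappa}P+\tau)^{-q/2}P\|_{\mathfrak S^2}^2$ (you phrase it as Araki--Lieb--Thirring applied to $TT^*$, the paper cites the equivalent $\|AB\|_{\mathfrak S^{2q}}^{2q}\leqslant\|A^qB^q\|_{\mathfrak S^2}^2$ from Appendix~B of \cite{LiebThirring1976}), and finally the spectral representation to compute the resulting Hilbert--Schmidt norm. For Hypothesis~A the paper observes directly that finiteness of the double integral puts $T$ in $\mathfrak S^{2q}$, hence $P_{\theta,\infty}^{\nu,\kappa}V_+P_{\theta,\infty}^{\nu,\kappa}$ is form-compact; your dominated-convergence argument that $\|T\|\to0$ as $\tau\to\infty$ is a valid alternative that yields the weaker (but sufficient) infinitesimal form-boundedness.
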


\begin{proof}
For  $q \geqslant1$ let $\mathfrak S^q$ be the $q$th Schatten-von-Neumann class of compact operators. Then for any self-adjoint, non-negative operators $A$ and $B$ with $A^qB^q\in \mathfrak S^{2}$ we have $AB\in \mathfrak S^{2q}$ and $\|AB\|_{\mathfrak S^{2q}}^{2q}\leqslant \|A^qB^q\|_{\mathfrak S^{2}}^2$ (see Appendix B in \cite{LiebThirring1976}). Hence by \eqref{U^*}
\begin{align}\label{integral estimate}
\begin{split}
 &\big\|V_+^{1/2}(D^{\nu, \kappa}_\theta P_{\theta, E}^{\nu, \kappa} +\tau\mathbb I)^{-1/2}P_{\theta, E}^{\nu, \kappa}\big\|_{\mathfrak S^{2q}}^{2q}\\ &\leqslant\big\|V_+^{q/2}(\mathcal U^{\nu, \kappa}_\theta)^*\mathcal U^{\nu, \kappa}_\theta(D^{\nu, \kappa}_\theta P_{\theta, E}^{\nu, \kappa} +\tau\mathbb I)^{-q/2}P_{\theta, E}^{\nu, \kappa}(\mathcal U^{\nu, \kappa}_\theta)^*\big\|_{\mathfrak S^{2}}^2\\ &=\big\|V_+^{q/2}(\mathcal U^{\nu, \kappa}_\theta)^*\mathbbm 1_{[0, E)}(\cdot)(\cdot +\tau)^{-q/2}\big\|_{\mathfrak S^{2}}^2\\ &=\int_0^\infty\int_0^E m^{\nu, \kappa}_\theta(\lambda)(\lambda +\tau)^{-q}\big\|V_+^{q/2}(r)\Phi_{0, \theta}^{\nu, \kappa}(\lambda; r)\big\|_{\mathbb C^2}^2\mathrm d\lambda\,\mathrm dr.
\end{split}
\end{align}
Estimating the right hand side of \eqref{Birman-Schwinger} from above by
\begin{align*}
\begin{split}
 &\big\|(D^{\nu, \kappa}_\theta P_{\theta, E}^{\nu, \kappa} +\tau\mathbb I)^{-1/2}P_{\theta, E}^{\nu, \kappa}V_+(D^{\nu, \kappa}_\theta P_{\theta, E}^{\nu, \kappa} +\tau\mathbb I)^{-1/2}P_{\theta, E}^{\nu, \kappa}\big\|_{\mathfrak S^{q}}^{q}\\ &=\big\|V_+^{1/2}(D^{\nu, \kappa}_\theta P_{\theta, E}^{\nu, \kappa} +\tau\mathbb I)^{-1/2}P_{\theta, E}^{\nu, \kappa}\big\|_{\mathfrak S^{2q}}^{2q}
\end{split}
\end{align*}
and applying \eqref{integral estimate} we conclude \eqref{rank via integral}.

If the integral on the right hand side of \eqref{rank via integral} is finite with $E :=\infty$, then \eqref{integral estimate} implies that $P_{\theta, \infty}^{\nu, \kappa}V_+P_{\theta, \infty}^{\nu, \kappa}$ is a form compact perturbation of $D^{\nu, \kappa}_\theta $ in $P_{\theta, \infty}^{\nu, \kappa}\mathsf L^2(\mathbb R_+, \mathbb C^2)$. Hypothesis A follows with standard arguments.
\end{proof}

\begin{proof}[Proof of Theorem \ref{t: the alternative}, part \emph{II}]
Applying Lemma \ref{l:IntegralEstimate} and passing to $\tau \to +0$ in \eqref{rank via integral} we obtain
\begin{align}\begin{split}\label{integral estimate applied}
 &\rank P_{(-\infty, 0)}\big(D_{\theta, \infty}^{\nu, \kappa}(V)\big)\\ &\leqslant \int_0^\infty \big\|V_+^{q}(r)\big\|_{\mathbb C^{2\times 2}}\int_0^\infty \lambda^{-q}m^{\nu, \kappa}_\theta(\lambda)\big\|\Phi_{0, \theta}^{\nu, \kappa}(\lambda; r)\big\|_{\mathbb C^2}^2\mathrm d\lambda\,\mathrm dr.\end{split}
\end{align}

In the case $\beta\in (0, 1/2)$ and $\theta\in (0, \pi)\setminus\{\pi/2\}$, we rescale the variable $\mu:= |\cot\theta|^{1/(2\beta)}\lambda$ and observe
\begin{align}\label{integral estimate first use}
 &\int_0^\infty \lambda^{-q}m_\theta^{\nu, \kappa}(\lambda)\big\|\Phi_{0, \theta}^{\nu, \kappa}(\lambda; r)\big\|_{\mathbb C^2}^2\mathrm d\lambda\\ &=\frac{|\cot\theta|^{(q -2\beta -1)/(2\beta)}}{\sin^2\theta}\int_0^\infty\mu^{-q +2\beta}\widetilde m_{\theta}^{\nu, \kappa}(\mu)\Big\|\Phi_{0, \theta}^{\nu, \kappa}\big(|\cot\theta|^{-1/(2\beta)}\mu; r\big)\Big\|_{\mathbb C^2}^2\mathrm d\mu,
\end{align}
where 
\begin{align*}
\widetilde m^{\nu, \kappa}_\theta(\mu) :=\sin^2\theta|\cot\theta|\mu^{-2\beta}m^{\nu, \kappa}_\theta\big(|\cot\theta|^{-1/(2\beta)}\mu\big)
\end{align*}
satisfies
\begin{align}\label{tilde m bound}
\big|\widetilde m_\theta^{\nu, \kappa}(\mu)\big| \leqslant C^{\nu, \kappa}\begin{cases}
                                  \mu^{-4\beta}, & \text{for }\mu \geqslant 1;\\ 1, & \text{for }\mu \leqslant 1
                                 \end{cases}
\end{align}
with some $C^{\nu, \kappa} >0$ independent from $\theta$ (see \eqref{m}, \eqref{a}, \eqref{b}, \eqref{c^kappa,nu} and \eqref{Imaginary conditions}).
By Lemma \ref{l:Phi_0} we get
\begin{align}\label{Phi_0 abs bound}
 \big\|\Phi_{0, \theta}^{\nu, \kappa}(\lambda; r)\big\|_{\mathbb C_2}^2 &\leqslant 2\cos^2\!\theta\,\lambda^{2\beta}\big\|\Phi_U^{\nu, \kappa}(\lambda r)\big\|_{\mathbb C_2}^2 +2\sin^2\!\theta\,\lambda^{-2\beta}\big\|\Phi_M^{\nu, \kappa}(\lambda r)\big\|_{\mathbb C_2}^2.
\end{align}
By Lemmata \ref{Lemma solutions dirac equation} and \ref{lemma asymptotics at zero} for $\beta >0$ there exist finite constants $C_M^{\nu, \kappa}$ and $C_U^{\nu, \kappa}$ such that for any $x\in \mathbb R_+$
\begin{align}\label{Phi_M and Phi_U estimates}\begin{split}
 \big\|\Phi_M^{\nu, \kappa}(x)\big\|_{\mathbb C_2}^2 \leqslant C_M^{\nu, \kappa}\begin{cases}x^{2\beta}, &\text{\!\!\!for }x\leqslant 1\\ 1, &\text{\!\!\!for }x\geqslant 1\end{cases}; \  \big\|\Phi_U^{\nu, \kappa}(x)\big\|_{\mathbb C_2}^2 \leqslant C_U^{\nu, \kappa}\begin{cases}x^{-2\beta}, &\text{\!\!\!for }x\leqslant 1\\ 1, &\text{\!\!\!for }x\geqslant 1\end{cases}.
\end{split}\end{align}

Substituting \eqref{Phi_0 abs bound} into \eqref{integral estimate first use} and using the estimates \eqref{Phi_M and Phi_U estimates} and \eqref{tilde m bound} we obtain \eqref{integral estimate for beta>0}.

For $\beta> 0$ and $\theta =\pi/2$, starting from \eqref{integral estimate applied}, rescaling $\lambda =:\mu/r$ and using Lemma \ref{l:Phi_0} and \eqref{m} we conclude 
\begin{align*}\begin{split}
 &\rank P_{(-\infty, 0)}\big(D_{\theta, \infty}^{\nu, \kappa}(V)\big)\\ &\leqslant m_{\pi/2}^{\nu, \kappa}(1)\int_0^\infty \mu^{-q}\big\|\Phi_M^{\nu, \kappa}(\mu)\big\|_{\mathbb C^2}^2\mathrm d\mu\int_0^\infty \big\|V_+^{q}(r)\big\|_{\mathbb C^{2\times 2}}r^{q -1}\,\mathrm dr,\end{split}
\end{align*}
where the $\mu$-integral is finite by \eqref{Phi_M and Phi_U estimates}.
\end{proof}

We now turn to the case of $\beta =0$, in which the integral on the right hand side of \eqref{rank via integral} does not converge for $E =\infty$. To remedy this problem, we use
\begin{lem}\label{l: energy splitting}
Let $\|V_+\|_\infty :=\|V_+\|_{\mathsf L^\infty(\mathbb R_+, \mathbb C^{2\times 2})}$ be finite. Then for any $\tau \geqslant 0$ we have
\begin{align*}
 \rank P_{(-\infty, -\tau)}\big(D_{\theta, \infty}^{\nu, \kappa}(V)\big) \leqslant \rank P_{(-\infty, -\tau)}\Big(D_{\theta, 2\|V_+\|_\infty}^{\nu, \kappa}\big(2V_+\big)\Big).
\end{align*}
\end{lem}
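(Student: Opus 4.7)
}
The natural strategy is to split the positive spectral subspace at the threshold $E:=2\|V_+\|_\infty$ and show that the high-energy part never contributes to negative eigenvalues. Set $P_<:=P_{\theta,E}^{\nu,\kappa}$ and $P_>:=P_{\theta,\infty}^{\nu,\kappa}-P_{\theta,E}^{\nu,\kappa}=P_{[E,\infty)}(D^{\nu,\kappa}_\theta)$. Both commute with $|D^{\nu,\kappa}_\theta|^{1/2}$ and with $P_{\theta,\infty}^{\nu,\kappa}$, so they preserve the form domain of $D^{\nu,\kappa}_{\theta,\infty}(V)$, and every $f$ in that form domain decomposes orthogonally as $f=f_<+f_>$ with $f_\lessgtr:=P_\lessgtr f$.

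The first step is a pointwise (in $f$) lower bound on the quadratic form $\mathfrak d^{\nu,\kappa}_{\theta,\infty}(V)$ defined in \eqref{d form}. Using $V\leqslant V_+$ and orthogonality of $P_<$ and $P_>$ with respect to $D^{\nu,\kappa}_\theta$, together with the elementary Cauchy--Schwarz estimate
\[
\langle f,V_+ f\rangle=\|V_+^{1/2}(f_<+f_>)\|^2\leqslant 2\langle f_<,V_+ f_<\rangle+2\langle f_>,V_+ f_>\rangle,
\]
one obtains
\[
\mathfrak d^{\nu,\kappa}_{\theta,\infty}(V)[f]\geqslant\langle f_<,(D^{\nu,\kappa}_\theta-2V_+)f_<\rangle+\langle f_>,(D^{\nu,\kappa}_\theta-2V_+)f_>\rangle.
\]
On the range of $P_>$ the operator $D^{\nu,\kappa}_\theta$ is bounded below by $E=2\|V_+\|_\infty$, while $2V_+\leqslant 2\|V_+\|_\infty\mathbb I$, so the second term is $\geqslant 0$. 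Hence
\[
\mathfrak d^{\nu,\kappa}_{\theta,\infty}(V)[f]\geqslant\mathfrak d^{\nu,\kappa}_{\theta,E}(2V_+)[f_<].
\]

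The second step applies the variational characterisation of $\rank P_{(-\infty,-\tau)}$ via quadratic forms. Take any finite-dimensional subspace $L\subset P_{\theta,\infty}^{\nu,\kappa}\mathfrak D(|D^{\nu,\kappa}_\theta|^{1/2})$ on which $\mathfrak d^{\nu,\kappa}_{\theta,\infty}(V)[\cdot]<-\tau\|\cdot\|^2$, and consider the linear map $f\mapsto f_<$. If some nonzero $f\in L$ had $f_<=0$, then $f=f_>$ and the displayed bound would give $\mathfrak d^{\nu,\kappa}_{\theta,\infty}(V)[f]\geqslant 0>-\tau\|f\|^2$ (since $\tau\geqslant 0$ and $f\neq 0$), a contradiction; thus $f\mapsto f_<$ is injective on $L$. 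Moreover, since $\|f_<\|^2\leqslant\|f\|^2$ and $\tau\geqslant 0$,
\[
\mathfrak d^{\nu,\kappa}_{\theta,E}(2V_+)[f_<]\leqslant\mathfrak d^{\nu,\kappa}_{\theta,\infty}(V)[f]<-\tau\|f\|^2\leqslant-\tau\|f_<\|^2,
\]
so $P_<L$ is a subspace of the form domain of $D^{\nu,\kappa}_{\theta,E}(2V_+)$ of the same dimension as $L$ on which the form of $D^{\nu,\kappa}_{\theta,E}(2V_+)$ is $<-\tau\|\cdot\|^2$. Taking the supremum over such $L$ yields the asserted rank inequality.

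The argument is essentially bookkeeping; the only point requiring mild care is the injectivity of $f\mapsto f_<$, which is what forces the high-energy cutoff to be taken at $2\|V_+\|_\infty$ rather than at $\|V_+\|_\infty$ (the factor $2$ comes from the Cauchy--Schwarz split of the cross term in $\langle f,V_+f\rangle$).
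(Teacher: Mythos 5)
Your proposal is correct and follows essentially the same route as the paper: split at the energy threshold $E=2\|V_+\|_\infty$, bound the cross term by Cauchy--Schwarz (which produces the factor $2$ in front of $V_+$), observe that the high-energy piece is non-negative since $D^{\nu,\kappa}_\theta\geqslant 2\|V_+\|_\infty$ on $\Ran P_{[E,\infty)}(D^{\nu,\kappa}_\theta)$, and conclude by the variational principle. Your explicit verification that $f\mapsto f_<$ is injective on a negative subspace is a point the paper leaves to "the minimax principle", but it is the same argument.
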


\begin{proof}
 For any $\varphi\in P_{\theta, \infty}^{\nu, \kappa}\mathfrak D\big(|D^{\nu, \kappa}_\theta|^{1/2}\big)$ let $\check\varphi :=P_{\theta, 2\|V_+\|_\infty}^{\nu, \kappa}\varphi$; $\hat\varphi :=\varphi -\check\varphi$. Then \eqref{d form} implies
\begin{align}\label{form V splitting}\begin{split}
 &\mathfrak d_{\theta, \infty}^{\nu, \kappa}(V)[\varphi] \geqslant\mathfrak d_{\theta, \infty}^{\nu, \kappa}(V_+)[\check\varphi] +\mathfrak d_{\theta, \infty}^{\nu, \kappa}(V_+)[\hat\varphi] -2\Re\langle V_+^{1/2}\check\varphi, V_+^{1/2}\hat\varphi\rangle_{\mathsf L^2(\mathbb R_+, \mathbb C^2)}\\ &\geqslant \mathfrak d_{\theta, 2\|V_+\|_\infty}^{\nu, \kappa}\big(2V_+\big)[\check\varphi] +\Big(\big\||D^{\nu, \kappa}_\theta|^{1/2}\hat\varphi\big\|_{\mathsf L^2(\mathbb R_+, \mathbb C^2)}^2 -2\big\|V_+^{1/2}\hat\varphi\big\|_{\mathsf L^2(\mathbb R_+, \mathbb C^2)}^2\Big).
\end{split}\end{align}
Since $\big\||D^{\nu, \kappa}_\theta|^{1/2}\hat\varphi\big\|_{\mathsf L^2(\mathbb R_+, \mathbb C^2)}^2 \geqslant 2\|V_+\|_\infty\|\hat\varphi\|_{\mathsf L^2(\mathbb R_+, \mathbb C^2)}^2$, the term in the parenthesis on the right hand side of \eqref{form V splitting} is non-negative. The statement of the lemma now follows from the minimax principle.
\end{proof}

\begin{proof}[Proof of Theorem \ref{t: the alternative}, part \emph{III}]
 Combining Lemma \ref{l: energy splitting} with Lemma \ref{l:IntegralEstimate} (with $\tau \to +0$, $q :=1$) we obtain
\begin{align}\begin{split}\label{BS for beta=0}
 &\rank P_{(-\infty, 0)}\big(D_{\theta, \infty}^{\nu, \kappa}(V)\big)\\ &\leqslant \int_0^\infty \big\|V_+(r)\big\|_{\mathbb C^{2\times 2}}\int_0^{2\|V_+\|_\infty} \lambda^{-1}m^{\nu, \kappa}_\theta(\lambda)\big\|\Phi_{0, \theta}^{\nu, \kappa}(\lambda; r)\big\|_{\mathbb C^2}^2\mathrm d\lambda\,\mathrm dr.
\end{split}\end{align}
Inserting the definitions \eqref{m}, \eqref{Phi_0}, \eqref{a} and \eqref{b} and performing the rescaling $\lambda =:\mathrm e^{\tan\theta}\mu$ we rewrite the inner integral on the right hand side of \eqref{BS for beta=0} as
\begin{align}\begin{split}\label{integral via f and g}
 &\int_0^{2\|V_+\|_\infty} \lambda^{-1}m^{\nu, \kappa}_\theta(\lambda)\big\|\Phi_{0, \theta}^{\nu, \kappa}(\lambda; r)\big\|_{\mathbb C^2}^2\mathrm d\lambda \\&= \int_0^{2\|V_+\|_\infty\mathrm e^{-\tan\theta}} \mu^{-1}m_0^{\nu, \kappa}(\mu)\big\|f^{\nu, \kappa}(\mu\mathrm e^{\tan\theta} r) +g^{\nu, \kappa}(\mu; \mathrm e^{\tan\theta}r)\big\|_{\mathbb C^2}^2\mathrm d\mu 
\end{split}\end{align}
with $f^{\nu, \kappa}:\mathbb R_+\to \mathbb C^2$ and $g^{\nu, \kappa}:\mathbb R_+^2\to \mathbb C^2$ given by
\begin{align*}
 f^{\nu, \kappa}(x) :=\Phi_U^{\nu, \kappa}(x) -\ln(x)\Phi_M^{\nu, \kappa}(x), \quad g^{\nu, \kappa}(\zeta, x) :=\ln(x)\Phi_M^{\nu, \kappa}(\zeta x).
\end{align*}
Lemma \ref{lemma asymptotics at zero} and boundedness of $\Phi_U^{\nu, \kappa}$ on $[1, \infty)$ and $\Phi_M^{\nu, \kappa}$ on $\mathbb R_+$ imply the estimates
\begin{align*}
 \big\|f^{\nu, \kappa}(\cdot)\big\|_{\mathbb C^2}^2 \leqslant C_f^{\nu, \kappa}\ln^2(\mathrm e +\cdot), \quad \big\|g^{\nu, \kappa}(\zeta, \cdot)\big\|_{\mathbb C^2}^2 \leqslant C^{\nu, \kappa}_g\ln^2(\cdot)
\end{align*}
with finite constants $C_f^{\nu, \kappa}$ and $C^{\nu, \kappa}_g$ independent of $\zeta$. Moreover, \eqref{m} and \eqref{Imaginary conditions 2} imply the bound
\begin{align*}
 m_0^{\nu, \kappa}(\cdot) \leqslant C_m^{\nu, \kappa}\big(1 +\ln^2(\cdot)\big)^{-1}.
\end{align*}
Applying the above estimates to the right hand side of \eqref{integral via f and g} and using the monotonicity of the logarithm we conclude
\begin{align*}
 &\int_0^{2\|V_+\|_\infty} \lambda^{-1}m^{\nu, \kappa}_\theta(\lambda)\big\|\Phi_{0, \theta}^{\nu, \kappa}(\lambda; r)\big\|_{\mathbb C^2}^2\mathrm d\lambda\\ &\leqslant 2C_m^{\nu, \kappa}\Big(C_f^{\nu, \kappa}\ln^2\big(\mathrm e +2\|V_+\|_\infty r\big) +C^{\nu, \kappa}_g\ln^2(\mathrm e^{\tan\theta}r)\Big)\int_0^{2\|V_+\|_\infty\mathrm e^{-\tan\theta}}\hspace{-0.8cm}\frac{\mathrm d\mu}{\mu(1 +\ln^2\mu)},
\end{align*}
where the integral can be estimated by $\pi$. Substituting into \eqref{BS for beta=0} we obtain \eqref{no VL for beta=0}.
\end{proof}

\section{Proof of Theorem \ref{t: LT}}

\begin{lem}\label{l: number of eigenvalues below tau}
For $\tau >0$, $q >1$, and the combinations of $\nu, \kappa\in \mathbb R$ and $\theta \in[0, \pi)$ covered in \eqref{weights} there exists a constant $k^{\nu, \kappa}> 0$ such that
\begin{align*}
 \rank P_{(-\infty, -\tau)}\big(D_{\theta, \infty}^{\nu, \kappa}(V)\big) \leqslant \frac{\tau^{1 -q}}{q -1}k^{\nu, \kappa}\int_0^\infty \big\|V_+(r)\big\|^q_{\mathbb C^{2\times2}}W^{\nu, \kappa}_\theta(r)\,\mathrm dr
\end{align*}
holds.
\end{lem}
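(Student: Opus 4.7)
The plan is to combine Lemmata \ref{l:Birman-Schwinger} and \ref{l:IntegralEstimate} (with $E := \infty$) to reduce the claim to the pointwise bound
\begin{align*}
J_\tau^{\nu, \kappa, \theta}(r) := \int_0^\infty (\lambda + \tau)^{-q} m^{\nu, \kappa}_\theta(\lambda) \big\|\Phi_{0, \theta}^{\nu, \kappa}(\lambda; r)\big\|_{\mathbb C^2}^2 \,\mathrm d\lambda \leqslant \frac{\tau^{1-q}}{q-1}\, k^{\nu, \kappa} W^{\nu, \kappa}_\theta(r),
\end{align*}
valid uniformly in $r > 0$. The factor $\|V_+(r)\|_{\mathbb C^{2\times 2}}^q$ can be pulled out of the double integral via $\|V_+^{q/2}(r) \Phi_{0, \theta}^{\nu, \kappa}(\lambda; r)\|_{\mathbb C^2}^2 \leqslant \|V_+(r)\|_{\mathbb C^{2\times 2}}^q \|\Phi_{0, \theta}^{\nu, \kappa}(\lambda; r)\|_{\mathbb C^2}^2$. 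From there I would proceed case-by-case according to the three branches of \eqref{weights}.

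In the subcases where $W^{\nu, \kappa}_\theta \equiv 1$ (namely $\beta \in \mathrm i\mathbb R_+$, $\kappa = 0$, or $\nu^2 = \kappa^2 \neq 0$ with $\theta = \pi/2$) I expect $m^{\nu, \kappa}_\theta(\lambda) \|\Phi_{0, \theta}^{\nu, \kappa}(\lambda; r)\|_{\mathbb C^2}^2$ to be uniformly bounded in $(\lambda, r) \in \mathbb R_+ \times \mathbb R_+$. This should follow by combining the explicit formulas of Lemmata \ref{Lemma solutions dirac equation}, \ref{l:c^kappa,nu} and \ref{l:Phi_0} with the small-argument asymptotics of Lemma \ref{lemma asymptotics at zero}, the large-argument asymptotics 13.7.2--3 of \cite{dlmf}, and the non-vanishing statement of Lemma \ref{l:no poles}. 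The prefactor $\tau^{1-q}/(q-1)$ is then produced by the elementary integral $\int_0^\infty (\lambda + \tau)^{-q} \mathrm d\lambda$.

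For $\beta \in (0, 1/2)$ and $\theta \in (0, \pi)$, where the weight is $\max\{1, |\cot\theta| r^{-2\beta}\}$, I would reuse the rescaling $\mu := |\cot\theta|^{1/(2\beta)} \lambda$ introduced in the proof of Theorem \ref{t: the alternative}, Part II, together with the bounds \eqref{tilde m bound} and \eqref{Phi_M and Phi_U estimates}. The only modification is that the factor $\lambda^{-q}$ appearing in that earlier argument is replaced by $(\lambda + \tau)^{-q}$, which removes the divergence at $\lambda = 0$ and generates the factor $\tau^{1-q}/(q-1)$ by elementary manipulation. The subcase $\beta \geqslant 1/2$ with $\theta = \pi/2$ reduces to this one since $|\cot\theta| = 0$; alternatively it is handled by the substitution $\mu := \lambda r$ combined with $\Phi_{0, \pi/2}^{\nu, \kappa}(\lambda; r) = \lambda^{-\beta} \Phi_M^{\nu, \kappa}(\lambda r)$ and $m^{\nu, \kappa}_{\pi/2}(\lambda) \propto \lambda^{2\beta}$ as computed from \eqref{m}.

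The main obstacle is the case $\beta = 0$, $\kappa \neq 0$, $\theta \in [0, \pi) \setminus \{\pi/2\}$, where $W^{\nu, \kappa}_\theta(r) = \max\{-\ln(\mathrm e^{\tan\theta} r), 1\}^2$. Following the approach of the proof of Theorem \ref{t: the alternative}, Part III, I would use the decomposition
\begin{align*}
\Phi_{0, \theta}^{\nu, \kappa}(\lambda; r) = \cos\theta\, f^{\nu, \kappa}(\lambda r) + \cos\theta \ln(\mathrm e^{\tan\theta} r)\, \Phi_M^{\nu, \kappa}(\lambda r),
\end{align*}
together with the bounds $\|f^{\nu, \kappa}(x)\|_{\mathbb C^2}^2 \leqslant C_f^{\nu, \kappa} \ln^2(\mathrm e + x)$, $\|\Phi_M^{\nu, \kappa}(x)\|_{\mathbb C^2}^2 \leqslant C_M^{\nu, \kappa}$, and $m^{\nu, \kappa}_\theta(\lambda) \leqslant C_m^{\nu, \kappa} /(1 + \ln^2 \lambda)$ already established there. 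The $\Phi_M$-contribution is directly bounded by a constant multiple of $\ln^2(\mathrm e^{\tan\theta} r)\,\tau^{1-q}/(q-1)$, which is absorbed into $W^{\nu, \kappa}_\theta(r)$. The remaining $f^{\nu, \kappa}$-contribution leads to $\int_0^\infty (\lambda + \tau)^{-q} \ln^2(\mathrm e + \lambda r)/(1 + \ln^2 \lambda)\, \mathrm d\lambda$, which I would bound by splitting the $\lambda$-integration at $\lambda = 1/r$ and using monotonicity of the logarithm; the expected worst-case output is an additional factor $\max\{\ln^2 r, 1\}$, which likewise fits within the weight. The main technical challenge is carefully tracking the interplay between the logarithmic growth of $f^{\nu, \kappa}$, the factor $\ln(\mathrm e^{\tan\theta} r)$ and the $\ln^{-2}\lambda$-decay of $m^{\nu, \kappa}_\theta$, to arrive at exactly the weight $\max\{-\ln(\mathrm e^{\tan\theta} r), 1\}^2$ with a constant $k^{\nu,\kappa}$ independent of $\theta$.
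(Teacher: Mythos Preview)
Your reduction via Lemma \ref{l:IntegralEstimate} is exactly what the paper does. The paper, however, then takes a shorter route: instead of bounding the full $\lambda$-integral $J_\tau^{\nu,\kappa,\theta}(r)$ case by case, it immediately pulls out the supremum,
\[
J_\tau^{\nu,\kappa,\theta}(r) \leqslant \Big(\sup_{\lambda>0} m_\theta^{\nu,\kappa}(\lambda)\big\|\Phi_{0,\theta}^{\nu,\kappa}(\lambda;r)\big\|_{\mathbb C^2}^2\Big)\int_0^\infty(\lambda+\tau)^{-q}\,\mathrm d\lambda,
\]
so that the factor $\tau^{1-q}/(q-1)$ appears at once and the whole case analysis reduces to the $\tau$-free pointwise estimate $\sup_{\lambda>0} m_\theta^{\nu,\kappa}(\lambda)\|\Phi_{0,\theta}^{\nu,\kappa}(\lambda;r)\|_{\mathbb C^2}^2 \leqslant k^{\nu,\kappa}W_\theta^{\nu,\kappa}(r)$. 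This is what you already do in the $W\equiv 1$ cases; the paper simply observes that the same manoeuvre works uniformly, and for $\beta>0$ it is considerably cleaner than carrying $(\lambda+\tau)^{-q}$ through the rescaling of Part~II.

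Your treatment of the case $\beta=0$, $\kappa\neq 0$, $\theta\neq\pi/2$ has a genuine gap. After writing $\Phi_{0,\theta}^{\nu,\kappa}(\lambda;r) = \cos\theta\big(f^{\nu,\kappa}(\lambda r) + \ln(\mathrm e^{\tan\theta}r)\,\Phi_M^{\nu,\kappa}(\lambda r)\big)$ you bound the two summands separately via the triangle inequality. For $x:=\mathrm e^{\tan\theta}r\gg 1$ this destroys an essential cancellation: at $\lambda = \mathrm e^{\tan\theta}$ one has $\|f^{\nu,\kappa}(\lambda r)\|_{\mathbb C^2}\sim|\ln x|$ and $|\ln x|\cdot\|\Phi_M^{\nu,\kappa}(\lambda r)\|_{\mathbb C^2}\sim|\ln x|$, whereas their sum equals $\Phi_U^{\nu,\kappa}(x)$, which is bounded. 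Since at that value of $\lambda$ the factor from $m_\theta^{\nu,\kappa}$ is of order one, both your $\Phi_M$-contribution and your $f$-contribution are of order $\ln^2 x$, while the target weight is $W_\theta^{\nu,\kappa}(r)=\max\{-\ln x,1\}^2 = 1$ in that regime. (A related slip: the bound $m_\theta^{\nu,\kappa}(\lambda)\leqslant C_m/(1+\ln^2\lambda)$ is not what Part~III establishes; there the estimate is on $m_0^{\nu,\kappa}$, which translates to $m_\theta^{\nu,\kappa}(\lambda)\cos^2\theta\lesssim \big(1+\ln^2(\mathrm e^{-\tan\theta}\lambda)\big)^{-1}$.) The paper avoids the loss by not splitting $f$ from $\Phi_M$: with $\zeta:=\mathrm e^{-\tan\theta}\lambda$ it estimates $\|\Phi_U^{\nu,\kappa}(\zeta x) - (\ln\zeta)\Phi_M^{\nu,\kappa}(\zeta x)\|_{\mathbb C^2}$ directly, distinguishing only the two regimes $\zeta x\lessgtr 1$, which yields the correct asymmetric weight $\max\{-\ln x,1\}^2$.
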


\begin{proof}
Lemma \ref{l:IntegralEstimate} implies
\begin{align*}\begin{split}
 &\rank P_{(-\infty, -\tau)}\big(D_{\theta, \infty}^{\nu, \kappa}(V)\big)\\ &\leqslant \frac{\tau^{1 -q}}{q -1}\int_0^\infty \!\!\!\big\|V_+(r)\big\|^q_{\mathbb C^{2\times 2}}\Big\|\sqrt{m^{\nu, \kappa}_\theta(\cdot)}\Phi_{0, \theta}^{\nu, \kappa}(\cdot; r)\Big\|_{\mathsf L^\infty(\mathbb R_+, \mathbb C^2)}^2\mathrm dr\end{split}
\end{align*}
holds for all $q >1$. It remains to obtain the estimate
\begin{align}\label{weight control}
 \Big\|\sqrt{m^{\nu, \kappa}_\theta(\cdot)}\Phi_{0, \theta}^{\nu, \kappa}(\cdot; r)\Big\|_{\mathsf L^\infty(\mathbb R_+, \mathbb C^2)}^2 \leqslant k^{\nu, \kappa}W^{\nu, \kappa}_{\theta}(r)
\end{align}
for all $r >0$ with finite $k^{\nu, \kappa}> 0$ and $W^{\nu, \kappa}_{\theta}$ defined in \eqref{weights}.

\emph{Case $\nu^2 = \kappa^2\neq 0$ and $\theta =\pi/2$\emph:} By \eqref{m} and Lemma \ref{l:Phi_0} $m^{\nu, \kappa}_{\pi/2}(\lambda) =m^{\nu, \kappa}_{\pi/2}(1)$ for all $\lambda \in\mathbb R_+$ and for all $r >0$ we have 
\begin{align*}
\big\|\Phi_{0, \pi/2}^{\nu, \kappa}(\cdot; r)\big\|_{\mathsf L^\infty(\mathbb R_+, \mathbb C^2)} =\big\|\Phi_M^{\nu, \kappa}(\cdot\,r)\big\|_{\mathsf L^\infty(\mathbb R_+, \mathbb C^2)} = \big\|\Phi_M^{\nu, \kappa}\big\|_{\mathsf L^\infty(\mathbb R_+, \mathbb C^2)} < \infty. 
\end{align*}

\emph{Case $\kappa \neq 0$, $\beta \in \mathrm i\mathbb R_+$ and $\theta \in[0, \pi)$\emph:} A computation based on Formula 5.4.3 in \cite{dlmf} allows us to justify the relations
\begin{align*}
 c_{+, +}^{\nu, \kappa}\cdot\overline{c_{-}^{\nu, \kappa}} =1\quad \text{ and }\quad c_{+, +}^{\nu, \kappa}({c_{-}^{\nu, \kappa}})^{-1} \gtrless 1\text{ for } \nu \gtrless 0.
\end{align*}
for the coefficients introduced in \eqref{c_+,pm}, \eqref{c_-} provided $\beta \in\mathbb R_+$.
Hence there exist $\rho\in (1, \infty)$ and $\omega\in [0, 2\pi)$ depending only on $\kappa$ and $\nu$ such that
\begin{align*}
 c_{+, +}^{\nu, \kappa} =\rho^{\sign\nu}\mathrm e^{\mathrm i\omega}\quad \text{and}\quad c_{-}^{\nu, \kappa} =\rho^{-\sign\nu}\mathrm e^{\mathrm i\omega}.
\end{align*}
Substituting this into \eqref{m} by \eqref{c^kappa,nu}, \eqref{W_MU}, \eqref{a} and \eqref{b} we get
\begin{align*}
 m_\theta^{\nu, \kappa}(\lambda) &=\frac1{4\pi\kappa^2|\nu||\beta|}\cdot\frac{\rho -\rho^{-1}}{\rho +\rho^{-1} -2\cos\big(\omega +2\theta +2|\beta|\ln\lambda\big)}\\ &\leqslant\frac1{4\pi\kappa^2|\beta\nu|}\cdot\frac{\rho -\rho^{-1}}{\rho+ \rho^{-1} -2} =\frac1{4\pi\kappa^2|\beta\nu|}\cdot\frac{\rho +1}{\rho -1}.
\end{align*}
Taking into account
\begin{align*}
 \big\|\Phi_{0, \theta}^{\nu, \kappa}(\cdot\,; r)\big\|_{\mathsf L^\infty(\mathbb R_+, \mathbb C^2)} &=\Big\|2\Re\big((\cdot)^{-\beta}\mathrm e^{-\mathrm i\theta}\Phi_M^{\nu, \kappa}(\cdot\, r)\big)\Big\|_{\mathsf L^\infty(\mathbb R_+, \mathbb C^2)}\\ &\leqslant 2\big\|\Phi_M^{\nu, \kappa}\big\|_{\mathsf L^\infty(\mathbb R_+, \mathbb C^2)} <\infty
\end{align*}
we obtain \eqref{weight control}.

\emph{Case $\kappa =0$, $\nu\in\mathbb R$ and $\theta\in[0, \pi)$\emph:} Here \eqref{weight control} follows immediately from \eqref{m^0}, Lemma \ref{l:Phi_0} and \eqref{Phi for kappa=0}.

\emph{Case $\nu^2 = \kappa^2\neq 0$ and $\theta \in [0, \pi)\setminus\{\pi/2\}$\emph:} The left hand side of \eqref{weight control} coincides with $Z^{\nu, \kappa}_0(\mathrm e^{\tan\theta}r)$, where $Z^{\nu, \kappa}_0:\mathbb R_+\to \mathbb R$ is defined by
\begin{align}\label{W_0}
 Z^{\nu, \kappa}_0(x) :=\sup_{\zeta \in\mathbb R_+}\frac{|c_- -c_{+, +}|\big\|\Phi_U^{\nu, \kappa}(\zeta x) -(\ln\zeta)\Phi_M^{\nu, \kappa}(\zeta x)\big\|_{\mathbb C^2}^2}{\big|2\pi\nu(c_{+, +}^{\nu, \kappa} +\ln\zeta)(c_{-}^{\nu, \kappa} +\ln\zeta)\big|}.
\end{align}
By Lemma \ref{l:no poles} we can estimate
\begin{align}\label{denominator of W_0}
 \frac{|c_- -c_{+, +}|}{\big|2\pi\nu(c_{+, +}^{\nu, \kappa} +\ln\zeta)(c_{-}^{\nu, \kappa} +\ln\zeta)\big|} \leqslant \frac{K^{\nu, \kappa}}{1 +\ln^2\zeta}
\end{align}
with some $K^{\nu, \kappa}\in \mathbb R_+$. On the other hand, Lemma \ref{lemma asymptotics at zero} and boundedness of $\Phi_U^{\nu, \kappa}$ and $\Phi_M^{\nu, \kappa}$ on $(1, \infty)$ imply
\begin{align}\label{numerator of W_0}
 \big\|\Phi_U^{\nu, \kappa}(\zeta x) -(\ln\zeta)\Phi_M^{\nu, \kappa}(\zeta x)\big\|_{\mathbb C^2}^2 \leqslant L^{\nu, \kappa}\begin{cases}
                                                                                                                              1 +\ln^2x, &\text{for }\zeta \leqslant x^{-1};\\ 1 +\ln^2\zeta, &\text{for }\zeta \geqslant x^{-1}
                                                                                                                             \end{cases}
\end{align}
with finite $L^{\nu, \kappa}$. Inserting \eqref{denominator of W_0} and \eqref{numerator of W_0} into \eqref{W_0} we obtain \eqref{weight control}.

\emph{Case $\beta \in (0, 1/2)$ and $\theta \in (0, \pi)$ or $\beta \geqslant1/2$ and $\theta =\pi/2$\emph:} For $\theta =\pi/2$ the left hand side of \eqref{weight control} is given by $m_{\pi/2}^{\nu, \kappa}(1)\|\Phi_M^{\nu, \kappa}\|_{\mathsf L^\infty(\mathbb R_+, \mathbb C^2)}^2$ and the statement follows. Otherwise, by \eqref{m}, \eqref{W_MU} and Lemma \ref{l:Phi_0} we have
\begin{align}\label{technical weight}
 \Big\|\sqrt{m^{\nu, \kappa}_\theta(\cdot)}\Phi_{0, \theta}^{\nu, \kappa}(\cdot; r)\Big\|_{\mathsf L^\infty(\mathbb R_+, \mathbb C^2)}^2 =Z^{\nu, \kappa}_{\sign(\tan\theta)}\big(|\tan\theta|^{1/(2\beta)}r\big),
\end{align}
where $Z^{\nu, \kappa}_\pm: \mathbb R_+\to \mathbb R$ is defined by
\begin{align}\label{W_pm}
 Z^{\nu, \kappa}_\pm(x) :=\sup_{\zeta \in\mathbb R_+}\frac{|c_- -c_{+, +}|\big\|\zeta\Phi_U^{\nu, \kappa}(\zeta^{1/(2\beta)} x) \pm\Phi_M^{\nu, \kappa}(\zeta^{1/(2\beta)} x)\big\|_{\mathbb C^2}^2}{\big|2\pi W[\Phi_M^{\nu, \kappa}, \Phi_U^{\nu, \kappa}](c_{+, +}^{\nu, \kappa}\zeta \mp1)(c_{-}^{\nu, \kappa}\zeta \mp1)\big|}.
\end{align}
By \eqref{W_MU} and \eqref{Imaginary conditions} we can estimate
\begin{align}\label{denominator of W_pm}
 \frac{|c_- -c_{+, +}|}{\big|2\pi W[\Phi_M^{\nu, \kappa}, \Phi_U^{\nu, \kappa}](c_{+, +}^{\nu, \kappa}\zeta \mp1)(c_{-}^{\nu, \kappa}\zeta \mp1)\big|} \leqslant \frac{X^{\nu, \kappa}}{1 +\zeta^2}
\end{align}
with some finite $X^{\nu, \kappa}$. On the other hand, \eqref{Phi_M and Phi_U estimates} implies
\begin{align}\label{numerator of W_pm}\begin{split}
 &\big\|\zeta\Phi_U^{\nu, \kappa}(\zeta^{1/(2\beta)} x) \pm\Phi_M^{\nu, \kappa}(\zeta^{1/(2\beta)} x)\big\|_{\mathbb C^2}^2\\& \leqslant Y^{\nu, \kappa}\begin{cases}
                                                                                                                              \zeta(x^{-2\beta} +x^{2\beta}), &\text{for }\zeta \leqslant x^{-2\beta};\\ 1 +\zeta^2, &\text{for }\zeta \geqslant x^{-2\beta}
                                                                                                                             \end{cases}
\end{split}\end{align}
with finite $Y^{\nu, \kappa}$. Inserting \eqref{denominator of W_pm} and \eqref{numerator of W_pm} into \eqref{W_pm} we obtain
\begin{align*}
 Z^{\nu, \kappa}_\pm(x) &\leqslant X^{\nu, \kappa}Y^{\nu, \kappa}\max\Big\{1, (x^{-2\beta} +x^{2\beta})\sup_{\zeta \leqslant x^{-2\beta}}\frac\zeta{1 +\zeta^2}\Big\}\\ &\leqslant X^{\nu, \kappa}Y^{\nu, \kappa}\max\{1, x^{-2\beta}\}.
\end{align*}
Inserting into \eqref{technical weight} we get \eqref{weight control}.
\end{proof}

\begin{proof}[Proof of Theorem \ref{t: LT}, part a).]
 The statement follows from Lemma \ref{l: number of eigenvalues below tau} by the\\ usual arguments (see e.g. the proof of Theorem 1.1 in \cite{Frank2009}): Let $\gamma >0$ and $q \in(1, 1+ \gamma)$. By the minimax principle we have 
\begin{align}\label{HLT via CLR}\begin{split}
 &\tr\big(D_{\theta, \infty}^{\nu, \kappa}(V)\big)_-^\gamma =\int_0^\infty\gamma\tau^{\gamma -1}\rank P_{(-\infty, -\tau)}\big(D_{\theta, \infty}^{\nu, \kappa}(V)\big)\mathrm d\tau\\ &\leqslant \int_0^\infty\gamma\tau^{\gamma -1}\rank P_{(-\infty, -\tau/2)}\Big(D_{\theta, \infty}^{\nu, \kappa}\big((V_+ -\tau/2)_+\big)\Big)\mathrm d\tau\\ &\leqslant \frac{\gamma k^{\nu, \kappa}}{q -1}\int_0^\infty W_\theta^{\nu, \kappa}(r)\int_0^\infty\tau^{\gamma -1}\big\|(V_+(r) -\tau/2)_+\big\|^q_{\mathbb C^{2\times2}}(\tau/2)^{1 -q}\mathrm d\tau\,\mathrm dr.
\end{split}\end{align}
Passing to the new integration variable $\sigma :=\tau\big\|V_+(r)\big\|_{\mathbb C^{2\times2}}^{-1}/2$ in the inner integral we can rewrite the right hand side of \eqref{HLT via CLR} as
\begin{align*}
2^\gamma\gamma k^{\nu, \kappa}\bigg(\frac1{q -1}\int_0^1(1 -\sigma)^q\sigma^{\gamma -q}\mathrm d\sigma\bigg)\int_0^\infty\big\|V_+(r)\big\|_{\mathbb C^{2\times2}}^{1 +\gamma}W^{\nu, \kappa}_{\theta}(r)\,\mathrm dr.
\end{align*}
Minimising in $q \in(1, 1+ \gamma)$ we arrive at \eqref{LT}.
\end{proof}

\begin{proof}[Proof of Theorem \ref{t: LT}, part b)]
Let $q \in(1, 1 +\gamma -2\beta)$. Applying Lemma \ref{l:IntegralEstimate} and the minimax principle we get by \eqref{m} and Lemma \ref{l:Phi_0}
\begin{align}\begin{split}\label{estimate with Omega}
  &\tr\big(D_{0, \infty}^{\nu, \kappa}(V)\big)_-^\gamma =\int_0^\infty\gamma\tau^{\gamma -1}\rank P_{(-\infty, -\tau)}\big(D_{0, \infty}^{\nu, \kappa}(V)\big)\mathrm d\tau\\ &\leqslant \int_0^\infty\gamma\tau^{\gamma -1}\rank P_{(-\infty, -\tau/2)}\Big(D_{0, \infty}^{\nu, \kappa}\big((V_+ -\tau/2)_+\big)\Big)\mathrm d\tau\\ &\leqslant\int_0^\infty\!\!\int_0^{2\|V_+(r)\|_{\mathbb C^{2\times2}}}\!\!\!\gamma\tau^{\gamma -1}m_0^{\nu, \kappa}(1)\Big(\big\|V_+(r)\big\|_{\mathbb C^{2\times2}} -\frac\tau2\Big)^q\Big(\frac\tau2\Big)^{1 -q}\Omega_q^{\nu, \kappa}\Big(\frac{\tau r}2\Big)\mathrm d\tau\,\mathrm dr,
\end{split}\end{align}
with $\Omega_q^{\nu, \kappa}:\mathbb R_+ \to\mathbb R$ given by
\begin{align*}
 \Omega_q^{\nu, \kappa}(x) :=\int_0^\infty\frac{\big\|\Phi_U^{\nu, \kappa}(\lambda x)\big\|_{\mathbb C^{2}}^2}{(1 +\lambda)^q}\mathrm d\lambda.
\end{align*}
Employing \eqref{Phi_M and Phi_U estimates} we conclude the existence of $H^{\nu, \kappa}_q \in\mathbb R_+$ such that
\begin{align*}
\Omega_q^{\nu, \kappa} \leqslant H^{\nu, \kappa}_q\big((\cdot)^{-2\beta} +1\big). 
\end{align*}
Substituting into \eqref{estimate with Omega}, computing the inner integral and minimising over\\ $q \in(1, 1 +\gamma -2\beta)$ we obtain \eqref{LT modified}.
\end{proof}

\section{Proof of Theorem \ref{t: virtual levels 2d}}

We will need the following lemma to control higher channels in the angular momentum decomposition.

\begin{lem}\label{l: higher channels}
Let $\nu\in \mathbb R$. For all 
\begin{align}\label{kappa is big}
\kappa\in \mathbb Z+1/2\quad \text{with}\quad |\kappa| \geqslant \kappa_\nu :=\min\{\kappa\in \mathbb N +1/2: \kappa^2 >\nu^2 +1/4\}
\end{align}
the estimate
\begin{align}\label{lower bound on modulus}
 |D^{\nu, \kappa}_{\pi/2}|\geqslant C^\nu |D^{0, \kappa}_{\pi/2}|
\end{align}
holds with
\begin{align*}
 C^\nu :=1 -\frac{\nu^2(\kappa_\nu^2 +1/4)}{(\kappa_\nu^2 -1/4)^2}\bigg(\sqrt{1 +\frac{(4\kappa_\nu^2 -\nu^2)(\kappa_\nu^2 -1/4)^2}{(\kappa_\nu^2 +1/4)^2\nu^2}} -1\bigg) >0.
\end{align*}
\end{lem}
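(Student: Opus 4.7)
The hypothesis $|\kappa|\geq\kappa_\nu$ forces $\beta=\sqrt{\kappa^2-\nu^2}\geq 1/2$, so Theorem~\ref{t: self-adjoint realisations} ensures that both $D^{\nu,\kappa}_{\pi/2}$ and $D^{0,\kappa}_{\pi/2}$ are closures of their restrictions to $\mathsf C_0^\infty(\mathbb R_+,\mathbb C^2)$. By operator monotonicity of the square root the inequality $|D^{\nu,\kappa}_{\pi/2}|\geq C^\nu|D^{0,\kappa}_{\pi/2}|$ will follow from the stronger quadratic form inequality $(D^{\nu,\kappa}_{\pi/2})^2\geq (1-M_{\kappa_\nu})(D^{0,\kappa}_{\pi/2})^2$, where $M_{\kappa_\nu}$ denotes the quantity subtracted from $1$ in the definition of $C^\nu$: indeed $\sqrt{x}\geq x$ on $[0,1]$ then gives $|D^{\nu,\kappa}_{\pi/2}|\geq\sqrt{1-M_{\kappa_\nu}}\,|D^{0,\kappa}_{\pi/2}|\geq(1-M_{\kappa_\nu})|D^{0,\kappa}_{\pi/2}|=C^\nu|D^{0,\kappa}_{\pi/2}|$.

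Expanding $d^{\nu,\kappa}=d^{0,\kappa}-(\nu/r)\mathbb I$ and using the factorisation $d^{0,\kappa}f=(L^*f_2,Lf_1)^\intercal$ with $L:=r^\kappa\partial_r r^{-\kappa}$ I obtain $\|d^{0,\kappa}f\|^2=\|Lf_1\|^2+\|L^*f_2\|^2$ together with the sharp Hardy bounds $\|Lf_1\|^2\geq(\kappa-\tfrac12)^2\|f_1/r\|^2$ and $\|L^*f_2\|^2\geq(\kappa+\tfrac12)^2\|f_2/r\|^2$. The commutator identity $L(g/r)-(Lg)/r=-g/r^2$ and integration by parts recast the cross term as $2\nu\,\Re\langle d^{0,\kappa}f,f/r\rangle=4\nu\,\Re\langle Lf_1,f_2/r\rangle-2\nu\,\Re\langle f_1,f_2/r^2\rangle$, which I bound by Cauchy--Schwarz. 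Introducing the non-negative scalars $u:=\|Lf_1\|$, $v:=\|L^*f_2\|$, $p:=\|f_1/r\|$, $q:=\|f_2/r\|$, the problem reduces to non-negativity of the quadratic form $M(u^2+v^2)+\nu^2(p^2+q^2)-4|\nu|uq-2|\nu|pq$ on the Hardy cone $\{u\geq(\kappa-\tfrac12)p,\;v\geq(\kappa+\tfrac12)q\}$, with $M:=1-(C^\nu)^2$. Minimising on the Hardy boundary yields the reduced form $Ap^2+Bq^2-4|\nu|\kappa pq$ with $A=M(\kappa-\tfrac12)^2+\nu^2$, $B=M(\kappa+\tfrac12)^2+\nu^2$, whose non-negativity amounts to the discriminant inequality
\[
M^2(\kappa^2-\tfrac14)^2+2M\nu^2(\kappa^2+\tfrac14)-\nu^2(4\kappa^2-\nu^2)\geq 0.
\]
The larger root is exactly $M_\kappa=\tfrac{\nu^2(\kappa^2+1/4)}{(\kappa^2-1/4)^2}(\sqrt{1+X_\kappa}-1)$ with $X_\kappa$ as in the statement, giving the claimed value at $\kappa=\kappa_\nu$; positivity of $1-M_{\kappa_\nu}$ reduces to $4\kappa_\nu^2>\nu^2$, automatic from $\kappa_\nu^2>\nu^2+1/4$. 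A monotonicity check shows that $M_\kappa$ is decreasing in $|\kappa|$, so the same constant works uniformly on $\{|\kappa|\geq\kappa_\nu\}$.

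The main obstacle I anticipate is justifying that the binding minimum of the reduced quadratic form really lies on the Hardy boundary rather than at an interior critical point in $u$: naive analysis produces a secondary constraint $M\geq 2|\nu|/(\kappa+\tfrac12)$ that can fail in the delicate regime $\kappa_\nu^2\in(\nu^2+\tfrac14,(|\nu|+\tfrac12)^2)$, precisely where the lemma is most needed. The resolution is that the two Cauchy--Schwarz inequalities $|\Re\langle Lf_1,f_2/r\rangle|\leq\|Lf_1\|\|f_2/r\|$ and $|\Re\langle f_1,f_2/r^2\rangle|\leq\|f_1/r\|\|f_2/r\|$ cannot be simultaneously saturated by any non-trivial $\mathsf L^2$ function, so no test function actually realises the putative interior minimiser; making this correlation quantitative eliminates the spurious constraint, leaving precisely the discriminant inequality above and its characteristic square root in $C^\nu$.
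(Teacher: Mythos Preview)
Your approach is quite different from the paper's. The paper does not pass to squares at all: it cites Lemma~28 of \cite{MorozovMueller}, where the inequality $|D^{\nu,\kappa}_{\pi/2}|\geq(1-b)|D^{0,\kappa}_{\pi/2}|$ is reduced (via Mellin diagonalisation of these homogeneous operators) to the pointwise scalar condition
\[
a^\nu_{\kappa,-}(b,s)=\nu^2+b/4+\kappa^2 b+s^2 b-\sqrt{4\kappa^2\nu^2+4\nu^2 s^2+\kappa^2 b^2}\ \geq 0,\qquad s\in\mathbb R,
\]
and then proves by elementary monotonicity in $\kappa$ and $s$ that the minimum occurs at $\kappa=\kappa_\nu$, $s=0$, which gives precisely the quadratic $f_\nu(b)\geq0$ and hence the stated $C^\nu$. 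Your route through $(D^{\nu,\kappa})^2$, Hardy inequalities and Cauchy--Schwarz is more elementary, and it is striking that your Hardy-boundary discriminant reproduces the same polynomial $f_\nu$.

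There is, however, a genuine gap. You yourself identify it: once you have decoupled the cross terms by Cauchy--Schwarz, the resulting scalar form $M(u^2+v^2)+\nu^2(p^2+q^2)-4|\nu|uq-2|\nu|pq$ on the Hardy cone can take its minimum at an interior point in $u$ rather than on the boundary, and in the regime $\kappa_\nu^2\in(\nu^2+\tfrac14,(|\nu|+\tfrac12)^2)$ this interior minimum is negative for $M=M_{\kappa_\nu}$. Your resolution---that the two Cauchy--Schwarz inequalities cannot be simultaneously saturated in $\mathsf L^2$---is a qualitative remark, not an argument: after the reduction to scalars all correlation between $Lf_1$, $f_1/r$ and $f_2/r$ has been discarded, so no amount of ``non-saturation'' recovers it. To close the gap you would need a genuinely sharper joint inequality replacing the two separate Cauchy--Schwarz bounds, and you have not supplied one. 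The paper's Mellin approach sidesteps this entirely because it is exact rather than an upper bound.

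A small inconsistency: in the first paragraph $M_{\kappa_\nu}$ is defined as $1-C^\nu$, but in the second you write $M:=1-(C^\nu)^2$; the discriminant argument and the $\sqrt{x}\geq x$ step both want the former.
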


\begin{proof}
The case of $\nu =0$ is trivial. Otherwise we proceed as in the proof of Lemma 28 in \cite{MorozovMueller}. The only difference is that we now have relaxed the assumptions on $\kappa$ and $\nu$. 

As explained in the original proof, the statement of the lemma holds true provided for $b:= 1 -C^\nu$ we have $b <1$ and the inequality 
\begin{align}\label{a nu}
a^\nu_{\kappa, -}(b, s) :=\nu^2 +b/4 +\kappa^2b +s^2b -(4\kappa^2\nu^2 +4\nu^2s^2 +\kappa^2b^2)^{1/2} \geqslant 0
\end{align}
holds for all $\kappa$ satisfying \eqref{kappa is big} and $s\in\mathbb R$.
Since \eqref{a nu} does not depend on the signs of $\kappa$ and $s$, without loss of generality we can assume $\kappa, s\in \overline{\mathbb R_+}$.
Extending \eqref{a nu} to $\kappa \in\mathbb R$, we get for $\kappa >0$ satisfying \eqref{kappa is big}
\begin{align*}
 &a^\nu_{\kappa +1, -}(b, s) -a^\nu_{\kappa, -}(b, s) =\int_{\kappa}^{\kappa +1}\frac{\partial a^\nu_{\varkappa, -}}{\partial \varkappa}(b, s)\,\mathrm d\varkappa\\
&=(2\kappa +1)b -\int_{\kappa}^{\kappa +1}\frac{(4\nu^2 +b^2)\varkappa}{\sqrt{(4\nu^2 +b^2)\varkappa^2 +4\nu^2s^2}}\,\mathrm d\varkappa \geqslant (2\kappa_\nu +1)b -\sqrt{4\nu^2 +b^2},
\end{align*}
so for all other parameters being fixed, $a^\nu_{\kappa, -}(b, s)$ is an increasing function of $\kappa \geqslant \kappa_\nu$ provided 
\begin{align*}
 (2\kappa_\nu +1)b -\sqrt{4\nu^2 +b^2} \geqslant 0,
\end{align*}
i.e.
\begin{align*}
 b \geqslant |\nu|/\sqrt{\kappa_\nu^2 +\kappa_\nu}
\end{align*}
holds.
For $s >0$ we have
\begin{align*}
 \frac{\partial a^\nu_{\kappa_\nu, -}(b, s)}{\partial s} &= 2s\Big(b -\frac{2\nu^2}{\sqrt{4\kappa_\nu^2\nu^2 +4\nu^2s^2 +\kappa_\nu^2b^2}}\Big) \geqslant 2s\Big(b -\frac{2\nu^2}{\kappa_\nu\sqrt{4\nu^2 +b^2}}\Big).
\end{align*}
Thus, provided 
\begin{align}\label{b preliminary estimate}
 b \geqslant\sqrt2|\nu|\big(\sqrt{1 +\kappa_\nu^{-2}} -1\big)^{1/2}\qquad \Big(> |\nu|/\sqrt{\kappa_\nu^2 +\kappa_\nu}\Big)
\end{align}
holds, for any $s\in \mathbb R_+$ we have $(\partial a^\nu_{\kappa_\nu, -}/\partial s)(b, s) \geqslant 0$ and hence for $\kappa \geqslant \kappa_\nu$ 
\begin{align}\label{a kappa_nu estimate}
 a^\nu_{\kappa, -}(b, s)\geqslant a^\nu_{\kappa_\nu, -}(b, 0) =\nu^2 +b/4 +\kappa_\nu^2b -\kappa_\nu(4\nu^2 +b^2)^{1/2}
\end{align}
holds. The right hand side of \eqref{a kappa_nu estimate} is positive provided $b >0$ satisfies
\begin{align}\label{b quadratic inequality}
 f_\nu(b) :=(\kappa_\nu^2 -1/4)^2b^2 +2(\kappa_\nu^2 +1/4)\nu^2b +\nu^4 -4\nu^2\kappa_\nu^2 \geqslant 0.
\end{align}
Here $f_\nu$ is a quadratic function with $f_\nu(0) <0$ and $f_\nu(1) =(\kappa_\nu^2 -\nu^2 -1/4)^2 >0$. Thus $f_\nu$ has a unique zero in $(0, 1)$ which coincides with $1 -C^\nu$. Hence for $b >0$ \eqref{b quadratic inequality} is equivalent to $b \geqslant 1 -C^\nu$. Note that this condition is more restrictive than \eqref{b preliminary estimate}, since 
\begin{align*}
 \sqrt2|\nu|\big(\sqrt{1 +\kappa_\nu^{-2}} -1\big)^{1/2} \leqslant |\nu|/\kappa_\nu
\end{align*}
and by \eqref{kappa is big}
\begin{align*}
  f_\nu\big(|\nu|/\kappa_\nu\big) \leqslant \nu^2(\kappa_\nu^{-2}/16 +\nu^2 -\kappa_\nu^2) \leqslant \nu^2(1/4 +\nu^2 -\kappa_\nu^2) <0
\end{align*}
holds.
\end{proof}

\begin{proof}[Proof of Theorem \ref{t: virtual levels 2d}, part 1.] 
For any $\psi\in P_{[0, \infty)}(D^{\nu, \kappa_0}_{\boldsymbol\theta(\kappa_0)})\mathfrak D(D^{\nu, \kappa_0}_{\boldsymbol\theta(\kappa_0)})$ we have
\begin{align*}
\Psi :=\mathcal A^* \,\bigoplus_{\kappa\in \mathbb Z+1/2}\delta_{\kappa, \kappa_0}\psi \in P_{[0, \infty)}(D^{\nu}_{\boldsymbol\theta})\mathfrak D(D^{\nu}_{\boldsymbol\theta})
\end{align*}
(here $\delta_{\kappa, \kappa_0}$ is a Kronecker delta) and by \eqref{A} and \eqref{Coulomb-Dirac 2D}
\begin{align}\label{channel passage}
 \langle\Psi, D^{\nu}_{\boldsymbol\theta}(Q)\Psi\rangle_{P_{[0, \infty)}(D^{\nu}_{\boldsymbol\theta})\mathsf L^2(\mathbb R^2, \mathbb C^2)} =\langle\psi, D^{\nu, \kappa_0}_{\boldsymbol\theta(\kappa_0)}(V)\psi\rangle_{P_{[0, \infty)}(D^{\nu, \kappa_0}_{\boldsymbol\theta(\kappa_0)})\mathsf L^2(\mathbb R_+, \mathbb C^2)}
\end{align}
holds with $V$ as defined in \eqref{V}.
But since all the assumptions of Theorem \ref{t: the alternative}, part I are satisfied for $D^{\nu, \kappa_0}_{\boldsymbol\theta(\kappa_0)}(V)$, we can choose $\psi$ so that \eqref{channel passage} is negative. Hence by the minimax principle $D^{\nu}_{\boldsymbol\theta}(Q)$ has non-empty negative spectrum.
\end{proof}

\begin{proof}[Proof of Theorem \ref{t: virtual levels 2d}, part 2.]
Given $Q$ satisfying \eqref{R bounds Q} and $\alpha\in \mathbb R_+$, we have
\begin{align}\label{orthogonal sum}
 D^{\nu}_{\boldsymbol\theta}(\alpha Q) \geqslant D^{\nu}_{\boldsymbol\theta}\Big(\alpha R\big(|\cdot|\big)\mathbb I\Big) =\mathcal A^*\bigg(\bigoplus_{\kappa\in \mathbb Z+1/2}D^{\nu, \kappa}_{\boldsymbol\theta(\kappa)}(\alpha R\mathbb I)\bigg)\mathcal A.
\end{align}
It is thus enough to show that every term on the right hand side of \eqref{orthogonal sum} has no negative spectrum for $\alpha\in [0, \alpha_c)$ with $\alpha_c >0$ independent of $\kappa$.

For all $\kappa\in \mathbb Z+1/2$ with $\kappa^2\leqslant \nu^2 +1/4$ assumption (a) or (b) and Theorem \ref{t: the alternative}, part II or III, respectively, imply the existence of $\alpha_\kappa> 0$ such that
\begin{align}\label{D is positive}
D^{\nu, \kappa}_{\boldsymbol\theta(\kappa)}(\alpha R\mathbb I) \geqslant 0 
\end{align}
holds for $\alpha\in [0, \alpha_\kappa)$.

Let us now consider $\kappa$ satisfying \eqref{kappa is big}.

Assumption (c) of the theorem means that there exist $R_1\in \mathsf L^\infty(\mathbb R_+, r\mathrm dr)$ and $R_2\in \mathsf L^2(\mathbb R_+, r\mathrm dr)$ such that $R =R_1 +R_2$.
We have
\begin{align}\label{algebraic doubling}
D^{\nu, \kappa}_{\boldsymbol\theta(\kappa)}(\alpha R\mathbb I) =\frac12\big(D^{\nu, \kappa}_{\boldsymbol\theta(\kappa)}(2\alpha R_1\mathbb I) +D^{\nu, \kappa}_{\boldsymbol\theta(\kappa)}(2\alpha R_2\mathbb I)\big).
\end{align}

Theorem 2.5 in \cite{Herbst1977} implies
\begin{align*}
\mathcal A^*\bigg(\bigoplus_{\kappa\in \mathbb Z+1/2}|D^{\kappa, 0}_{\pi/2}|\bigg)\mathcal A = (-\Delta)^{1/2}\mathbb I \geqslant K|\cdot|^{-1}\mathbb I =\mathcal A^*\bigg(\bigoplus_{\kappa\in \mathbb Z+1/2}K/(\cdot)\mathbb I\bigg)\mathcal A
\end{align*}
with
\begin{align*}
 K:= 2\frac{\Gamma^2(3/4)}{\Gamma^2(1/4)},
\end{align*}
from which we conclude
\begin{align}\label{Kato 1D}
 |D^{\kappa, 0}_{\pi/2}| \geqslant K/(\cdot), \qquad\text{for all }\kappa\in \mathbb Z+1/2.
\end{align}
For all $\kappa$ satisfying \eqref{kappa is big} by Theorem \ref{t: self-adjoint realisations} we have $\boldsymbol\theta(\kappa) =\pi/2$. Combining this with \eqref{lower bound on modulus} and \eqref{Kato 1D} we obtain
\begin{align*}
 |D^{\nu, \kappa}_{\boldsymbol\theta(\kappa)}|\geqslant C^\nu K/(\cdot)
\end{align*}
for all $\kappa$ satisfying \eqref{kappa is big}. Since $R_1\in \mathsf L^\infty(\mathbb R_+, r\mathrm dr)$, there exists $\alpha_0> 0$ such that for all $\kappa$ satisfying \eqref{kappa is big} 
\begin{align}\label{R_1 positivity}
D^{\nu, \kappa}_{\boldsymbol\theta(\kappa)}(2\alpha R_1\mathbb I) \geqslant 0 
\end{align}
holds for $\alpha\in [0, \alpha_0)$.

For $R_2$ we use \eqref{lower bound on modulus} to obtain
\begin{align}\begin{split}\label{R_2 game}
 &\rank P_{(-\infty, 0)}\big(D^{\nu, \kappa}_{\boldsymbol\theta(\kappa)}(2\alpha R_2\mathbb I)\big) \leqslant \rank P_{(-\infty, 0)}\big(|D^{\kappa, 0}_{\pi/2}| -2(C^\nu)^{-1}\alpha R_2\mathbb I\big)\\ &\leqslant \sum_{\kappa\in \mathbb Z +1/2}\rank P_{(-\infty, 0)}\big(|D^{0, \kappa}_{\pi/2}| -2(C^\nu)^{-1}\alpha R_2\mathbb I\big)\\ &=\rank P_{(-\infty, 0)}\Big((-\Delta)^{1/2}\mathbb I -2(C^\nu)^{-1}\alpha R_2\big(|\cdot|\big)\mathbb I\Big)\\ &=2\rank P_{(-\infty, 0)}\Big((-\Delta)^{1/2} -2(C^\nu)^{-1}\alpha R_2\big(|\cdot|\big)\Big).
\end{split}\end{align}
Now we invoke the Cwikel-Lieb-Rozenblum inequality for $(-\Delta)^{1/2}$ in $\mathbb R^2$: By Remark 2.5 in \cite{Daubechies1983} (or Example 3.3 in \cite{Frank2014CLR}) there exists $C_{\text{CLR}} >0$ such that the right hand side of \eqref{R_2 game} does not exceed
\begin{align*}
 2C_{\text{CLR}}\big\|R_2\big(|\cdot|\big)\big\|_{\mathsf L^2(\mathbb R^2)}^2 =16\pi(C^\nu)^{-2}\alpha^2 C_{\text{CLR}}\int_0^\infty R_2^2(r)r\,\mathrm dr =:(\alpha/\alpha_1)^2.
\end{align*}
Thus for $\alpha\in [0, \alpha_1)$ the operator $D^{\nu, \kappa}_{\boldsymbol\theta(\kappa)}(2\alpha R_2\mathbb I)$ has no negative spectrum. Note that $\alpha_1$ does not depend on $\kappa$.

Letting
\begin{align*}
 \alpha_c :=\min\big(\{\alpha_0, \alpha_1\}\cup\{\alpha_\kappa: \kappa\in \mathbb Z+1/2, \ \kappa^2\leqslant \nu^2 +1/4\}\big) >0
\end{align*}
and combining the last result with \eqref{R_1 positivity} and \eqref{algebraic doubling} we observe that \eqref{D is positive} holds for all $\kappa\in \mathbb Z+1/2$ provided $\alpha\in [0, \alpha_c)$.
\end{proof}

\end{document}